\documentclass{amsart}
\usepackage[
	backend=biber,
	style=alphabetic, 
	sorting=none,   
	sortcites=true,  
	backref=true,
	url=false,
	doi=false,
	isbn=false,
	eprint=false,
	maxnames=99,
	minnames=99]{biblatex}

\renewbibmacro{in:}{} 
\AtEveryBibitem{\clearfield{pages}} 

\DeclareFieldFormat{title}{\myhref{\mkbibemph{#1}}}
\DeclareFieldFormat[book,article,inbook,incollection,inproceedings,patent,thesis,unpublished]{title}{\myhref{#1\isdot}}

\newcommand{\myhref}[1]{%
	\ifboolexpr{%
		test {\ifhyperref}
		and
		not test {\iftoggle{bbx:eprint}}
		and
		not test {\iftoggle{bbx:url}}
	}
	{\href{\doiorurl}{#1}}
	{#1}%
}

\usepackage[dvipsnames]{xcolor} 
\definecolor{colorlinks}{rgb}{0.36, 0.2, 0.09}

\usepackage{hyperref}

\hypersetup{
	bookmarks=true,
	linktocpage=true,
	breaklinks=true,
	pdfstartview=FitH,
	plainpages=false,
	naturalnames=true,
	colorlinks=true,
	citecolor=colorlinks,
	filecolor=colorlinks,
	linkcolor=colorlinks,
	urlcolor=colorlinks
}

\usepackage[capitalize, noabbrev]{cleveref}
\crefname{subsection}{\S\!\!}{subsections}

\addtolength{\textwidth}{0in}
\addtolength{\textheight}{0in}
\calclayout

\makeatletter
\@namedef{subjclassname@2020}{%
	\textup{2020} Mathematics Subject Classification}
\makeatother

\let\oldtocsection=\tocsection
\let\oldtocsubsection=\tocsubsection

\renewcommand{\tocsection}[2]{\hspace{0em}\vspace{0.1em}\rule{0pt}{14pt}\oldtocsection{#1}{#2}\bf}
\renewcommand{\tocsubsection}[2]{\hspace{2em}\oldtocsubsection{#1}{#2}}


\usepackage{todo} 

\usepackage[T1]{fontenc}
\usepackage[lf]{Baskervaldx} 

\usepackage{latexsym}
\usepackage{amssymb}
\usepackage{amsthm}
\usepackage{amscd}
\usepackage{graphicx}
\usepackage[all]{xy}
\usepackage{a4wide}
\usepackage{multirow}
\usepackage{tikz}
\usepackage{tikz-cd}


\usepackage{lmodern} 
\usepackage{microtype} 

\usepackage{mathtools} 

\usepackage{enumerate} 
\usepackage{enumitem} 

\usepackage{tabularx}
\usepackage{booktabs} 

\hyphenation{co-chain}
\hyphenation{co-chains}
\hyphenation{co-al-ge-bra}
\hyphenation{co-al-ge-bras}
\hyphenation{co-bound-ary}
\hyphenation{co-bound-aries}
\hyphenation{Func-to-rial-i-ty}
\hyphenation{colim-it}
\hyphenation{di-men-sional}


\DeclareMathOperator{\sign}{sign}
\newcommand{\ot}{\otimes}

\newcommand{\Z}{\mathbb{Z}}
\newcommand{\sym}{\mathbb{S}}





\DeclarePairedDelimiter\bars{\lvert}{\rvert}

\DeclarePairedDelimiter\angles{\langle}{\rangle}
\DeclarePairedDelimiter\set{\{}{\}}

\renewcommand{\th}{\mathrm{th}}

\newcommand{\xra}[1]{\xrightarrow{#1}}
\newcommand{\defeq}{\coloneqq}
\newcommand{\defn}[1]{{\color{NavyBlue} \emph{#1}}}

\newcommand{\rA}{\mathrm{A}}

\newcommand{\rC}{\mathrm{C}}

\newcommand{\rL}{\mathrm{L}}
\newcommand{\rM}{\mathrm{M}}

\newcommand{\rO}{\mathrm{O}}

\newcommand{\rQ}{\mathrm{Q}}


\newcommand{\cC}{\mathcal{C}}

\newcommand{\cK}{\mathcal{K}}
\newcommand{\cL}{\mathcal{L}}

\newcommand{\cO}{\mathcal{O}}
\newcommand{\cP}{\mathcal{P}}

\newcommand{\cT}{\mathcal{T}}

\newcommand{\cZ}{\mathcal{Z}}





\definecolor{BleuTresFonce}{rgb}{0.215, 0.215, 0.36}


\newcommand{\Gerst}{\mathrm{Gerst}}

\newcommand{\Lie}{\mathrm{Lie}}

\newcommand{\Ass}{\mathrm{Ass}}

\renewcommand{\c}{\mathrm{c}}
\renewcommand{\b}{\mathrm{b}}
\newcommand{\m}{\mathrm{m}}
\newcommand{\Com}{\mathrm{Com}}
\newcommand{\R}{\mathbb{R}}
\newcommand{\Sh}{\mathrm{Sh}}
\newcommand{\ac}{{\scriptstyle \text{\rm !`}}}
\newcommand{\BV}{\mathrm{BV}}

\newcommand{\eBV}{\mathrm{eBV}}
\newcommand{\cBV}{\mathrm{cBV}}

\def\Sy{\mathbb{S}}

\def\KK{\mathbb{K}}

\DeclareMathOperator{\id}{id}
\newcommand{\N}{\mathbb{N}}
\newcommand{\dif}{\partial}
\newcommand{\q}{\mathrm{q}}

\DeclareMathOperator{\End}{End}

\DeclareMathOperator{\Cobar}{\Omega}

\newcommand{\g}{\ensuremath{\mathfrak{g}}}
\newcommand{\Hom}{\ensuremath{\mathrm{Hom}}}
\def\M{\mathcal{M}}
\renewcommand{\P}{\ensuremath{\mathcal{P}}}

\def\C{\mathcal{C}}



\theoremstyle{plain}
\newtheorem{theorem}{Theorem}
\newtheorem{proposition}[theorem]{Proposition}
\newtheorem{corollary}[theorem]{Corollary}
\newtheorem{lemma}[theorem]{Lemma}

\theoremstyle{definition}
\newtheorem{definition}[theorem]{Definition}

\newtheorem{example}[theorem]{Example}

\newtheoremstyle{remarkitalic}
{ }{ }                
{\normalfont}         
{ }                   
{\itshape}            
{.}                   
{ }                   
{}                   
\theoremstyle{remarkitalic}
\newtheorem*{remark}{Remark}

\input{Tikz_Graphs.tex} 

\addbibresource{aux/bibliography.bib} 
\addbibresource{aux/bibphysics.bib}


\title{Operadic Calculus for Higher Colour-Kinematics Duality}

\author{Anibal~M.~Medina-Mardones}
\address{Department of Mathematics, Western University, Ontario, Canada}
\email{\href{mailto:anibal.medina.mardones@uwo.ca}{anibal.medina.mardones@uwo.ca}}

\author{Bruno Vallette}
\address{Laboratoire Analyse, G\'eom\'etrie et Applications, Universit\'e Paris Nord 13, Sorbonne Paris Cit\'e, CNRS, UMR 7539, 93430 Villetaneuse, France.}
\email{\href{vallette@math.univ-paris13.fr}{vallette@math.univ-paris13.fr}}

\subjclass[2020]{18M70, 70S15, 18N40, 81T70}

\keywords{Operads, gauge theories, Koszul duality, colour-kinematics duality, homotopy algebras, Yang--Mills theory, Batalin--Vilkovisky algebras.}

\date{\today}

\begin{document}
	\begin{abstract}
	The search for algebraic foundations of colour-kinematics duality and the double-copy construction has brought into focus a generalization of Batalin--Vilkovisky algebras, referred to here as coexact BV-algebras and as \(\textrm{BV}^\square\)-algebras in other sources.
	While these structures capture the cubic sector, they fail to encode higher-valence phenomena, for which a homotopy-theoretic extension becomes necessary.
	This work introduces a conceptual notion of homotopy coexact BV-algebra, defined through the homotopical interplay of commutative and BV structures, and provides a concrete model in terms of generators and relations, obtained through an extension the theory of Koszul duality for operads. The resulting framework enables the systematic use of homotopical tools---\(\infty\)-morphisms, homotopy transfer, rectification, and deformation theory---and naturally accommodates the quartic-level structures recently identified in Yang--Mills theory.
\end{abstract}
	\maketitle
	\tableofcontents

\newpage
\section{Introduction}

This work is motivated by gauge field theory, particularly by \textit{colour--kinematics duality} and the \textit{double copy construction}.
From the algebro-homotopical perspective on perturbative field theory, surveyed for example in \cite{Hohm2017LInfinityAlgebrasGauge, Jurco2020PerturbativeQFTandHomotopyAlgebras, Jurco2019LInfinityBVReview}, a classical field theory is encoded as a cyclic \(\rL_\infty\)-algebra \(\big(\cL, \set{\ell_n}, \langle\cdot, \cdot \rangle \big)\)---a model for homotopy Lie algebras with a compatible non-degenerate inner product---whose Maurer--Cartan functional
\[
S(a) \defeq \sum_{n \geqslant 1} \frac{1}{(n+1)!}\,\big\langle a, \ell_n(a^{\otimes n})\big\rangle
\]
defines the action.
For many gauge theories of interest, including Chern--Simons and Yang--Mills, their \(\rL_\infty\)-algebra splits as a tensor product
\[
\cL \cong \cK \ot \g,
\]
where \(\g\) is a Lie algebra, the \textit{colour Lie algebra}, and \(\cK\) is a homotopy commutative algebra, the \textit{kinematic algebra}.
This \emph{colour-splitting} provides the algebraic framework to understand the duality between colour and kinematics, a phenomenon first observed by Bern--Carrasco--Johansson while studying amplitudes in Yang--Mills theory \cite{Bern2008NewRelationsGaugeTheory}.
There it was found that kinematic numerators obey the same antisymmetry and Jacobi relations as the colour factors, suggesting that the kinematic algebra \(\cK\) carries a Lie-type structure \cite{Monteiro2011KinematicAlgebraSelfDual}.

For instance, in Chern--Simons theory one has \(\cK = (\Omega(M), d, \wedge)\), the algebra of differential forms on an oriented 3-manifold.
Equipping \(M\) with a metric introduces the Hodge codifferential \(d^\star\), and the commutator \([d^\star, \wedge]\) endows \(\cK\) with a shifted Lie bracket giving rise to the kinematic Lie algebra of Chern--Simons theory \cite{BenShahar2022OffShellCKDualityCS}.

In this example, \(\cL\) is a dg Lie algebra, i.e., an \(\rL_\infty\)-algebra in which all operations beyond arity 2 vanish, reflecting the fact that Chern--Simons theory involves only cubic interactions.
The algebraic structure on the kinematic algebra \((\cK, d, \wedge, d^\star)\) is itself strict and motivates the notion of \emph{\(\cBV\)-algebra}, a weaker version of dg Batanin--Vilkovisky algebra \cite{Lian1993NewPerspectivesBRST, Getzler1994BVAlgebrasTQFT}, or \(\BV\)-algebra for short.
More precisely, a \(\cBV\)-algebra is a tuple \((A, d_A, m, \triangle)\) where \((A, d_A, m)\) is a dg commutative algebra and \(\triangle\) is a nilpotent operator of degree \(|\triangle| = -|d_A|\) and order at most 2.

\(\BV\)-algebras are special cases of \(\cBV\)-algebras, realized precisely when the \emph{obstruction} \([d_A, \triangle]\) vanishes.
In this case, the shifted Lie bracket \(\b = [\triangle, \m]\) makes \((A, d_A, \m, \b)\) into a dg Gerstenhaber algebra.
For differential forms, the obstruction \([d, d^\star]\) is precisely the failure of the de Rham differential and the Hodge codifferential to commute, yielding the Laplacian in Riemannian signature and the d'Alembertian in Lorentzian.

The concept of \(\cBV\)-algebra, or \textit{coexact \(\BV\)-algebra}, is dual---in a sense explained in \cref{ss:exact}---to the notion of \textit{exact \(\BV\)-algebra} arising in Poisson geometry \cite{DotsenkoShadrinVallette15, GuanMuro23}.
In the physics literature, \(\cBV\)-algebras are referred to as \(\BV^\square\)-algebras, a term coined by M.~Reiterer in his influential preprint \cite{Reiterer2020HomotopyBVYMCK}.
These algebras have been employed to define colour--kinematics duality for theories with at most cubic interactions and to construct double copies of these \cite{Borsten2021DoubleCopyHomotopyAlgebras, Borsten2023KinematicLieAlgebrasTwistor, Borsten2023DoubleCopySDYM, Borsten2023TreeLevelCKPureSpinor, Bonezzi2024DoubleCopy3DCSKodairaSpencer, Borsten2025DoubleCopyFromTensorBV, BenShahar2025OffShellDoubleCopyBV}.
To move beyond the cubic case and incorporate higher-order interactions, as for instance up to quartic order in \cite{Bonezzi2022GaugeStructureDoubleField, Bonezzi2023GaugeInvariantDoubleCopyQuartic, Bonezzi2024WeaklyConstrainedDoubleField, Bonezzi2023GaugeIndependentKinematicSDYM}, M.~Reiterer also pioneered the use of a homotopy version of \(\BV^\square\)-algebras, which he introduced through an ingenious albeit ad hoc definition.
The goal of the present paper is to provide a conceptual definition of homotopy \(\cBV\)-algebras and a concrete model for them, which we ground in the operadic calculus of \cite{LodayVallette12}.
As reviewed in \cref{ss:tools}, this enables the systematic application of key homotopical techniques to the resulting \(\cBV_{\!\infty}\)-algebras, including their homotopy transfer, rectification, \(\infty\)-morphisms, and deformation theory.

To facilitate the use of \(\cBV_{\!\infty}\)-algebras and the above techniques in physics, we give explicit descriptions of the generating operations and relations defining \(\cBV_{\!\infty}\)-algebras, relate them to \(\rC_\infty\)- and \(\BV_{\!\infty}\)-algebras, and show explicitly how the quartic-level structure constructed in \cite{Bonezzi2023GaugeInvariantDoubleCopyQuartic, Bonezzi2024WeaklyConstrainedDoubleField} on the kinematic algebra of Yang--Mills theory fits naturally into our framework.

\medskip

Conceptually, it is useful to situate \(\cBV_{\!\infty}\)-algebras in a more abstract context.
Within the relevant model category \cite{Hinich97}, the operad \(\cBV_{\!\infty}\) cofibrantly resolves the operad \(\cBV\), with this resolution fitting in the following commutative diagram, where the horizontal compositions are the canonical inclusions:
\begin{equation}\label{eq:diagram_main}
	\begin{tikzcd}[row sep=large, column sep=large]
		\Com_\infty \dar[->>, "\sim"] \arrow[r, >->, "\sim"] &
		\cBV_{\!\infty} \dar[->>, "\sim"] \arrow[r,->>] &
		\BV_{\!\infty} \dar[->>, "\sim"] \\
		\Com \arrow[r, "\sim"] &
		\cBV \arrow[r,->>] &
		\BV
	\end{tikzcd}
\end{equation}
Here \(\rightarrowtail\) denotes a cofibration, \(\twoheadrightarrow\) a fibration, and \(\xra{\sim}\) a weak equivalence.
The operads
\begin{equation*}\label{eq:cofibrant_operads_intro}
	\Com_\infty \defeq \Cobar\Com^\ac,
	\qquad
	\BV_{\!\infty} \defeq \Cobar\BV^\ac,
	\qquad
	\cBV_{\!\infty} \defeq \Cobar\cBV^\ac,
\end{equation*}
arise from applying the cobar construction to the Koszul dual cooperad of the operad governing the corresponding strict algebras.

\medskip\noindent
Two model-independent consequences of Diagram \eqref{eq:diagram_main} are:
\begin{enumerate}
	\item Any homotopy commutative algebra extends, up to homotopy, to a homotopy \(\cBV\)-algebra.
	\item A homotopy \(\cBV\)-algebra descends, up to homotopy, to a homotopy \(\BV\)-algebra if and only if its restriction to the homotopy fibre of the inclusion of \(\Com\) into \(\BV\) is nullhomotopic.
\end{enumerate}
In addition we have the following model-dependent algebraic counterparts:
\begin{enumerate}
	\item[(a1)] Any \(\rC_\infty\)-algebra defines a \(\cBV_{\!\infty}\)-algebra after arbitrarily choosing certain \textit{extra generating maps} described in \cref{ss:generating_maps}.
	\item[(a2)] A \(\cBV_{\!\infty}\)-algebra defines a \(\BV_{\!\infty}\)-algebra if and only if certain explicitly defined \textit{obstruction maps}, described in \cref{ss:obstruction_maps}, vanish.
\end{enumerate}
Moreover, the sets of extra generating maps and of obstruction maps are in a canonical bijection.

The picture that emerges in practice is that of a controlled extension process.
The kinematic data, presented as a \(\rC_\infty\)-algebra may always be promoted to a \(\cBV_{\!\infty}\)-algebra, but the promotion is not necessarily canonical: it depends on a choice of extra generating maps.
Each such new operation gives rise to an obstruction map, all of which vanish precisely when the resulting structure is that of a \(\BV_{\!\infty}\)-algebra.
In applications, the additional generating maps are tuned so as to control the obstructions in a physically meaningful way.

\medskip

An important use of the duality between colour and kinematics, which was pioneered in \cite{Bern2010PerturbativeQGDoubleCopy} at the scattering amplitude level, is the \textit{double copy construction}, in which the colour Lie algebra is formally replaced by the kinematic Lie algebra.
At cubic order, this construction has been understood off-shell using the tensor product of \(\cBV\)-algebras \cite{Borsten2025DoubleCopyFromTensorBV}.
For this tensor product to exist---like in the associative case and unlike the Lie case---it is necessary and sufficient that the operad \(\cBV\) admits a \emph{diagonal}, that is a morphism of operads
\[
\Delta \colon \cBV \to \cBV \ot_H \cBV,
\]
where \(\ot_H\) denotes the Hadamard (or aritywise) tensor product of operads.
We describe a canonical diagonal for \(\cBV\) in \cref{ss:hopf} recovering the tensor product defined in \cite{Borsten2025DoubleCopyFromTensorBV}.

Since \(\cBV_{\!\infty}\) is a cofibrant resolution of \(\cBV\), the lifting property in the model category of operads ensures the existence of a diagonal for \(\cBV_{\!\infty}\):
\[
\begin{tikzcd}
	&[-5pt] &[-10pt] \cBV_{\!\infty} \otimes_H \cBV_{\!\infty} \hspace*{3pt} \ar[d, ->> ,"\sim", shift right=5pt]\\
	\cBV_{\!\infty} \ar[r, ->>, "\sim \ "] \arrow[urr, dashed, out=45, in=180] & \cBV \ar[r, "\Delta"] & \cBV \otimes_H \cBV.
\end{tikzcd}
\]
Consequently, one obtains the existence of a universal formula for the tensor product of \(\cBV_{\!\infty}\)-algebras, which in principle allows for a formulation of the double copy extending beyond cubic interactions.
The construction, however, depends on a specific choice of diagonal, a task that falls outside the scope of this work.

\medskip

Mathematically, the central technical contribution of this work is the extension of Koszul duality to quadratic-linear presentations with non-trivial differentials.
This refinement allows the operad \(\cBV\) to be treated within the established framework of operadic calculus and, in particular, enables the explicit construction of its Koszul replacement.
As explained in \cref{ss:reiterer}, the resulting notion of \(\cBV_{\!\infty}\)-algebras, a model of homotopy \(\cBV\)-algebras, is more general and better behaved than the \(\BV^\square_\infty\)-algebras introduced in \cite{Reiterer2020HomotopyBVYMCK}, which we are unsure if defines a model for homotopy.

\medskip\noindent\textbf{Outline.}
In \cref{sec:algebras} we work at the level of algebras, avoiding the operadic constructions underlying our definitions.
We describe \(\cBV_{\!\infty}\)-algebras in terms of generating maps and relations, and clarify their relationships with \(\rC_\infty\)- and \(\BV_{\!\infty}\)-algebras.
We then introduce a weight filtration that organizes the hierarchy of homotopies and yields stricter forms of \(\cBV_{\!\infty}\)-algebras.
Finally, we illustrate this algebraic structure using the kinematic algebra of Yang--Mills theory as developed in \cite{Bonezzi2023GaugeInvariantDoubleCopyQuartic,Bonezzi2024WeaklyConstrainedDoubleField}.

\cref{sec:operads} constructs Diagram~\eqref{eq:diagram_main} and develops the homotopical theory of algebras over the operad \(\cBV_{\!\infty}\).
We begin in \cref{ss:strict_operads} by reviewing the operads \(\Com\), \(\BV\), and \(\cBV\) governing the strict algebraic structures of interest.
In \cref{ss:exact}, an independent subsection, we compare \(\cBV\) with the operad \(\eBV\) of exact \(\BV\)-algebras.
Section~\cref{ss:koszul_duals} extends Koszul duality to differential graded inhomogeneous presentations, and \cref{ss:cbv} applies this framework to \(\cBV\) to obtain an explicit model for its Koszul replacement \(\cBV_{\!\infty}\).
In \cref{ss:hopf} we show that tensor products of \(\cBV_{\!\infty}\)-algebras are well defined up to homotopy.
Section~\cref{ss:homotopy_fibre} studies the model-categorical relationship between \(\Com_\infty\), \(\cBV_{\!\infty}\), and \(\BV_{\!\infty}\).
In \cref{ss:homotopy_algebras} we prove that algebras over \(\cBV_{\!\infty}\) agree with the explicitly presented \(\cBV_{\!\infty}\)-algebras of \cref{sec:algebras}.
Finally, \cref{ss:tools} describes the deformation, obstruction, and homotopy theories of \(\cBV_{\!\infty}\)-algebras from the operadic perspective.

In \cref{ss:bv_bd} we record an interesting observation which, although deferred to a separate work for further development, is included in this appendix due to its close technical parallels with the constructions presented here.
It states that algebras over the Koszul dual cooperad of the Batalin--Vilkovisky operad are Belinson--Drinfeld algebras

\medskip\noindent\textbf{Outlook.}
Our focus in this paper is on the non-cyclic aspects of homotopy \(\cBV\)-algebras.
By contrast, the \(\rL_\infty\)-algebras arising in field theory typically carry a cyclic structure.
Accordingly, in examples of colour-splitting the associated kinematic \(\rC_\infty\)-structures are also cyclic.
Extending the present framework to cyclic \(\cBV_{\!\infty}\)-algebras, as well as to their quantum counterparts, constitutes an important direction for future research.

\medskip\noindent\textbf{Acknowledgements}
We would like to thank
Roberto Bonezzi,
Christoph Chiaffrino,
Felipe D\'iaz-Jaramillo,
Owen Gwilliam,
Olaf Hohm,
Branislav Jur\v{c}o,
Victor Roca i Lucio,
and Dennis Sullivan
for interesting discussions which led to the present paper.
We also thank the IH\'ES for providing excellent working conditions during the development of this project.

\section{A model for homotopy \(\cBV\)-algebras}\label{sec:algebras}

In this section we introduce \(\cBV_{\!\infty}\)-algebras in concrete terms.
We begin by recalling the conventions underlying our constructions and review \(\Com\)-algebras, \(\cBV\)-algebras, and \(\BV\)-algebras.
We then introduce \(\cBV_{\!\infty}\)-algebras via generating maps, show how every \(\rC_\infty\)-algebra extends canonically to such a structure, and define obstruction maps measuring the failure of a \(\cBV_{\!\infty}\)-algebra to descend to \(\BV_{\!\infty}\)-algebra.
A weight filtration is introduced to organize the hierarchy of homotopies present in \(\cBV_{\!\infty}\)-algebras and define \(\cBV_{\!(n)}\)-algebras for \(n \in \N\).
Finally, we illustrate the utility of this framework with the kinematic algebra of Yang--Mills theory \cite{Bonezzi2023GaugeInvariantDoubleCopyQuartic, Bonezzi2024WeaklyConstrainedDoubleField}.

\subsection{Conventions}

\subsubsection{The underlying category}\label{ss:dgVec}

We work throughout in the closed symmetric monoidal category of cochain complexes over a field \(\KK\) of characteristic \(0\).
We spell this out for the convenience of the reader.

\medskip\noindent
A \textit{cochain complex} is a pair \((V, d)\) consisting of a graded vector space \(V = \bigoplus_{n \in \Z} V^n\) and a \textit{differential}, that is, a linear map \(d \colon V \to V\) such that \(d(V^n) \subseteq V^{n+1}\) for all \(n \in \Z\) and \(d \circ d = 0\).

\medskip\noindent
Saying that the category is \emph{monoidal} means that it is equipped with a product and a unit object satisfying the usual associativity and unitality constraints.
In this case, we have the tensor product
\[
(V \otimes W)^n \;=\; \bigoplus_{p+q=n} V^p \otimes_\KK W^q,
\]
with differential
\[
d_{V \otimes W}(v \otimes w) \;\defeq\; d(v) \otimes w \;+\; (-1)^{|v|} v \otimes d_W(w),
\]
and the unit object being \(\KK\) concentrated in degree \(0\) with trivial differential.

\medskip\noindent
Saying that the category is \emph{closed} means that we have an \textit{internal Hom}: for any cochain complex \(V,W\), the collection of linear maps from \(V\) to \(W\) itself forms a cochain complex
\[
\Hom(V,W) \;=\; \bigoplus_{k \in \Z} \Hom^k(V,W),
\]
where \(\Hom^k(V,W)\) consists of all linear maps \(f \colon V \to W\) of \textit{degree \(k\)}, i.e.\ satisfying \(f(V^n) \subseteq W^{n+k}\) for all \(n\) in \(\Z\).
The differential on \(\Hom(V,W)\) is given by
\[
d_{\Hom}(f) \;\defeq\; d_W \circ f \;-\; (-1)^{|f|} f \circ d .
\]
This internal Hom gives rise to a canonical \textit{dg adjunction}:
\[
\Hom(U \otimes V, W) \;\cong\; \Hom\big(U,\, \Hom(V,W)\big)
\]
for any \(U,V,W\).

\medskip\noindent
Saying that this category is \emph{symmetric} asserts the existence of the isomorphism
\begin{equation}\label{eq:symmetry_isomorphism}
	\begin{tikzcd}[row sep=0, column sep=small]
		V \otimes W \arrow[r, "\tau_{V,W}", shift left=0.5ex] & W \otimes V \\
		v \otimes w \arrow[mapsto, r] & (-1)^{|v||w|}\, w \otimes v,
	\end{tikzcd}
\end{equation}
for all homogeneous elements \(v \in V\) and \(w \in W\).

\subsubsection{Symmetric group action}

The \textit{symmetry isomorphism} \eqref{eq:symmetry_isomorphism} defines a left action of the symmetric group \(\Sy_n\) on \(V^{\otimes n}\) by
\[
\sigma (v_1 \otimes \dotsb \otimes v_n)
\defeq
\pm\, v_{\sigma^{-1}(1)} \otimes \dotsb \otimes v_{\sigma^{-1}(n)},
\]
for all \(\sigma \in \Sy_n\) and homogeneous \(v_i \in V\),
where the sign is determined by the \emph{Koszul convention}.

This induces a natural right \(\Sy_n\)-action on \(\Hom(V^{\otimes n},V)\) given by precomposition.
Explicitly, for \(f \in \Hom(V^{\otimes n},V)\) and \(\sigma \in \Sy_n\)
\[
(f \circ \sigma^{-1})(v_1 \otimes \dotsb \otimes v_n)
\defeq
\pm\, f \big(v_{\sigma(1)} \otimes \dotsb \otimes v_{\sigma(n)}\big).
\]

\subsubsection{Partial compositions and the insertion bracket}

Given \(f \in \Hom(V^{\ot n}, V)\), \(g \in \Hom(V^{\ot m}, V)\), and \(1 \leqslant i \leqslant n\), the \defn{partial composition}
\[
f \circ_i g \;\in\; \Hom\!\big(V^{\ot (n+m-1)}, V\big)
\]
is defined on homogeneous inputs \(v_1 \ot\dotsb\ot v_{n+m-1}\) by
\[
\begin{split}
	(-1)^{|g| \cdot (|v_1| + \dotsb + |v_{i-1}|)}
	f\big(v_1 \ot\dotsb\ot v_{i-1} \ot g(v_i \ot\dotsb\ot v_{i+m-1})
	\ot v_{i+m} \ot\dotsb\ot v_{n+m-1}\big).
\end{split}
\]

\medskip\noindent
From these partial compositions we define the \defn{pre-Lie product}
\begin{equation}\label{eq:preLie_product}
	f \star g \;\defeq\; \sum_{i=1}^{n} f \circ_i g ,
\end{equation}
and its antisymmetrization, the \defn{insertion bracket},
\[
[f,g] \;\defeq\; f \star g \;-\; (-1)^{|f||g|}\, g \star f.
\]
This bracket makes
\[
\End_V \;\defeq\; \prod_{n \geqslant 1} \Hom(V^{\ot n},V)
\]
into a \emph{dg Lie algebra} extending the (graded) commutator Lie algebra on \(\Hom(V,V)\).

\subsection{Strict algebras}

A differential graded commutative algebra, or more compactly a \defn{\(\Com\)-algebra}, is a tuple \((A,d,m)\), where \((A,d)\) is a cochain complex and \(m \colon A \ot A \to A\) is a (graded) symmetric bilinear map of degree \(0\), i.e. \(m \circ \big(\id - (12)\big) = 0\), such that:
\[
m \circ_1 m - m \circ_2 m = 0, \qquad
[d, m] = 0.
\]
The first identity states the associativity of \(m\) and the second that \(d\) is a \textit{derivation} of \(m\).

\begin{definition}\label{def:cbv_algebra}
	A \defn{\(\cBV\)-algebra} (\emph{coexact Batalin--Vilkovisky algebra}) is a tuple \((A,d,m,\triangle)\), where \((A,d,m)\) is a \(\Com\)-algebra and \(\triangle \in \Hom(A, A)\) is a degree \(-1\) map satisfying \(\triangle \circ \triangle = 0\) and the \textit{second-order relation}:
	\begin{equation}\label{eq:second-order}
		\begin{split}
			\triangle(abc)
			&= \triangle(ab)c
			+ (-1)^{|a|}a\,\triangle(bc)
			+ (-1)^{(|a|+1)|b|}b\,\triangle(ac) \\[4pt]
			&\quad
			- \triangle(a)bc
			- (-1)^{|a|}a\,\triangle(b)c
			- (-1)^{|a|+|b|}ab\,\triangle(c),
		\end{split}
	\end{equation}
	where \(m(a \otimes b)\) is denoted \(ab\).
\end{definition}

\medskip\noindent A (differential graded) \defn{\(\BV\)-algebra}, short for (dg) \textit{Batalin--Vilkovisky algebra}, is a \(\cBV\)-algebra \((A, d, m, \triangle)\) where the \defn{obstruction}
\[
n \defeq [d, \triangle]
\]
is identically \(0\).

\begin{remark}
	Both \(\cBV\) and \(\BV\) algebras may also be defined in the absence of a differential, in which case the two notions coincide.
	Throughout, we restrict attention to their differential graded versions.
\end{remark}

\begin{example}
	For a smooth manifold \(M\), the \(\Com\)-algebra \((\Omega(M), d, \wedge)\) of differential forms extends to a \(\cBV\)-algebra once a non-degenerate metric is chosen, with the Hodge codifferential \(d^\star\) playing the role of \(\triangle\).
	In the Lorentzian case, the obstruction \([d, d^\star]\) is the d'Alembertian, typically denoted \(\square\).
	This is the origin of the alternative terminology \(\BV^\square\)-algebras for \(\cBV\)-algebras \cite{Reiterer2020HomotopyBVYMCK}.
\end{example}

In a \(\cBV\)-algebra, the failure of \(\triangle\) to be a derivation of the product \(\m\) defines a degree \(-1\) symmetric bracket \(b \defeq [\triangle,\m]\) satisfying the \defn{Jacobi relation}
\[
(b \,\circ_1 b) \circ \big(\id + (123) + (132)\big) = 0
\]
and the \defn{Leibniz relation}
\[
b \,\circ_1 m = m \,\circ_1 b + (m \,\circ_2 b) \circ (213).
\]
The latter is equivalent to the second-order relation~\eqref{eq:second-order}.
Thus \((A,d,\m,b)\) is almost a \textit{Gerstenhaber algebra} except that \([d,b] = [d,[\triangle,\m]] = [[d,\triangle],\m] = [n,\m]\) need not vanish if the obstruction \(n \neq 0\).

\subsection{\(\cBV_{\!\infty}\)-algebras}\label{ss:generating_maps}

Recall that a \defn{\(\rC_\infty\)-algebra} is a graded vector space \(A\) together with maps
\[
m_n \colon A^{\ot n} \to A, \qquad n \geqslant 1,
\]
of degree \(2-n\) with the following symmetries holding for every \(j\):
\[
\sum_{\mathclap{\sigma \in \Sh(j,\, n - j)}} \ \sign(\sigma)\,
m_n \! \circ \sigma = 0.
\]
They satisfy the relation:
\[
\sum_{r+s+t = n} (-1)^{r + st}\, m_{r+1+t} \circ (\id^{\ot r} \ot \, m_s \ot \id^{\ot t}) = 0
\]
for any \(n \geqslant 1\).
If \(m_k = 0\) for all \(k > n+1\) we say that this is a \defn{\(\rC_{(n)}\)-algebras}.

\medskip

Let us fix a graded vector space \(A\).
A set of \defn{generating maps} on \(A\) consists of a collection of linear maps
\begin{align*}
	m_{p_1,\dots,p_k}^t &\colon
	A^{\ot p_1} \ot\dotsb\ot A^{\ot p_k}
	\longrightarrow A,
\end{align*}
for each \(	t \geqslant 0, \ k \geqslant 1, \ p_1, \dots, p_k \geqslant 1\)\,, of degree
\[
\bars{m_{p_1, \dots, p_k}^t} = 3 - 2t - p_1 - \cdots - p_k - k,
\]
with the following \defn{block} and \defn{shuffle symmetries}:

\medskip\noindent(1)
For any permutation \(\sigma \in \sym_k\):
\[
m_{p_{\sigma^{-1}(1)},\dots,p_{\sigma^{-1}(k)}}^t \!= m_{p_1, \dots, p_k}^t \! \circ \overline\sigma,
\]
where \(\overline\sigma\) is the image of \(\sigma\) via the block inclusion \(\sym_k \to \sym_{p_1+\dots+p_k}\).

\medskip\noindent(2)
For any \(i \in \set{1,\dots,k}\) and \(j \in \set{1, \dots, p_i-1}\):
\[
\sum_{\mathclap{\ \sigma \in \Sh(j,\, p_i - j)}} \ \sign(\sigma) \;
m_{p_1,\dots,p_k}^t \! \circ \underline\sigma = 0,
\]
where \(\underline\sigma\) is the image of \(\sigma\) in via the inclusion \(\sym_{p_i} \to \sym_{p_1+\dots+p_k}\) induced by the \(i^\th\)-block.

\medskip In \cref{label} we will prove that the following definition coincides with that of an algebra over a cofibrant resolution of the operad controlling \(\cBV\)-algebras.

\begin{definition}\label{def:cBV-algebra}
	A \defn{\(\cBV_{\!\infty}\)-algebra} is a graded vector space \(A\) together with a set of generating maps \(\set{m_{p_1, \dots, p_k}^t}\) such that \((A, m^0_1, m^0_2, m^0_3, \dots )\) is a \(\rC_\infty\)-algebra.
\end{definition}

\noindent A straightforward consequence of this characterization is the following.

\begin{theorem}\label{t:extension from C_infty}
	Any \(\rC_\infty\)-algebra \(\big(A, m_1, m_2, m_3, \dots\big)\) can be extended to a \(\cBV_{\!\infty}\)-algebra by completing \(\set{m^0_n \defeq m_n}_{n \geqslant 1}\) to a set of generating maps.
\end{theorem}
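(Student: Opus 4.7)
The plan is to observe that \cref{def:cBV-algebra} imposes only two conditions on a \(\cBV_{\!\infty}\)-structure extending a given family \(\{m^0_n\}_{n\geqslant 1}\): the full collection of maps \(\{m_{p_1,\dots,p_k}^t\}\) must have the prescribed degrees and satisfy the block and shuffle symmetries of \cref{ss:generating_maps}, and the subfamily \((A, m^0_1, m^0_2, m^0_3, \dots)\) must form a \(\rC_\infty\)-algebra. The theorem therefore reduces to producing any completion of the given \(\{m^0_n\}\) to such a set of generating maps.

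The minimal completion is obtained by setting
\[
m^0_n \defeq m_n \quad \text{for all } n \geqslant 1, \qquad m^t_{p_1,\dots,p_k} \defeq 0 \quad \text{whenever } t \geqslant 1 \text{ or } k \geqslant 2.
\]
For \((t,k)=(0,1)\), the degree formula \(3-2t-p_1-\dots-p_k-k\) specializes to \(2-n\), matching the degree of \(m_n\); the block symmetry is vacuous since \(k=1\); and the shuffle symmetry is exactly the usual \(\rC_\infty\)-shuffle relation, which holds by hypothesis. For all remaining indices the zero map trivially satisfies both the block and shuffle symmetries. Finally, by construction \((A, m^0_1, m^0_2, m^0_3, \dots)\) is the original \(\rC_\infty\)-algebra, so the remaining condition of \cref{def:cBV-algebra} is automatic.

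There is no real obstacle in this argument; the substantive content of the statement lies in the definition of a \(\cBV_{\!\infty}\)-algebra itself, which imposes no additional relation coupling the higher generating maps to the \(\rC_\infty\)-data beyond the block and shuffle symmetries. The same reasoning gives the stronger assertion alluded to in consequence (a1) of the introduction: any choice of maps \(\{m^t_{p_1,\dots,p_k}\}\) of the correct degree satisfying the block and shuffle symmetries defines a \(\cBV_{\!\infty}\)-algebra extending the given \(\rC_\infty\)-algebra, and the zero extension constructed above is simply the canonical representative of this affine family.
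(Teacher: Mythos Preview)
Your proof is correct and matches the paper's treatment exactly: the paper states the theorem as ``a straightforward consequence'' of \cref{def:cBV-algebra} and immediately notes that ``a canonical, although not very interesting choice, is to set all the additional maps to \(0\).'' You have simply spelled out this observation in full.
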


\noindent A canonical, although not very interesting choice, is to set all the additional maps to \(0\).

\subsection{Obstruction maps and \(\BV_{\!\infty}\)-algebras}\label{ss:obstruction_maps}

\subsubsection{Straight shuffles}

To understand explicitly the relationship between \(\cBV_{\!\infty}\)-algebras and \(\BV_{\!\infty}\)-algebras, we will need the combinatorial notion of \textit{straight shuffle}, which we illustrate with the following example:
\begin{equation}\label{ex:straight_shuffle}
	1\;\underline{2}\;3\;4\;|\;\underline{5\;6}\;7\;8\;|\;9\;\underline{10\;11}\;|\;\underline{12\;13}
	\ \longrightarrow\
	9\;1\;\;\underline{2\;5\;6\;10\;11\;12\;13}\;\;7\;3\;8\;4.
\end{equation}

\medskip\noindent Forgetting all decoration in \eqref{ex:straight_shuffle}, we have the permutation
\[
\sigma =
\left[
\begin{array}{ccccccccccccc}
	1&2&3&4&5&6&7&8&9&10&11&12&13\\
	2&3&11&13&4&5&10&12&1&6&7&8&9
\end{array}
\right]
\in \sym_{13}.
\]
The segmentation induced by the vertical bars in \eqref{ex:straight_shuffle} is encoded in the \(k\)-tuple
\[
\bar p \defeq (4,4,3,2)
\]
with \(k = 4\).
The subdivision of each segment of \eqref{ex:straight_shuffle} induced by the underline is encoded in the three \(k\)-tuples
\[
\bar l = (1,0,1,0), \quad
\bar q = (1,2,2,2), \quad
\bar r = (2,2,0,0).
\]
By definition, the permutation \(\sigma\) is a \defn{ \((\bar l, \bar q, \bar r, \bar p)\)-shuffle} if for all \(1 \leqslant i \leqslant k\),
\[
(p_1+\dots+p_{i-1} + l_i) + [1, q_i ] \xmapsto{\sigma}
(l_1 + \dotsb + l_k) + (q_1 + \dotsb + q_{i-1}) + [1, q_i].
\]
Explicitly,
\[
2 \xmapsto{\sigma} 3, \quad
[5, 6] \xmapsto{\sigma} [4, 5], \quad
[10, 11] \xmapsto{\sigma} [6, 7], \quad
[12, 13] \xmapsto{\sigma} [8, 9].
\]

\medskip

We will be interested in the set \(\mathrm{StSh}(\bar q, \bar p)\) of all straight shuffles for a fixed pair
\[
\bar q = (q_1, \dots, q_k) \leqslant (p_1, \dots, p_k) = \bar p.
\]

\subsubsection{Straight shuffle extensions}

Let us now introduce some notation for the action of straight shuffles on tensor product elements of a graded vector space.
Using the straight shuffle \(\sigma\) from example \eqref{ex:straight_shuffle}, we have that if \(x = x_1 \ot\dotsb\ot x_{13}\) then
\[
\sigma x =
\pm\, \underbrace{x_9 \ot x_1}_{\sigma x^{(1)}} \ot
\underbrace{x_2 \ot x_5 \ot x_6 \ot x_{10} \ot x_{11} \ot x_{12} \ot x_{13}}_{\sigma x^{(2)}} \ot
\underbrace{x_7 \ot x_3 \ot x_8 \ot x_4}_{\sigma x^{(3)}},
\]
where the sign is given by the permutation of elements under the Koszul sign rule.

\medskip

The \defn{straight shuffle extension} of a map
\[
m_{q_1, \dots, q_k} \colon A^{\ot q_1} \ot\dotsb\ot A^{\ot q_k} \to A
\]
to a map
\[
m_{\frac{p_1,\dots,p_k}{q_1,\dots,q_k}} \colon {A^{\ot p_1}} \ot\dotsb\ot {A^{\ot p_k}} \to A,
\]
where \((q_1, \dots, q_k) \leqslant (p_1, \dots, p_k)\), is defined by
\[
m_{\frac{p_1,\dots,p_k}{q_1,\dots,q_k}}(x)
=
\sum_{\quad \mathclap{\sigma \in \mathrm{StSh}(\bar q, \bar p)}}
\, \pm \,
\sigma x^{(1)} \ot
m_{q_1,\dots,q_k}\big( \sigma x^{(2)} \big) \ot \sigma x^{(3)}.
\]
The sign is given by the product of the following four contributions:
the sign of \(\sigma\),
the Koszul sign given by permutation of the elements, as referenced above,
the Koszul sign arising from the permutation of \(\sigma x^{(1)}\) and \(m_{q_1,\dots,q_k}\),
and the explicit factor:
\[
(-1)^{(q_1+\cdots+q_r+1)(r_1+\cdots+r_k) + (l_2+r_2) + 2(l_3+r_3) + \dotsb + (k-1)(l_k+r_k)}
\]
expressed under the assumption that \(\sigma\) is a \((\bar l, \bar q, \bar r, \bar p)\)-shuffle.

\subsubsection{Obstruction maps}\label{sssec:Obstructions}

For each \(t \geqslant 0, \ k \geqslant 1, \ p_1, \dots, p_k \geqslant 1\) with \(t + k \geqslant 2\), the \defn{obstruction map} \(n^t_{p_1, \dots, p_k}\) is defined on a basis element
\[
\underbrace{(x_1 \ot\dotsb\ot x_{p_1})}_{w_1}
\ot \underbrace{(x_{P_1 + 1} \ot\dotsb\ot x_{P_1 + p_2})}_{w_2}
\ot\dotsb\ot
\underbrace{(x_{P_{k-1} + 1} \ot\dotsb\ot x_{P_k})}_{w_k},
\]
where \(P_i = p_1 + \dots + p_{i-1}\), as
\begin{equation}\label{eq:relation_maps_1}
	\sum_{\mathclap{\begin{array}{c}
			\scriptstyle 0 \leqslant s \leqslant t
	\end{array}}}
	\hspace{25pt}
	\sum_{\hspace*{10pt}\mathclap{\begin{array}{l}
				\scriptstyle I \,\sqcup J = \{1, \dots, k\} \\[-2pt]
				\scriptstyle I = \{i_1, \dots, i_a\} \neq \emptyset \\[-2pt]
				\scriptstyle J = \{j_1, \dots, j_b\}
	\end{array}}}
	\hspace*{35pt} 
	\sum_{\hspace*{0pt}\mathclap{\begin{array}{c}
				\scriptstyle 1 \leqslant q_1 \leqslant p_1 \\[-2pt]
				\scalebox{0.5}{\(\vdots\)} \\[-2pt]
				\scriptstyle 1 \leqslant q_a \leqslant p_a
	\end{array}}}
	\hspace*{3pt}	
	\pm \ m^{s}_{r,\, p_{j_1}, \dots, p_{j_b}}
	\Big(m^{t-s}_{\frac{p_1, \dots, p_{i_a}}{q_1, \dots, q_a}}
	(w_{i_1} \!\ot\dotsb\ot w_{i_a}) \ot w_{j_1} \!\ot\dotsb\ot w_{j_b}\Big),
\end{equation}
where \(r = 1 + p_{i_1} + \dots + p_{i_a} - q_1 - \dots - q_a\)\,; minus the sum
\begin{equation}\label{eq:relation_maps_2}
	\sum_{1 \leqslant i \leqslant k}
	\hspace*{15pt}
	\sum_{\mathclap{1 \leqslant j \leqslant p_i-1}}
	\ (-1)^{p_1+\cdots+p_{i-1}+j-i} \ m^{t-1}_{p_1, \dots, j, p_i-j, \dots, p_k}
	\Big(w_1 \ot\dotsb\ot w_i^{(1)} \ot w_i^{(2)} \ot\dotsb\ot w_k\Big),
\end{equation}
where \(w_i^{(1)} = x_{P_i+1} \ot\dotsb\ot x_{P_i+j}\) and \(w_i^{(2)} = x_{P_i+j+1} \ot\dotsb\ot x_{P_{i+1}}\).

\medskip

The sign in \eqref{eq:relation_maps_1} is given by the product of the following three contributions:
the Koszul sign arising from the permutation of the elements,
the sign of the \((a,b)\)-shuffle corresponding to the partition \(I \sqcup J = \{1,\dots,k\}\)---computed by considering the blocks of sizes \(p_1-1,\dots,p_k-1\) and shuffling them according to the permutation determined by this partition (so, in particular, if all \(p_i=1\), there is no sign)---and the explicit factor
\[
(-1)^{(q_1+\dotsb+q_a - a + 1)\,(p_{j_1}+\cdots+p_{j_b} - b) + r - 1}.
\]

These obstruction maps \(n^t_{p_1, \dots, p_k}\) satisfy the same block and shuffle symmetries as the operations
\(m_{p_1,\dots,p_k}^t\) and the have degree
\[|n^t_{p_1, \dots, p_k}|=4 - 2t - p_1 - \cdots - p_k - k.\]

\subsubsection{\(\BV_{\!\infty}\)-algebras}\label{BVinfinity}

For more details on the notions of \(\BV_{\!\infty}\)-algebras, \(\Gerst_\infty\)-algebras, and shifted \(\rL_\infty\)-algebras, we refer to \cite{GCTV12}.

\begin{theorem}[{\cite[\S 2.3]{GCTV12}}]\label{t:obstruction vanishing}
	Let \(\big(A, \set{m_{p_1, \dots, p_k}^t}\big)\) be a \(\cBV_{\!\infty}\)-algebra.
	\begin{enumerate}
		\item \(\big(A, \set{m_{p_1, \dots, p_k}^t}\big)\) is a \(\BV_{\!\infty}\)-algebra if and only if every \(n^t_{p_1, \dots, p_k}\) vanishes.
		\item \(\big(A, \set{m^0_{p_1, \dots, p_k}}\big)\) is a \(\Gerst_\infty\)-algebra if and only if every \(n^0_{p_1, \dots, p_k}\) vanishes.
		\item \(\big(A, m^0_1, m^0_{1,1}, m^0_{1,1,1}, \dots\big)\) is a shifted \(\rL_\infty\)-algebra if and only if every \(n^0_{1, \dots, 1}\) vanishes.
	\end{enumerate}
\end{theorem}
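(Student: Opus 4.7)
The strategy is to exploit the operadic identification of $\cBV_{\!\infty}$-algebras (as established in \cref{ss:homotopy_algebras}) as algebras over the cobar construction $\Cobar(\cBV^{\ac})$, and then to compare the resulting structure equations with the analogous Koszul presentations of $\BV_{\!\infty}$, $\Gerst_\infty$, and shifted $\rL_\infty$ given in \cite{GCTV12}.

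The first step is to interpret each generating map $m^t_{p_1,\dots,p_k}$ as the evaluation, on a tensor of algebra elements, of the structure map applied to a basis element of the Koszul cooperad $\cBV^{\ac}$: the index $t$ records the number of applied BV-operators, $k$ the number of product factors, and $(p_1,\dots,p_k)$ the arities of those factors. Under this dictionary, the cobar differential evaluated at such a basis element yields a sum of nested partial compositions which, after transport to $\End_A$ and unpacking, splits cleanly into two groups. The first group gathers the terms responsible for the $\rC_\infty$-relations on $(A, m^0_1, m^0_2, \dots)$, which are already imposed by the definition of a $\cBV_{\!\infty}$-algebra. The second group collects exactly the terms appearing in \eqref{eq:relation_maps_1}--\eqref{eq:relation_maps_2}, i.e. the obstruction map $n^t_{p_1,\dots,p_k}$. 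Identifying the combinatorial basis of iterated partial compositions in $\cBV^{\ac}$ with the set of straight shuffles is what produces the explicit formula recalled in \cref{sssec:Obstructions}.

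For part (1), the operadic fibration $\cBV_{\!\infty} \twoheadrightarrow \BV_{\!\infty}$ in \eqref{eq:diagram_main} has a kernel whose image in $\End_A$, under a $\cBV_{\!\infty}$-algebra structure map, is spanned precisely by the values of the obstruction maps. Hence the $\cBV_{\!\infty}$-algebra factors through $\BV_{\!\infty}$ if and only if every $n^t_{p_1,\dots,p_k}$ vanishes. For part (2), restricting attention to the sub-cooperad of $\cBV^{\ac}$ generated by elements with $t = 0$ recovers the Koszul dual cooperad of $\Gerst$, and the resulting relations on the $m^0_{p_1,\dots,p_k}$ are exactly those of a $\Gerst_\infty$-algebra. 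Part (3) follows by the still smaller restriction $p_1 = \cdots = p_k = 1$, which isolates the shifted $\rL_\infty$-sub-cooperad and yields the final equivalence. The three cases thus form a nested sequence of specializations of a single operadic statement.

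The main technical obstacle is the explicit bookkeeping of signs: one must verify that the Koszul sign rule for the symmetric group action, the sign attached to straight shuffle extensions, and the shuffle sign in \eqref{eq:relation_maps_1} combine to reproduce the cobar differential on $\Cobar(\cBV^{\ac})$ transported to $\End_A$. This computation, delicate but mechanical, can be handled by induction on the total weight $t + p_1 + \cdots + p_k$, reducing eventually to the strict cases already verified in \cite{GCTV12}.
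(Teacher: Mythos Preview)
Your approach is broadly the one the paper itself develops, but you should be aware that at the point where this theorem is stated the paper offers no proof: it is a citation to \cite[\S 2.3]{GCTV12}, and the statement is essentially definitional. The obstruction maps in \cref{sssec:Obstructions} are \emph{defined} to be the left-hand sides of the relations $\mathsf{M}^t_{p_1,\dots,p_k}$ that characterise $\BV_{\!\infty}$-algebras (\cref{thm:BVinfty}), so part~(1) is immediate once one checks the formulas match; parts~(2) and~(3) then follow because the $\Gerst_\infty$- and shifted $\rL_\infty$-relations are exactly the $t=0$ and $(t=0,\,p_i=1)$ specialisations of the $\BV_{\!\infty}$-relations. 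The operadic justification you sketch is deferred in the paper to \cref{thm:cBVinfty}.

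Your operadic outline is correct in spirit but imprecise in one place. The Maurer--Cartan equation does not split into ``$\rC_\infty$-terms plus obstruction terms'' on each basis element; rather it splits along the decomposition $\cBV^{\ac}\cong\Com^{\ac}\oplus\rM^*\oplus s^{-1}\rM^*$ of \cref{thm:FormcBVac}. The crucial mechanism you omit is the internal differential $d_1\colon\rM^*\xrightarrow{\,s^{-1}\,}s^{-1}\rM^*$: evaluating the Maurer--Cartan equation on $\mu\in\rM^*$ produces the term $\alpha(d_1\mu)=\alpha(s^{-1}\mu)=-n^t_{p_1,\dots,p_k}$, which is what forces the obstruction map to equal the left-hand side of $\mathsf{M}^t_{p_1,\dots,p_k}$. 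The Maurer--Cartan equation on the summand $s^{-1}\rM^*$ then gives further relations among the $n$'s and $m$'s (\cref{cor:RelationsN}), which the paper shows are automatic. Your kernel argument for part~(1) is valid, since the kernel of $\cBV_{\!\infty}\twoheadrightarrow\BV_{\!\infty}$ is the ideal generated by $s^{-2}\rM^*$ and an operad morphism vanishes on an ideal iff it vanishes on the generators, which are precisely the $n$'s. For parts~(2) and~(3) the relevant sub-cooperads sit inside $\BV^{\ac}$ rather than $\cBV^{\ac}$, but this is a minor adjustment.
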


\subsection{Weight filtration}

\subsubsection{\(\cBV_{\!(n)}\)-algebras}

We assign to each \(m^t_{p_1, \dots, p_k}\) in a set of generating maps the following \defn{weight}:
\[
\mathrm{weight}(m^t_{p_1, \dots, p_k}) \defeq t + p_1 + \dots + p_k - 1.
\]

\begin{definition}
	A \defn{\(\cBV_{\!(n)}\)-algebra} is a \(\cBV_{\!\infty}\)-algebra in which all generating maps of weight greater than \(n\) vanish.
\end{definition}

\noindent As in \cref{t:extension from C_infty}, every \(\rC_{(n)}\)-algebra extends to a \(\cBV_{\!(n)}\)-algebra, and, similarly to \cref{t:obstruction vanishing}, a \(\cBV_{\!(n)}\)-algebra descends to a \(\BV_{(n)}\)-algebra, a \(\Gerst_{(n)}\)-algebra, or a shifted \(\rL_{(n)}\)-algebra if and only if the same sets of obstruction maps vanish.
Among these obstructions, only finitely many can be non-zero, namely those whose weight
\[
\mathrm{weight}(n^t_{p_1, \dots, p_k}) \defeq t + p_1 + \dots + p_k - 1
\]
is at most \(n\).

\subsubsection{Initial stages}\label{ss:explicit_descriptions}

We provide an explicit description of the generating and obstruction maps in the initial stages of the weight filtration.

\smallskip\noindent\textsc{Generators}\nopagebreak

\smallskip\noindent\textit{Weight 0}.
The only generating map of weight \(0\) is \(m^0_1\) of degree \(\bars{m^0_1} = 1\).

\smallskip\noindent\textit{Weight 1}.
The set of generating maps of weight \(1\) is:
\[
m^1_1 \colon A \to A, \qquad
m^0_2 \colon A^{\ot 2} \to A, \qquad
m^0_{1,1} \colon A \ot A \to A,
\]
of degrees
\[
\bars{m^1_1} = -1, \quad
\bars{m^0_2} = 0, \quad
\bars{m^0_{1,1}} = -1.
\]
They are required to have the following symmetric properties:
\[
m^0_{2} = m^0_{2} \circ (12), \qquad
m^0_{1,1} = m^0_{1,1} \circ (12),
\]
or, equivalently, for homogeneous elements \(a,b \in A\),
\[
m^0_{2}(a,b) = (-1)^{|a||b|} m^0_{2}(b,a), \qquad
m^0_{1,1}(a,b) = (-1)^{|a||b|} m^0_{1,1}(b,a).
\]

\smallskip\noindent\textit{Weight 2}.
The subset of independent generating maps of weight \(2\) is:
\begin{align*}
	&m^0_{3} \colon A^{\ot 3} \to A,
	&& m^0_{1,2} \colon A \ot A^{\ot 2} \to A,
	&& m^0_{1,1,1} \colon A \ot A \ot A \to A, \\
	&m^1_{1,1} \colon A \ot A \to A,
	&& m^1_{2} \colon A^{\ot 2} \to A,
	&& m^2_{1} \colon A \to A,
\end{align*}
whose degrees are:
\begin{align*}
	&\bars{m^0_{3}} = -1, && \bars{m^0_{1,2}} = -2, && \bars{m^0_{1,1,1}} = -3, \\
	&\bars{m^1_{1,1}} = -3, && \bars{m^1_{2}} = -2, && \bars{m^2_{1}} = -3.
\end{align*}
They are required to have the following symmetry properties:
\begin{equation*}\label{eq:symmetry_weight2}
	\begin{split}
		m^0_3 \circ \big(\id - (12) + (132)\big) &= 0, \qquad \Big(m^0_3 \circ \big(\id - (23) + (123)\big) = 0,\Big) \\
		m^0_{1,2} \circ \big(\id - (23)\big) &= 0, \qquad \Big(m^0_{2,1} - m^0_{1,2} \circ (123) = 0,\Big) \\
		m^0_{1,1,1} \circ \big(\id - (12)\big) &= 0, \qquad m^0_{1,1,1} \circ \big(\id - (23)\big) = 0, \\
		m^1_{1,1} \circ \big(\id - (12)\big) &= 0, \qquad m^1_{2} \circ \big(\id - (12)\big) = 0.
	\end{split}
\end{equation*}

\smallskip\noindent\textit{Weight 3}.
There are 11 independent generators of weight 3:
\begin{align*}
	&m^0_{4}
	&& m^0_{1,3}
	&& m^0_{2,2}
	&& m^0_{1,1,2}
	&& m^0_{1,1,1,1}
	\\
	&m^1_{3}
	&& m^1_{1,2}
	&& m^1_{1,1,1}
	\\
	&m^2_{2}
	&& m^2_{1,1}
	\\
	&m^3_{1}
\end{align*}

\medskip\noindent\textsc{Obstructions}\nopagebreak

\medskip\noindent\textit{Weight 1}.\par\nopagebreak
\medskip\noindent\renewcommand{\arraystretch}{1.25}
\begin{tabularx}{\textwidth}{|@{}r@{\hspace{2pt}}X@{}|}
	\hline
	\(n^1_1\) & \( = \displaystyle m^0_1 \circ m^1_1 + m^1_1 \circ m^0_1 \)
	\( = [m^0_1, m^1_1] \) \\

	\hline
	\,\(n^0_{1,1}\) & \( = \displaystyle
	m^0_1 \circ m^0_{1,1}
	- m^0_{1,1} \circ_1 m^0_1
	- m^0_{1,1} \circ_2 m^0_1
	= [m^0_1, m^0_{1,1}] \) \\

	\hline
\end{tabularx}

\medskip\noindent\textit{Weight 2}.\par\nopagebreak
\medskip\noindent\renewcommand{\arraystretch}{1.25}
\begin{tabularx}{\textwidth}{|@{}r@{\hspace{2pt}}X@{}|}
	\hline
	\(n^0_{1,2}\) & \( = [m^0_1, m^0_{1,2}]
	- m^0_{2} \circ_1 m^0_{1,1}
	- (m^0_{2} \circ_2 m^0_{1,1}) \circ(12)
	+ m^0_{1,1} \circ_2 m^0_2 \) \\

	\hline
	\,\(n^0_{1,1,1}\) & \( =
	[m^0_1, m^0_{1,1,1}] - \big(m^0_{1,1} \circ_1 m^0_{1,1}\big) \circ \big(\id + (123) + (132)\big) \) \\

	\hline
	\(n^1_{1,1}\) & \( =
	[m^0_1, m^1_{1,1}]
	- [m^1_1, m^0_{1,1}] \) \\

	\hline
	\(n^1_2\) & \( =
	[m^0_1, m^1_2]
	+ [m^1_1, m_2^0] 
	- m^0_{1,1} \) \\

	\hline
	\(n^2_1\) & \( =
	[m^0_1, m^2_1]
	+ m^1_1 \circ m^1_1 \) \\
	\hline
\end{tabularx}

\medskip\noindent\textit{Weight 3}.\par\nopagebreak
\medskip\noindent\renewcommand{\arraystretch}{1.25}
\begin{tabularx}{\textwidth}{|@{}r@{\hspace{2pt}}X@{}|}
	\hline
	\(n^0_{2,2}\) & \( = \displaystyle
	[m^0_1, m^0_{2,2}]
	+ m^0_{1,2} \circ_1 m^0_2
	- m^0_{2,1} \circ_3 m^0_2
	+ (m^0_2 \circ_2 m^0_{1,2}) \circ (12)
	- m^0_2 \circ_2 m^0_{1,2}
	+ (m_2^0 \circ_2 m^0_{2,1}) \circ (123)
	- m_2^0 \circ_1 m^0_{2,1}
	+ m^0_3 \circ_2 m^0_{1,1}
	- (m^0_3 \circ_3 m^0_{1,1}) \circ (123)
	+ (m^0_3 \circ_3 m^0_{1,1}) \circ (23)
	+ (m^0_3 \circ_2 m^0_{1,1}) \circ (1243)
	- (m^0_3 \circ_1 m^0_{1,1}) \circ (23)
	+ (m^0_3 \circ_1 m^0_{1,1}) \circ (234) \) \\

	\hline
	\(n^0_{1,3}\) & \( = \displaystyle
	[m^0_1, m^0_{1,3}]
	- (m_2^0 \circ_2 m^0_{1,2}) \circ (12)
	+ m_2^0 \circ_1 m^0_{1,2}
	+ m_{1,2} \circ_2 m_2^0
	- m_{1,2} \circ_3 m_2^0 \) \\ & \( \quad
	+ m_{1,1}^0 \circ_2 m^0_3
	+ m_3^0 \circ_1 m^0_{1,1}
	+ (m_3^0 \circ_2 m^0_{1,1}) \circ (12)
	+ (m_3^0 \circ_3 m^0_{1,1}) \circ (321) \) \\

	\hline
	\(n^0_{1,1,2}\) & \( = \displaystyle
	[m^0_1, m^0_{1,1,2}]
	- m^0_{1,2}\circ_1 m^0_{1,1}
	- m^0_{1,2}\circ_2 m^0_{1,1}
	- \left(m^0_{1,2}\circ_3 m^0_{1,1}\right) \circ (23)
	- \left(m^0_{1,2}\circ_2 m^0_{1,1}\right) \circ (12)
	- \left(m^0_{1,2}\circ_3 m^0_{1,1}\right) \circ (321)
	+ m^0_{1,1}\circ_2 m^0_{1,2}
	+ \left(m^0_{1,1}\circ_2 m^0_{1,2}\right) \circ (12)
	- m^0_{2}\circ_1 m^0_{1,1,1}
	- \left(m^0_{2}\circ_1 m^0_{1,1,1}\right) \circ (34)
	+ m^0_{1,1,1}\circ_3 m^0_{2} \) \\

	\hline
	\,\(n^0_{1,1,1,1}\) & \( = \displaystyle
	[m^0_1, m^0_{1,1,1,1}]
	+ (m^0_{1,1} \circ_1 m^0_{1,1,1}) \circ \big(\id + (1234) + (13)(24) + (4321)\big) \) \\ & \( \quad
	+ (m^0_{1,1,1} \circ_1 m^0_{1,1}) \circ \big(\id + (23) + (234) + (123) + (1342) + (13)(24)\big) \) \\

	\hline
	\(n^1_{3}\) & \( = \displaystyle
	[m^0_1, m^1_{3}]
	+ [m_1^1, m_3^0]
	+ m_2^1 \circ_1 m_2^0 - m_2^1 \circ_2 m_2^0
	+ m_2^0 \circ_1 m_2^1 - m_2^0 \circ_2 m_2^1
	- m^0_{1,2} - m^0_{2,1} \) \\

	\hline
	\(n^1_{1,2}\) & \( = \displaystyle
	[m^0_1, m^1_{1,2}]
	+ [m^1_1, m^0_{1,2}]
	- m^0_{2} \circ_1 m^1_{1,1}
	- (m^0_{2} \circ_2 m^1_{1,1}) \circ (12)
	+ m^0_{1,1} \circ_2 m^1_2 \) \\ & \( \quad
	- m^1_{2} \circ_1 m^0_{1,1}
	- (m^1_{2} \circ_2 m^0_{1,1}) \circ (12)
	+ m^0_{1,1} \circ_2 m^1_2
	- m^0_{1,1,1} \) \\

	\hline
	\(n^1_{1,1,1}\) & \( = \displaystyle
	[m^0_1, m^1_{1,1,1}]
	+ [m^1_1, m^0_{1,1,1}]
	+ (m^0_{1,1} \circ_1 m^1_{1,1}) \circ \big(\id + (123) + (321)\big) \) \\ & \( \quad
	+ (m^1_{1,1} \circ_1 m^0_{1,1}) \circ \big(\id + (123) + (321)\big) \) \\

	\hline
	\(n^2_{1,1}\) & \( = \displaystyle
	[m^0_1, m^2_{1,1}]
	+ [m_1^1, m^1_{1,1}]
	+ [m^2_1, m^0_{1,1}] \) \\

	\hline
	\(n^2_{2}\) & \( = \displaystyle
	[m^0_1, m^2_{2}]
	+ [m^1_1, m_2^1]
	+ [m^2_1, m_2^0]
	- m^1_{1,1} \) \\

	\hline
	\(n^3_{1}\) & \( = \displaystyle
	[m^0_1, m^3_{1}]
	+ [m_1^1, m_1^2] \) \\

	\hline
\end{tabularx}

\medskip\noindent We describe explicitly the properties obstructed by the terms of weight at most 2:

\smallskip\noindent
\begin{itemize}[itemsep=3pt, label*=\(\diamond\), leftmargin=12pt]
	\item \(n^1_1\) obstruction to \(m^0_1\) commuting with the operator \(m^1_1\).
	\item \(n^0_{1,1}\) obstruction to \(m^0_1\) acting as a derivation of the shifted bracket \(m^0_{1,1}\).
	\item \(n^0_{1,2}\) obstruction to \(m^0_2\) and \(m^0_{1,1}\) satisfying the Leibniz relation up to the homotopy \(m^0_{1,2}\).
	\item \(n^0_{1,1,1}\) obstruction to \(m^0_{1,1,1}\) serving as a homotopy for the (shifted) Jacobi relation of \(m^0_{1,1}\).
	\item \(n^1_{1,1}\) obstruction to \(m^1_{1,1}\) serving as a homotopy for the commutation relation between \(m^1_1\) and \(m^0_{1,1}\).
	\item \(n^1_2\) obstruction to \(m^1_2\) serving as a homotopy for the relation expressing \(m^0_{1,1}\) as the commutator of \(m^1_1\) with \(m^0_2\).
	\item \(n^2_1\) obstruction to \(m^2_1\) serving as a homotopy for the square-zero relation of \(m^1_1\).
\end{itemize}

\subsection{Yang--Mills kinematic algebra}

We now review an example of a \(\cBV_{\!(2)}\)-algebra arising as the kinematic algebra of Yang--Mills theory.
It was constructed and developed in \cite{Bonezzi2022GaugeStructureDoubleField, Bonezzi2023GaugeInvariantDoubleCopyQuartic, Bonezzi2024WeaklyConstrainedDoubleField}.

\medskip\noindent
Let \(\cO\) denote the vector space of smooth real-valued function on \(\R^d\) with the Minkowski metric \(\eta = \operatorname{diag}(+,-,\dots,-)\).
We use the following short hand notation throughout:
\[
\partial_\mu \lambda = \frac{\partial \lambda}{\partial t} + \sum_{i = 1}^{d-1} \frac{\partial \lambda}{\partial x^i} \,,
\qquad
\partial^\mu \lambda = \eta^{\mu \nu} \partial_\nu \lambda = \frac{\partial \lambda}{\partial t} - \sum_{i = 1}^{d-1} \frac{\partial \lambda}{\partial x^i} \,.
\]
Let \(\cK = \cZ \ot \cO\) where \(\cZ\) be the following graded vector space with a given basis:
\[
\begin{tikzcd}[row sep = 0]
	\cZ_0 & \cZ_1 & \cZ_2 & \cZ_3 \\
	\R\set{\theta_+} & \R\set{\theta_0, \dots, \theta_{d-1}} & \R\set{\theta_-} \\
	& \oplus & \oplus & \\
	& \R\set{s\theta_+} & \R\set{s\theta_0, \dots, s\theta_{d-1}} & \R\set{s\theta_-}. \\
\end{tikzcd}
\]
We will define multilinear maps on \(\cK\) using this basis, specifically, assigning to each basis element of \(\cZ^{\ot r}\) a sum of differential operator \(\cO^{\ot r} \to \cO\) parameterized by \(\cZ\).
More precisely, we will use the natural inclusion
\[
\Hom(\cZ^{\ot r}, \cZ \ot \Hom(\cO^{\ot r}, \cO)) \to \Hom(\cK^{\ot r}, \cK).
\]
Additionally, we will omit operators that are identically \(0\) and leave implicit the operator \(\id \ot\dotsb\ot \id\), which agrees with the product of functions.

\medskip\noindent The map \(m_1\) is defined by:
\begin{align*}
	&\theta_+ \mapsto \theta_\mu \partial^\mu + s\theta_+ \partial^\mu \partial_\mu,
	&&s\theta_+ \mapsto - s\theta_\mu \partial^\mu - \theta_-, \\
	&\theta_\mu \mapsto s\theta_\mu \partial^\mu \partial_\mu + \theta_- \partial_\mu,
	&&s\theta_\mu \mapsto - s\theta_- \partial_\mu, \\
	&\theta_- \mapsto s\theta_- \partial^\mu \partial_\mu,
	&&s\theta_- \mapsto 0.
\end{align*}
The map \(m_2\), which is assumed to satisfy \(m_2 \circ (12) = m_2\), is defined by:

\noindent\hspace*{5pt}
\begin{minipage}{0.2\textwidth}
	\begin{align*}
		&\theta_+ \ot \theta_+
		\mapsto \theta_+ , \\
		&\theta_+ \ot \theta_\mu
		\mapsto \theta_\mu + s\theta_+( \partial_\mu \ot \id + \id \ot \partial_\mu), \\
		&\theta_+ \ot \theta_-
		\mapsto -s\theta_\mu (\id \ot \partial^\mu), \\
		&\theta_+ \ot s\theta_\mu
		\mapsto s\theta_\mu , \\
		&\theta_+ \ot s\theta_-
		\mapsto s\theta_- ,
	\end{align*}
\end{minipage}
\begin{minipage}{0.5\textwidth}
	\begin{align*}
		&\theta_\mu \ot \theta_\nu
		\mapsto
		s\theta_\nu ( \partial_\mu \ot \id + 2\, \id \ot \partial_\mu ) \\
		&\phantom{(\theta_\mu, \theta_\nu) \mapsto}
		- s\theta_\mu ( \id \ot \partial_\nu + 2\, \partial_\nu \ot \id ) \\
		&\phantom{(\theta_\mu, \theta_\nu) \mapsto}
		+ s\theta_\rho \eta_{\mu\nu} ( \partial^\rho \ot \id - \id \ot \partial^\rho ), \\
		&\theta_\mu \ot \theta_-
		\mapsto s\theta_- (\id \ot \partial_\mu), \\
		&\theta_\mu \ot s\theta_\nu
		\mapsto -s\theta_- \eta_{\mu \nu}.
	\end{align*}
\end{minipage}

\medskip\noindent
The map \(m_3\) is defined by
\[
\theta_\mu \ot \theta_\nu \ot \theta_\rho
\mapsto
s\theta_\mu \eta_{\nu\rho} - 2s\theta_\nu \eta_{\mu\rho} + s\theta_\rho \eta_{\nu\nu}.
\]

In \cite{Bonezzi2022GaugeStructureDoubleField}, the authors verify that these maps define a \(\rC_{(2)}\)-algebra structure on \(\cK\).
By \cref{t:extension from C_infty}, this structure can be extended to a \(\cBV_{\!\infty}\)-algebra structure by choosing, for each \(	t \geqslant 0, \ k \geqslant 1, \ p_1, \dots, p_k \geqslant 1\) with \(t+k \geqslant 2\), a generating maps \(m^t_{p_1,\dots,p_k}\).
We now review the choices made by the authors in \cite{Bonezzi2023GaugeInvariantDoubleCopyQuartic, Bonezzi2024WeaklyConstrainedDoubleField}.

\medskip\noindent\textsc{Generators}\nopagebreak

\medskip\noindent\textit{Weight 0}.
We set \(m^0_1 \defeq m_1\).

\medskip\noindent\textit{Weight 1}.
We set \(m^0_2 \defeq m_2\).
The generating map \(m^1_1\) is defined by
\begin{align*}
	s\theta_+ \mapsto \theta_+, \quad
	s\theta_\mu \mapsto \theta_\mu, \quad
	s\theta_- \mapsto \theta_-.
\end{align*}
The final generating map of weight \(1\) is \(m^0_{1,1} \defeq [m^1_1, m^0_2]\).

\medskip\noindent\textit{Weight 2}.
We set \(m^0_3 \defeq m_3\).
The following generating maps are chosen to be \(0\):
\begin{align*}
	m^1_2 &= 0,	& m^2_1 &= 0,	& m^1_{1,1} &= 0.
\end{align*}
The authors explicitly define a map \(\theta_3 \colon \cK^{\ot 3} \to \cK\), found in \cite[Appendix~A]{Bonezzi2024WeaklyConstrainedDoubleField}, which we use to define \(m^0_{1,2} \defeq \theta_3 \circ (13)\).
The final generating map of weight 2 is \(m^0_{1,1,1} \defeq [m^1_1, m^0_{1,2}]\).

\medskip\noindent\textit{Weight \(\geqslant\)\,3}. All other generating maps are chosen to be identically 0.
Therefore, the resulting structure on \(\cK\) is that of a \(\cBV_{\!(2)}\)-algebra.

\medskip\noindent\textsc{Obstructions}\nopagebreak

\medskip\noindent
We use the general list in \cref{ss:explicit_descriptions} to make explicit some of the obstruction maps arising from the above choices of generating maps.

\medskip\noindent\textit{Weight 1}.
The obstruction map \(n^1_1 = [m^0_1, m^1_1]\) is the d'Alambertian operator \(\square\) and \(n^0_{1,1} = [m^0_1, m^0_{1,1}]\).

\medskip\noindent\textit{Weight 2}.
Three obstruction maps are identically 0:
\begin{align*}
	n^1_2 &= 0,	& n^2_1 &= 0,	& n^1_{1,1} &= 0.
\end{align*}
The other two, \(n^0_{1,2}\) and \(n^0_{1,1,1}\), are obstructions to the homotopy Leibniz and Jacobi relations respectively.
They are left uncomputed.

\medskip\noindent\textit{Weight 3}.
There are four obstruction maps that are identically 0:
\[
n^1_{1,2} = 0, \qquad n^2_{1,1} = 0, \qquad n^2_2 = 0, \qquad n^3_1 = 0.
\]
Two others are simplified compared to their general definition:
\[
n^1_3 = [m^1_1, m^0_3] - m^0_{1,2} - m^0_{2,1}
\quad\text{and}\quad
n^1_{1,1,1} = [m^1_1, m^0_{1,1,1}].
\]
The remaining four obstruction maps are left uncomputed.

\section{The operad \(\cBV_{\!\infty}\)}\label{sec:operads}

In this section we lift the discussion from algebras to the operadic level.
We study the operads \(\Com\), \(\BV\), and \(\cBV\), which govern commutative, Batalin--Vilkovisky, and coexact Batalin--Vilkovisky algebras respectively.
We then extend the theory of Koszul duality to \emph{differential graded inhomogeneous quadratic} presentations, a generalization that permits the construction of a cofibrant replacement \(\cBV_{\!\infty}\) of \(\cBV\).
We verify that algebras over \(\cBV_{\!\infty}\) coincide with the \(\cBV_{\!\infty}\)-algebras described in the previous section and establish the existence of a universal formula for their tensor product.
Since \(\cBV_{\!\infty}\) arises as a Koszul resolution, the standard tools from operadic calculus endow \(\cBV_{\!\infty}\)-algebras with an explicit deformation and obstruction theory, \(\infty\)-morphisms, as well as homotopy transfer and rectification theorems.

\subsection{Conventions}

In the previous section we worked with cochain complexes over a field \(\KK\) of characteristic \(0\) (\cref{ss:dgVec}).
Here we work in the category of chain complexes over \(\KK\), where differentials have degree \(-1\).
A cochain complex \((C^\bullet, d)\) can be regarded as a chain complex via \(C_n = C^{-n}\).

We follow the operadic conventions of \cite{LodayVallette12} and work in Hinich's model category of differential non-negatively graded operads \cite{Hinich97}, which we refer to simply as operads.
In this model structure, fibrations are surjective morphisms of operads and weak equivalences are aritywise quasi-isomorphisms.


\subsection{Operads for strict algebras}\label{ss:strict_operads}

An operad \(\mathrm{O}\) is said to be \defn{presented} by data \((E, R)\), where \(E\) is an \(\Sy\)-module and \(R\) is an \(\Sy\)-submodule of the free operad \(\cT(E)\), if
\[
\mathrm{O} \cong \cP(E, R) \defeq \frac{\cT(E)}{(R)},
\]
with \((R)\) the operadic ideal generated by \(R\).

\subsubsection{The operad \(\Com\)}\label{def:com_operad}

We recall the standard quadratic presentation \((V,S)\) of the operad governing (dg) commutative algebras.
Here \defn{quadratic} means that \(V\) is an \(\Sy\)-module concentrated in homological degree \(0\) with trivial differential, and that \(S\) is an \(\Sy\)-submodule of \(\cT(V)^{(2)}\), the part of the free operad \(\cT(V)\) spanned by all compositions involving exactly two generators.
As we review in \cref{ss:koszul_duals}, quadratic presentations are the classical setting where Koszul duality can be applied.

\medskip\noindent The operad \defn{\(\Com\)} is presented by the following quadratic data \((V, S)\), where the symmetric action on the generator is trivial:

\medskip\noindent \(\diamond\) \(V \defeq \KK\set{\m}\) where \(\m \defeq \M\) with degree \(\bars{\m}=0\),

\medskip\noindent \(\diamond\) \(S\) is the \(\Sy\)-module spanned by the relator
\vspace*{-7pt}
\[
\textsc{associativity: } \LL{1}{2}{3} - \RR{1}{2}{3},
\]
that is
\[
\Com \defeq \frac{\mathcal{T}(\m)}{\big(\text{associativity}\big)}.
\]

\subsubsection{The operad \(\BV\)}\label{def:OperadBV}

A straightforward definition of the operad \defn{\(\BV\)} governing (dg) \(\BV\)-algebras is
\[
\BV \defeq \frac{\Com \vee \mathcal{T}(\triangle)}{\big(\triangle \circ \triangle,\ \triangle \ \text{second-order}\big)}
\]
or, explicitly,
\[
\BV \cong \frac{\mathcal{T}(\m, \triangle)}{\big(\text{associativity}, \triangle \circ \triangle,\ \triangle \ \text{second-order}\big)},
\]
where the generators are in degree \(0\) and the differential is trivial.

\medskip Unfortunately, the second-order relation involves the composition of \textit{three} generators, making the associated presentation unfit for Koszul duality techniques.
To find a better suited presentation, recall that in a \(\BV\)-algebra \((A, d, \m, \triangle)\) the failure of \(\triangle\) to be a derivation of the product defines a shifted Lie bracket \(\b \defeq [\triangle, \m]\) for which the Leibniz relation with respect to \(\m\) is equivalent to the second-order relation for \(\triangle\).
This leads to the following alternative presentation, which is \defn{inhomogeneous quadratic}: here the generating \(\Sy\)-module has trivial differential and the relators lie in \(\cT(E)^{(1)} \oplus \cT(E)^{(2)}\), that is, a mixture of linear and quadratic terms in the free operad \(\cT(E)\).

\begin{lemma}\label{lemma:presentationBV}
	The operad \(\BV\) is presented by \((E, R)\), where the symmetric action on all generators is trivial:

	\medskip\noindent \(\diamond\) \(E \defeq \KK\set{\m} \oplus \KK\set{\b} \oplus \KK\set{\triangle}\) where
		\[
		\m \defeq \M,\quad \b \defeq \B,\quad \triangle \defeq \D~,
		\]
		with \(|\m| = 0\) and \(|\triangle| = |\b| = 1\).

	\medskip\noindent \(\diamond\) \(R\) is spanned, as an \(\Sy\)-module, by the following 6 relations:

	\noindent
	\begin{tabular}{ll}
		\textnormal{\textsc{associativity:}} \(\LL{1}{2}{3} - \RR{1}{2}{3}\) &
		\textnormal{\textsc{square-zero:\quad}} \(\DD\) \\
		\textnormal{\textsc{bracket:}} \(\BB{1}{2} - \hspace*{-5pt} \DM{1}{2} \hspace*{-5pt} + \MDL{1}{2} + \MDR{1}{2}\) &
		\textnormal{\textsc{leibniz:}} \(\LBM{1}{2}{3}- \hspace*{-5pt} \LMB{1}{3}{2}-\RMB{1}{2}{3}\)\\
		\textnormal{\textsc{jacobi:}} \(\BBB{1}{2}{3}+ \hspace*{-5pt} \BBB{2}{3}{1}+ \hspace*{-5pt} \BBB{3}{1}{2}\) &
		\textnormal{\textsc{derivation:}} \(\DB{1}{2} - \BDL{1}{2} - \BDR{1}{2}\)~.
	\end{tabular}
\end{lemma}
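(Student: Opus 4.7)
Let $\mathcal{P} \defeq \cP(E,R)$ denote the operad with the quadratic-linear presentation in the statement, and let $\BV^{\mathrm{old}}$ denote the operad defined in the paragraph preceding the lemma by $(\mathcal{T}(\m, \triangle))/(\text{assoc},\, \triangle\circ\triangle,\, \text{second-order})$. The plan is to exhibit mutually inverse morphisms $\phi\colon \mathcal{P} \to \BV^{\mathrm{old}}$ and $\psi \colon \BV^{\mathrm{old}} \to \mathcal{P}$.

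\textbf{Constructing $\phi$.} I would set $\phi(\m) = \m$, $\phi(\triangle) = \triangle$, and $\phi(\b) = [\triangle, \m]$, where the bracket is understood as the insertion bracket in the endomorphism operad, i.e.\ $\triangle \circ \m - \m \circ_1 \triangle - \m \circ_2 \triangle$ with the appropriate Koszul signs coming from $|\triangle| = 1$. Since $\mathcal{T}(E)$ is free, this uniquely extends to a morphism of operads provided that all six relations in $R$ hold in $\BV^{\mathrm{old}}$. \emph{Associativity} and \emph{square-zero} are among the defining relations of $\BV^{\mathrm{old}}$. The \emph{bracket} relation holds tautologically by the choice of $\phi(\b)$. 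The \emph{Leibniz} relation is exactly the reformulation of the second-order relation \eqref{eq:second-order} in terms of $\b = [\triangle,\m]$: expanding $\b \circ_1 \m - \m \circ_1 \b - (\m \circ_2 \b)\circ(213)$ using the definition of $\b$ and associativity of $\m$ produces, after cancellation, precisely the second-order relator applied to three inputs. The \emph{derivation} relation $[\triangle, \b] = 0$ follows from the graded Jacobi identity for the commutator bracket together with $\triangle \circ \triangle = 0$:
\[
[\triangle, \b] \;=\; [\triangle, [\triangle, \m]] \;=\; \tfrac{1}{2}[[\triangle, \triangle], \m] \;=\; 0.
\]
Finally, the \emph{Jacobi} relation for $\b$ is the classical statement that the BV bracket is a shifted Lie bracket; its derivation uses both $\triangle^2 = 0$ and the second-order identity, expanding $[\b, \b]$ via the definition of $\b$ and collapsing using the Leibniz property just established.

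\textbf{Constructing $\psi$ and checking inverses.} Define $\psi(\m) = \m$ and $\psi(\triangle) = \triangle$. To extend this to an operad morphism $\BV^{\mathrm{old}} \to \mathcal{P}$, I must verify that the three defining relations of $\BV^{\mathrm{old}}$ hold in $\mathcal{P}$. Associativity and square-zero are among the six defining relations of $\mathcal{P}$. For the second-order relation, I use the \emph{bracket} relation in $R$ to replace the element $[\triangle, \m] \in \mathcal{T}(E)$ by $\b$, which reduces the second-order relator to the \emph{Leibniz} relator, already zero in $\mathcal{P}$. This gives a well-defined $\psi$. The composition $\phi \circ \psi$ fixes $\m$ and $\triangle$, hence equals $\id_{\BV^{\mathrm{old}}}$. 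For $\psi \circ \phi$, I also need $\psi(\phi(\b)) = \b$ in $\mathcal{P}$, which is exactly what the bracket relation asserts. Thus $\phi$ and $\psi$ are mutually inverse and $\mathcal{P} \cong \BV$.

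\textbf{Main obstacle.} The routine parts are the symbol-tracking identifications of $\m$ and $\triangle$ and the tautological bracket relation. The delicate step is establishing the \emph{Jacobi} relation for $\b = [\triangle, \m]$ in $\BV^{\mathrm{old}}$, i.e.\ recovering the graded Jacobi identity from the second-order property of $\triangle$. This is a classical but nontrivial computation requiring careful sign bookkeeping in the shifted setting (recall $|\b|=|\triangle|=1$), and it is the place where the inhomogeneous nature of the presentation is essential: the linear part (bracket) allows the definition of $\b$, while the quadratic part (second-order, repackaged as Leibniz) forces Jacobi. All other verifications are either definitional or follow by a one-line commutator manipulation.
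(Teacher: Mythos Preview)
Your proof is correct and follows essentially the same approach as the paper. The paper's proof is a two-sentence summary---``Leibniz relation for \(\b\) is equivalent to the second-order relation for \(\triangle\), while the Jacobi relation and the derivation relation are direct consequences of the other relations''---and your explicit construction of mutually inverse morphisms \(\phi,\psi\) simply unpacks this: the bracket relation defines \(\b\), Leibniz repackages second-order, and Jacobi and derivation are redundant consequences, exactly as you argue.
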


\begin{proof}
	Leibniz relation for \(\b\) is equivalent to the second-order relation for \(\triangle\), while the Jacobi relation and the derivation relation are direct consequences of the other relations.
\end{proof}

\begin{remark}
	As noted above, the Jacobi and derivation relations follow from the other relations, but we include them explicitly since a presentation of this form is required for applying the theory of Koszul duality to the operad \(\BV\), see \cref{ex:qlCondition}.
\end{remark}

\subsubsection{The operad \(\cBV\)}

The operad \defn{\(\cBV\)} governing \(\cBV\)-algebras is
\[
\BV \defeq \dfrac{\BV \vee \mathcal{T}(\square)}
{\big([\triangle, \square], \ \square \ \text{second-order}\big)}
\]
or, explicitly,
\[
\cBV \defeq \frac{\mathcal{T}(\m, \triangle, \square)}
{\big(\text{associativity}, \triangle \circ \triangle,\ [\triangle, \square],\ \triangle\text{ second-order},\ \square\text{ second-order}\big)},
\]
where the differential is the unique derivation extending
\[
\dif\, \triangle \defeq \square,
\qquad
\dif\, \m \defeq 0.
\]


As for the first presentation given for the operad \(\BV\), this presentation of \(\cBV\) is not well suited to our purposes.
We introduce new generators corresponding to \([\triangle,\m]\) and \([\square,\m]\) and describe the operad \(\cBV\) via a \defn{differential graded inhomogeneous quadratic} presentation, i.e., one in which the generating \(\Sy\)-module \(E\) may carry a non-trivial differential and, as before, the relators lie in \(\cT(E)^{(1)} \oplus \cT(E)^{(2)}\).

\begin{lemma}\label{lemma:presentationcBV}
	The operad \(\cBV\) is presented by \((E_\bullet, R_\bullet)\), where the symmetric action on all generators is trivial:

	\medskip\noindent \(\diamond\) \(E_\bullet \defeq \big(\KK\set{\m} \oplus \KK\set{\b} \oplus \KK\set{\triangle} \oplus \KK\set{\c} \oplus \KK\set{\square},\ \dif\big)\) where
	\[
	\m \defeq \M,\quad
	\b \defeq \B,\quad
	\triangle \defeq \D,\quad
	\c \defeq \C,\quad
	\square \defeq \BOX,
	\]
	with \(|m| = |c| = |\square| = 0\) and \(|b| = |\triangle| = 1\), and differential
	\[
	\dif \colon \M \mapsto 0,\quad
	\B \mapsto \C \mapsto 0,\quad
	\D \mapsto \square \mapsto 0.
	\]

	\medskip\noindent \(\diamond\) \(R_\bullet\) is spanned as an \(\Sy\)-module by the following 11 types of relators.

	\noindent
	\begin{tabular}{ll}
		\textnormal{\textsc{associativity:}} \(\LL{1}{2}{3} - \RR{1}{2}{3}\) &
		\textnormal{\textsc{square-zero:\quad}} \(\DD\) \\[4pt]

		\textnormal{\textsc{bracket \(\triangle\):}} \(\BB{1}{2} - \hspace*{-5pt}\DM{1}{2}\hspace*{-5pt} + \MDL{1}{2} + \MDR{1}{2}\) &
		\textnormal{\textsc{leibniz \(\triangle\):}} \(\hspace*{-5pt}\LBM{1}{2}{3}-\hspace*{-5pt}\LMB{1}{3}{2}-\RMB{1}{2}{3}\) \\[4pt]

		\textnormal{\textsc{bracket \(\square\):}} \(\CC{1}{2} - \hspace*{-5pt}\BoxM{1}{2}\hspace*{-5pt} + \MBoxL{1}{2} + \MBoxR{1}{2}\) &
		\textnormal{\textsc{leibniz \(\square\):}} \(\hspace*{-5pt}\LBoxM{1}{2}{3}-\hspace*{-5pt}\LMBox{1}{3}{2}-\RMBox{1}{2}{3}\) \\[4pt]

		\textnormal{\textsc{jacobi:}} \(\hspace*{-5pt}\BBB{1}{2}{3}+\hspace*{-5pt}\BBB{2}{3}{1}+\hspace*{-5pt}\BBB{3}{1}{2}\) &
		\textnormal{\textsc{derivation:}} \(\hspace*{-5pt}\DB{1}{2} - \BDL{1}{2} - \BDR{1}{2}\)
	\end{tabular}

	\noindent
	\begin{tabular}{l}
		\textnormal{\textsc{compatibility 1:}} \hspace{6pt}\(\DBox - \BoxD\) \\[4pt]
		\textnormal{\textsc{compatibility 2:}} \(\CB{1}{2}{3}+ \hspace*{-5pt}\CB{2}{3}{1}+ \hspace*{-5pt}\CB{3}{1}{2}- \hspace*{-5pt}\BC{1}{2}{3}- \hspace*{-5pt}\BC{2}{3}{1}- \hspace*{-5pt}\BC{3}{1}{2}\) \\[4pt]
		\textnormal{\textsc{compatibility 3:}} \(\BoxB{1}{2} \hspace*{-5pt}-\hspace*{-5pt} \DC{1}{2} \hspace*{-5pt}- \CDL{1}{2} +\BBoxL{1}{2}-\CDR{1}{2}+\BBoxR{1}{2}\)~.
	\end{tabular}
\end{lemma}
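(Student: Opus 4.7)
The plan is to show that the dg inhomogeneous quadratic presentation $(E_\bullet, R_\bullet)$ defines the same operad as the compact 3-generator presentation of $\cBV$ recalled at the start of this subsection. Write $\cBV''$ for the operad defined by $(E_\bullet, R_\bullet)$, and construct mutually inverse dg operad morphisms $\phi \colon \cBV'' \to \cBV$ and $\psi \colon \cBV \to \cBV''$. Set $\phi$ to send $\m, \triangle, \square$ to themselves and $\b \mapsto [\triangle, \m]$, $\c \mapsto [\square, \m]$; set $\psi$ to send $\m, \triangle, \square$ to themselves. Differential compatibility for both maps is immediate from $\partial[\triangle, \m] = [\square, \m]$ and $\partial \triangle = \square$.

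For $\phi$ to be well defined, each of the 11 relators in $R_\bullet$ must vanish in $\cBV$. Associativity, square-zero, and compatibility 1 hold tautologically. The bracket $\triangle$ and bracket $\square$ relators are by construction the graded commutators defining $\phi(\b)$ and $\phi(\c)$. The Leibniz relators for $\triangle$ and $\square$ are the standard reformulations of the corresponding second-order conditions in terms of the derived bracket, exactly as in the proof of \cref{lemma:presentationBV}; Jacobi and derivation for $\b$ then follow from associativity and $\triangle^2=0$ as in the $\BV$ case. In the opposite direction, for $\psi$: associativity and $\triangle^2=0$ appear verbatim; $[\triangle, \square]=0$ is compatibility 1; and the two second-order relations are recovered by substituting the bracket relators into the Leibniz relators in $\cBV''$.

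The main obstacle is the verification of compatibilities 2 and 3, which I handle by applying the differential to relations already known to hold in $\cBV$. Since $\partial\b = \c$ and $|\b|=1$, applying $\partial$ to the Jacobi relator $(\b \circ_1 \b)\cdot(\id + (123) + (132))$ produces $(\c \circ_1 \b - \b \circ_1 \c)\cdot(\id + (123) + (132))$, which is precisely compatibility 2. Similarly, applying $\partial$ to the derivation relator $\triangle \circ \b - \b \circ_1 \triangle - \b \circ_2 \triangle$, and using $\partial\triangle = \square$ together with $\partial\b = \c$, yields $\square \circ \b - \triangle \circ \c - \c \circ_1 \triangle + \b \circ_1 \square - \c \circ_2 \triangle + \b \circ_2 \square$, which is compatibility 3. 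With these verifications both maps are well defined, and they are mutually inverse on generators: $\phi \circ \psi$ fixes $\m, \triangle, \square$, while $\psi \circ \phi$ sends $\b$ and $\c$ to $[\triangle, \m]$ and $[\square, \m]$, which coincide with $\b$ and $\c$ in $\cBV''$ by the two bracket relators. This establishes $\cBV'' \cong \cBV$.
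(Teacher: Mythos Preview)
Your proof is correct and follows the same approach as the paper's very brief argument, which simply notes that $R_\bullet$ is $\partial$-stable and that the rest proceeds as in \cref{lemma:presentationBV}, with compatibilities~2 and~3 being ``direct consequences of the other relations.'' Your computation that these two compatibilities arise as $\partial(\textsc{jacobi})$ and $\partial(\textsc{derivation})$ makes that last claim explicit, and simultaneously handles the only nontrivial cases in the $\partial$-stability check of $R_\bullet$.
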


\begin{proof}
	First one can see that both \(E_\bullet\) and \(R_\bullet\) are stable under the differential \(\dif\), so this differential graded quadratic-linear data is well defined.
	The rest of the proof is similar to the one for the operad \(\BV\):
	the Leibniz relation for \(\c\) is equivalent to the second-order relation for \(\square\) and the compatibility relations 2 and 3 are direct consequences of the other relations.
\end{proof}

\begin{remark}
	Like for the operad \(\BV\), the Jacobi relation, the derivation relation, and the compatibility relations 2 and 3 are not mandatory to obtain an equivalent presentation for the operad \(\cBV\), but we need to consider all of them in order to get a presentation suitable for the theory of Koszul duality developed in \cref{subsubsec:QLcBV}.
\end{remark}

\begin{remark}
	Notice that the ``bracket'' \(\c\) induced by the second-order operator \(\square\) need not satisfy the Jacobi relation, since \(\square\) does not square to zero.
	For the same reason, \(\square\) is not, in general, a derivation of the bracket \(\c\).
\end{remark}
\subsection{Exact and coexact \(\BV\)-structures}\label{ss:exact}

This subsection is not part of the logical dependency structure of the paper and can be safely skipped on a first reading.
Its purpose is to explain our choice of terminology by comparing the operad \(\cBV\) to the operad \(\eBV\) controlling \textit{exact} BV-algebras.

\subsubsection{Exact \(\BV\)-algebras and Poisson manifolds}

Recall that in a (dg) \(\BV\)-algebra \((A, d, \cdot, \triangle)\), the operators \(d\) and \(\triangle\) commute; that is, \([d, \triangle] = 0\).
This means that \(\triangle\) is \emph{closed} in \(\End(A)\).
The operator \(\triangle\) is said to be \emph{exact} if there exists a linear map \(\nabla \colon A \to A\) such that \(\triangle = [d, \nabla]\).
As usual, exactness implies closedness.
Therefore, the following notion, due to Guan--Muro \cite[Definition~4.5]{GuanMuro23} and of relevance to Poisson geometry, provides a refinement of the concept of \(\BV\)-algebra.

\begin{definition}
	An \defn{exact \(\BV\)-algebra} is a tuple \((A, d, \cdot, \nabla)\), where \((A, d, \cdot)\) is a dg commutative algebra and \(\nabla \colon A \to A\) is a linear map of degree \(-2\) and order at most \(2\), satisfying
	\begin{equation}\label{Eq:Nabla}
		[\nabla,[\nabla,d]] = 0.
	\end{equation}
\end{definition}

The following proposition, proven straighforwardly using Cartan calculus, shows that Poisson manifolds canonically define exact \(\BV\)-algebras.

\begin{proposition}[{\cite[Section~3]{Koszul85}}]\label{prop:eBVPoisson}
	Let \(M\) be a manifold with a section \(\pi \in \Gamma(\Lambda^2 TM)\) such that \([\pi, \pi] = 0\).
	Then the de Rham complex \(\Omega^{\bullet}(M)\) admits an exact \(\BV\)-algebra structure
	\[
	\big(\Omega^{\bullet}(M), d_{dR}, \wedge, \iota_\pi\big),
	\]
	where \(\iota_\pi\) denotes contraction with \(\pi\).
\end{proposition}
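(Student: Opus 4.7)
The plan is to verify directly the three conditions defining an exact \(\BV\)-algebra: that \((\Omega^\bullet(M), d_{dR}, \wedge)\) is dg commutative, that \(\iota_\pi\) has degree \(-2\) and order at most \(2\), and that the exactness-type identity \eqref{Eq:Nabla} holds under the Poisson hypothesis \([\pi,\pi]_{SN}=0\), where \([\cdot,\cdot]_{SN}\) denotes the Schouten--Nijenhuis bracket.

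First I would dispose of the routine structural claims. The tuple \((\Omega^\bullet(M), d_{dR}, \wedge)\) is the standard de Rham complex, and \(\iota_\pi\) lowers form degree by \(2\) since \(\pi\) is a bivector. The order-at-most-\(2\) property is a classical fact about contraction with \(k\)-vectors: it can be verified locally on decomposable bivectors \(\pi = X \wedge Y\) using \(\iota_{X \wedge Y} = \iota_X \iota_Y\) together with the well-known derivation (order-\(1\)) property of single-vector contractions, and then extended by linearity and partitions of unity.

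The heart of the proof is the identity \([\iota_\pi, [\iota_\pi, d_{dR}]] = 0\). Here I would invoke Koszul's calculus of multivector contractions: for multivectors \(X\) and \(Y\) on \(M\), there is a classical formula of the form
\[
\iota_{[X,Y]_{SN}} \;=\; \pm \,\bigl[[\iota_X, d_{dR}], \iota_Y\bigr],
\]
with signs determined by the parities of \(X\) and \(Y\). Specializing to \(X = Y = \pi\) and expanding both nested graded commutators as sums of unsigned compositions in \(\End(\Omega^\bullet(M))\), one checks that this nested commutator coincides, up to an overall sign, with \([\iota_\pi, [\iota_\pi, d_{dR}]]\); explicitly, both sides reduce to \(\iota_\pi^2 d_{dR} - 2\,\iota_\pi d_{dR} \iota_\pi + d_{dR} \iota_\pi^2\). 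The Poisson condition \([\pi,\pi]_{SN} = 0\) then forces the left-hand side of the displayed identity, and hence the desired iterated bracket, to vanish.

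The main obstacle I anticipate is purely bookkeeping: reconciling the sign conventions of \cite{Koszul85} with the homological degree conventions fixed at the start of \cref{sec:operads}, and verifying carefully that \([\iota_\pi, [\iota_\pi, d_{dR}]]\) and \([[\iota_\pi, d_{dR}], \iota_\pi]\) indeed agree up to an overall sign after full expansion (this is where the graded Jacobi identity alone is insufficient and one must argue by direct expansion). Once the signs are tracked, the mathematical content is entirely contained in Koszul's calculus, which is exactly the reference cited in the statement.
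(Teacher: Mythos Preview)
Your proposal is correct and matches the paper's approach: the paper does not give a detailed proof but simply states that the proposition is ``proven straightforwardly using Cartan calculus'' and cites \cite{Koszul85}. Your sketch spells out exactly this Cartan-calculus argument, invoking the standard identity relating \(\iota_{[X,Y]_{SN}}\) to the nested commutator \([[\iota_X,d_{dR}],\iota_Y]\) and specializing to \(X=Y=\pi\), which is precisely what the citation points to.
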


\medskip\noindent
Exact \(\BV\)-algebras are algebras over the (dg) operad
\[
\eBV \defeq \dfrac{\BV \vee \mathcal{T}(\nabla)}
{\big([\triangle, \nabla], \
	\nabla \ \text{second-order}\big)},
\]
whose differential is determined by
\begin{equation}\label{eq:exactness}
	\dif \, \nabla \defeq \triangle.
\end{equation}
In contrast to \(\BV\)-algebras, the operator \(\triangle\) is always \(0\) in the homology of \(\eBV\)-algebras.
As we will see next, this is also the case in the homology of \(\cBV\)-algebras that are not \(\BV\)-algebras, not because \(\triangle\) is exact, but because it is not closed.

\subsubsection{Coexact \(\BV\)-algebras and pseudo-Riemannian manifolds}

Recall that the operad \(\cBV\) is given by
\[
\dfrac{\BV \vee \mathcal{T}(\square)}
{\big([\triangle, \square], \ \square \ \text{second-order}\big)},
\]
with differential determined by
\[
\dif \, \triangle = \square.
\]
Here the generator \(\triangle\) is \emph{coexact}, explicitly \(\dif^*(\square) = \triangle\), where \(\dif^*\) is the adjoint of \(\dif\) with respect to the canonical inner product.\footnote{
	Explicitly, this inner product is defined by the canonical basis of \(\cBV(1)\), which in degree \(\bars{\triangle}\) is \(\set{\mathrm u_{m} =\triangle \circ \square^{\circ m} \mid m \geqslant 0}\).
	We can verify the claim \(\dif^*(\square) = \triangle\) by writing \(\dif^*(\square) = \sum_{m} \alpha_{m} \cdot \mathrm u_{m}\) with \(\alpha_{m} = \angles{\mathrm u_{m}, \dif^*(\square)} = \angles{\dif(\mathrm u_{m}), \square} = \angles{\square^{\circ m+1}, \square}\), and concluding that
	\[
	\alpha_{m} =
	\begin{cases}
		1 & m = 0,\\
		0 & \text{otherwise}.
	\end{cases}
	\]}
This motivates the terminology \defn{coexact \(\BV\)-algebras} and the notation \(\cBV\) for the operad controlling them.
We remind the reader that \(\BV\)-algebras are precisely coexact dg \(\BV\)-algebras where the operator corresponding to \(\square\), the obstruction \(n\), vanishes.

We have the following analogue of \cref{prop:eBVPoisson} stating that pseudo-Riemannian manifolds canonically define coexact \(\BV\)-algebras.

\begin{proposition}\label{prop:deRhamCoexact}
	Let \(M\) be a manifold equipped with a nowhere-degenerate section \(g \in \Gamma(S^2 T^*M)\).
	Then the de Rham complex \(\Omega^{\bullet}(M)\) admits a canonical coexact \(\BV\)-algebra structure
	\[
	\big(\Omega^{\bullet}(M), d_{dR}, \wedge, d^\star\big),
	\]
	where \(d^\star\) is the Hodge codifferential.
\end{proposition}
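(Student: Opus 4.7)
The plan is to verify the three defining data of a coexact \(\BV\)-algebra from \cref{def:cbv_algebra}: the dg commutative algebra structure, the square-zero property of \(d^\star\), and the second-order relation~\eqref{eq:second-order}. That \((\Omega^\bullet(M), d_{dR}, \wedge)\) is a dg commutative algebra is classical for any smooth manifold. The identity \((d^\star)^2 = 0\) follows from \(d^2 = 0\) via the formula \(d^\star = \pm \star d \star\) together with the involutive property of the Hodge star, which squares to \(\pm \id\) depending on form-degree, ambient dimension, and signature.

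The bulk of the proof consists of verifying that \(d^\star\) is second-order with respect to \(\wedge\). My plan is to invoke Koszul's classical characterization: an odd operator \(\triangle\) satisfies~\eqref{eq:second-order} if and only if its failure to be a graded derivation of \(\wedge\), namely
\[
\{\alpha, \beta\} \defeq d^\star(\alpha \wedge \beta) - d^\star(\alpha) \wedge \beta - (-1)^{|\alpha|}\alpha \wedge d^\star(\beta),
\]
is itself a graded biderivation of \(\wedge\) in each variable. To compute this bracket explicitly, I would pass to a local chart and use the standard formula
\[
d^\star = -\sum_{i,j} g^{ij}\, \iota_{\partial_i} \nabla_{\partial_j},
\]
expressing \(d^\star\) as a metric-weighted sum of compositions of the degree-\(0\) derivation \(\nabla_{\partial_j}\) with the degree-\((-1)\) antiderivation \(\iota_{\partial_i}\). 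A direct calculation, tracking Koszul signs for the composition of an antiderivation with a derivation, yields
\[
\{\alpha, \beta\} = -\sum_{i,j} g^{ij}\, \bigl( \iota_{\partial_i}(\alpha) \wedge \nabla_{\partial_j}(\beta) + (-1)^{|\alpha|}\nabla_{\partial_j}(\alpha) \wedge \iota_{\partial_i}(\beta) \bigr).
\]
Since each summand is built from an antiderivation applied to one argument and a derivation applied to the other, and since \(g^{ij}\) is symmetric, this bracket is manifestly a graded biderivation.

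The expected obstacle is the careful sign bookkeeping under the Koszul convention across the interplay of the cohomological grading on forms, the degree-\(0\) derivation \(\nabla\), and the degree-\((-1)\) antiderivation \(\iota\); the sign $(-1)^{(|a|+1)|b|}$ in~\eqref{eq:second-order} is the indicator that these must be precisely compatible. A secondary check is that the locally computed bracket glues into a globally well-defined operator, which is ensured by the intrinsic character of the final expression, depending only on the inverse metric and the Levi-Civita connection.
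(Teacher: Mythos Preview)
The paper does not supply a proof of this proposition; it is stated without argument, in parallel with \cref{prop:eBVPoisson}, which is attributed to Koszul and described as ``proven straightforwardly using Cartan calculus.'' Your proposal therefore goes well beyond what the paper provides.

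Your argument is correct. One small streamlining: once you have written \(d^\star = -g^{ij}\,\iota_{\partial_i}\nabla_{\partial_j}\) as a sum of composites of two graded derivations of \((\Omega^\bullet(M),\wedge)\), the second-order property is automatic and requires no explicit computation of the bracket. In Koszul's sense, derivations are operators of order \(\leqslant 1\), and the composite of an order-\(\leqslant i\) operator with an order-\(\leqslant j\) operator has order \(\leqslant i+j\); hence each \(\iota_{\partial_i}\nabla_{\partial_j}\) is second-order, and so is any \(C^\infty\)-linear combination. This bypasses the Koszul-sign bookkeeping you flag as the expected obstacle. A minor caveat: your argument for \((d^\star)^2=0\) via \(d^\star=\pm\star d\star\) presumes orientability; in the non-orientable case one works with the density-valued Hodge star, but the conclusion is unchanged.
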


The parallel, or duality, observed here between Poisson and pseudo-Riemannian manifolds is best understood in the language of supergeometry, although we do not develop this perspective further.

\subsubsection{Relationship between \(\BV\)-type operads}

We conclude this subsection making explicit the relationship between the operads we have discussed so far.

\begin{theorem}\label{thm:Homology}
	The canonical maps induced by the identification of generators define the commutative diagram
	\[
	\begin{tikzcd}[row sep=small]
		\cBV \arrow[r, two heads] \arrow[rr, bend left=35, "\sim"] &
		\BV \arrow[r, hook] &
		\eBV \\
		& \Com \arrow[ul, hook', "\sim"', bend left] \arrow[ur, hook, "\sim", bend right] &
	\end{tikzcd}
	\]
	where \(\hookrightarrow\) denotes an injection, \(\twoheadrightarrow\) a surjection, and \(\xrightarrow{\sim}\) a quasi-isomorphism.
\end{theorem}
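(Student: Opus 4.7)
The plan is to verify each arrow individually and then deduce the composite quasi-isomorphism via a 2-out-of-3 argument.

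\textbf{Definitions and commutativity.} I would first specify each morphism on generators and check well-definedness. The inclusions from \(\Com\) send \(\m \mapsto \m\); the surjection \(\cBV \twoheadrightarrow \BV\) sends \(\m, \b, \triangle\) to themselves and both \(\c\) and \(\square\) to \(0\); and \(\BV \hookrightarrow \eBV\) is defined on generators. For \(\cBV \twoheadrightarrow \BV\), compatibility of the differentials holds since \(\dif\triangle = \square \mapsto 0\) matches the trivial differential of \(\BV\), and every relation of \cref{lemma:presentationcBV} involving \(\c\) or \(\square\) either collapses to \(0=0\) or reduces to a \(\BV\)-relation of \cref{lemma:presentationBV}. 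Commutativity of the diagram is immediate since every route from \(\Com\) sends \(\m\) to \(\m\).

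\textbf{Surjectivity, injectivity, and PBW normal forms.} Surjectivity of \(\cBV \twoheadrightarrow \BV\) is manifest. The three injections are all consequences of PBW-type normal forms. For \(\cBV\), I would iteratively apply the bracket-\(\triangle\), bracket-\(\square\), Leibniz-\(\triangle\), and Leibniz-\(\square\) relations to move every \(\triangle\) and \(\square\) so that they act directly on the inputs of a tree whose internal nodes are \(\m\), \(\b\), or \(\c\); compatibility 1 then orders the \(\triangle\)'s and \(\square\)'s at each leaf. The resulting basis contains, as the sub-basis with no \(\square\)'s or \(\c\)'s, the PBW basis of \(\BV\), and, as the sub-basis with no operator beyond \(\m\), the basis of \(\Com\). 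For \(\eBV\), the analogous distributive law yielding a basis that extends those of \(\BV\) and \(\Com\) is established in \cite{GuanMuro23}.

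\textbf{Quasi-isomorphisms from \(\Com\).} The non-\(\m\) generators of \(\cBV\) come in \(\dif\)-acyclic pairs \((\triangle, \square)\) and \((\b, \c)\); those of \(\eBV\) come, analogously, in the pair \((\nabla, \triangle)\) together with the bracket pair induced by the second-order property for \(\nabla\). Using the PBW normal form, I would filter \(\cBV(n)\) (and \(\eBV(n)\)) by the total number of non-\(\m\) generators occurring, and analyze the resulting spectral sequence. Within each weight the differential acts only along the acyclic pairs, so every positive-weight stratum is acyclic; this is the natural generalization of the arity-one computation that \(\mathbb{K}[\square]\otimes\Lambda[\triangle]\) with \(\dif\triangle = \square\) has homology \(\mathbb{K}\). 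Only the weight-zero part, spanned by pure \(\m\)-monomials, survives, and this is precisely the image of \(\Com\).

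\textbf{Composite and main obstacle.} With \(\Com \hookrightarrow \cBV\) and \(\Com \hookrightarrow \eBV\) both quasi-isomorphisms and the triangle through \(\BV\) commutative, the 2-out-of-3 property gives \(\cBV \xrightarrow{\sim} \eBV\). The main obstacle is the PBW analysis for \(\cBV\) and \(\eBV\): for \(\eBV\) it can be imported from \cite{GuanMuro23}, while for \(\cBV\) one must check, by direct calculation with the eleven relators of \cref{lemma:presentationcBV}, that the proposed reduction terminates on a genuine basis and is compatible with the weight filtration.
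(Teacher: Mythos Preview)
Your overall strategy---weight grading by the non-\(\m\) generators, acyclicity in positive weight, then 2-out-of-3 for the composite---matches the paper's. The substantive difference is in how the acyclicity of the positive-weight part of \(\cBV\) is established.

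The paper works in the three-generator presentation \(\cBV=\cT(\m,\triangle,\square)/(\cdots)\) and builds an explicit contracting homotopy rather than a basis. The assignment \(\m\mapsto 0\), \(\triangle\mapsto 0\), \(\square\mapsto\triangle\) extends to a well-defined derivation \(h\) of \(\cBV\) (one checks directly that it preserves each relator), and then \(H\coloneqq\tfrac{1}{k}\,h\) on weight \(k\) satisfies \(H\dif+\dif H=\id_{\cBV}-ip\), since \(h\dif+\dif h\) is the derivation that is the identity on \(\triangle,\square\) and zero on \(\m\), hence acts as multiplication by the weight. This yields a deformation retract of dg \(\Sy\)-modules onto \(\Com\) with no spectral sequence and no normal form. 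The paper also remarks that the same device gives an alternative proof for \(\eBV\), which is otherwise cited from \cite{GuanMuro23}.

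Your route via a PBW normal form is where the genuine difficulty sits, and it is heavier than ``direct calculation with the eleven relators.'' In the five-generator presentation the operation \(\c\) does \emph{not} satisfy Jacobi and \(\square\) is \emph{not} nilpotent (the paper points this out explicitly), so there is no evident confluent rewriting that pushes \(\triangle,\square\) past \(\c\)-vertices to the leaves; compatibility relations 2 and 3 do not supply one. Even granting a basis, your claim that ``within each weight the differential acts only along the acyclic pairs'' still needs the weight-\(k\) piece to split as a tensor product of two-term complexes, which again presupposes the basis. The contracting-homotopy argument bypasses all of this: the identity \(h\dif+\dif h=k\cdot\id\) on weight \(k\) holds irrespective of any normal form. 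Your proof sketch is correct in outline, but its acknowledged ``main obstacle'' is precisely the step the paper's method is designed to eliminate.
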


\begin{proof}
	The inclusion \(\Com \hookrightarrow \eBV\) is a quasi-isomorphism by \cite[Theorem~1.3]{GuanMuro23}.
	To analyze \(\Com \hookrightarrow \cBV\) we adapt the method of \cite[Section~2]{DrummondColeVallette13}, and we note that this argument also yields an alternative proof of the acyclicity of \(\Com \hookrightarrow \eBV\).

	The operad \(\cBV\) carries a weight grading defined by declaring that the total number of occurrences of \(\triangle\) and \(\square\) is the weight.
	Both the relations and the differential are homogeneous for this grading.
	Let \(\cBV^{[k]}\) denote the weight-\(k\) summand.
	Then \(\cBV^{[0]} \cong \Com\).

	Consider the assignment
	\[
	\cdot \mapsto 0,
	\qquad
	\triangle \mapsto 0,
	\qquad
	\square \mapsto \triangle.
	\]
	This is a morphism of quadratic data and therefore extends uniquely to a derivation \(h \colon \cBV \to \cBV\) preserving the weight.
	Define a degree \(1\) morphism of \(\Sy\)-modules \(H \colon \cBV \to \cBV\) by
	\[
	H \mid_{\cBV^{[0]}} \coloneqq 0
	\quad \text{and}\quad 
	H \mid_{\cBV^{[k]}} \coloneqq \frac{1}{k} \, h \mid_{\cBV^{[k]}} \ \ \text{for} \ \ 
	k \geqslant 1.
	\]

	Let
	\[
	\begin{tikzcd}[column sep=small]
		i \colon \Com \arrow[r, hook] & \cBV
	\end{tikzcd}
	\qquad\text{and}\qquad
	\begin{tikzcd}[column sep=small]
		p \colon \cBV \arrow[r, two heads] & \Com
	\end{tikzcd}
	\]
	denote the canonical inclusion and projection.
	Then \((i,p,H)\) is a deformation retract in the category of dg \(\Sy\)-modules, namely
	\[
	H \dif + \dif H
	=
	\id_{\cBV} - i p.
	\]

	The operator \(H \dif + \dif H\) is determined by its effect on generators:
	\[
	\cdot \mapsto 0,
	\qquad
	\triangle \mapsto \triangle,
	\qquad
	\square \mapsto \square.
	\]
	Thus \(i\) is a quasi-isomorphism.
	Finally, the 2-out-of-3 property of weak equivalences concludes the proof.
\end{proof}

\begin{remark}
	\cref{thm:Homology} shows that, for a pseudo-Riemannian manifold, the coexact BV-algebra structure on differential forms yields no homotopical invariants on de~Rham cohomology beyond the Massey products induced by the wedge product.
	The corresponding statement for the exact BV-algebra structure in the Poisson case was made in \cite{GuanMuro23}.
\end{remark}

\subsection{Koszul duality}\label{ss:koszul_duals}


Let \(E\) be an \(\Sy\)-module and let \(\cT^{c}(E)\) denote the cofree conilpotent cooperad on \(E\).
Let \(R \subset \cT^{c}(E)\) be an \(\Sy\)-submodule.
We define \defn{\(\cC(E; R)\)} to be the smallest subcooperad of \(\cT^{c}(E)\) containing \(E\) and \(R\).

\subsubsection{Homogeneous quadratic case}

The \defn{Koszul dual cooperad} of an operad \(\mathrm{O} \cong \P(E, R)\) presented by quadratic data \((E, R)\) is
\[
\mathrm{O}^\ac \defeq \cC(sE; s^{2}R),
\]
where \(s\) denotes the homological suspension of \(\Sy\)-modules.
The \defn{Koszul dual operad} is the shifted linear dual of the cooperadic suspension of \(\mathrm{O}^\ac\):
\begin{equation}\label{eq:koszul_dual_operad}
	\mathrm{O}^! \defeq \mathrm{S} \otimes_H \big(\mathrm{O}^{\ac}\big)^*,
\end{equation}
where \(\mathrm{S} \defeq \mathrm{End}_{\KK s^{-1}}\) is the suspension operad.

\begin{example}\label{ex:Comac}
	The Koszul dual cooperad and operad associated to the quadratic presentation \((V, S)\) of the operad \(\Com\) given in \cref{def:com_operad} are
	\[
	\Com^\ac \cong \Lie^c_1
	\quad\text{and}\quad
	\Com^! \cong \Lie,
	\]
	where \(\Lie^c_1\) stands for the cooperad encoding Lie coalgebras with cobrackets of degree \(1\) and \(\Lie\) is the operad encoding Lie algebras.
\end{example}

\subsubsection{Inhomogeneous quadratic case}

To deal with operads \(\mathrm{O} \cong \P(E,R)\) presented by inhomogeneous quadratic data, like the operad \(\BV\), one considers the projection \(\q \colon \cT(E) \to \cT(E)^{(2)}\) and defines its \defn{analogue quadratic data} as \((E, \q R)\).
The \defn{quadratic analogue operad} of \(\rO\) is \(\q\mathrm{O} \defeq \P(E,\q R)\).

\begin{definition}
	The \defn{quadratic-linear conditions} for a quadratic-linear presentation \((E, R)\) are
	\begin{description}
		\item[\rm (\(ql_1\))] \(R \cap E = 0\)~,
		\item[\rm (\(ql_2\))] \(\big\{R \circ_{(1)} E + E \circ_{(1)} R\big\} \cap \cT(E)^{(2)} \subset R \cap \cT(E)^{(2)} = 0\).
	\end{description}
\end{definition}

Intuitively, these conditions express respectively the minimality of the generators and the maximality of the weight two part of the relations.

\begin{lemma}[{\cite[Lemma~37]{GCTV12}}]\leavevmode
	\begin{enumerate}
		\item Condition~(\(ql_1\)) implies that the space of relations \(R\) can be written as the graph of a map \(\varphi \colon \q R \to E\).
		\item Condition~\((ql_2)\) implies that there exists a square-zero coderivation \(d_\varphi\) on the Koszul dual cooperad of the quadratic analogue presentation extending
		\[
		\cC\big(sE, s^2 \q R\big) \to s^2 \q R \xra{s^{-1}\varphi} sE.
		\]
	\end{enumerate}
\end{lemma}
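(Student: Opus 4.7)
The plan is to adapt the standard inhomogeneous Koszul duality argument, developed for operads in \cite{GCTV12}.

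For part (1), observe that the inclusion $R \subset E \oplus \cT(E)^{(2)}$ induces two projections $\pi_E \colon R \to E$ and $\pi_{(2)} \colon R \to \cT(E)^{(2)}$. Condition $(ql_1)$ identifies $\ker \pi_{(2)} = R \cap E$ with $0$, so $\pi_{(2)}$ is injective, with image $\q R$ by the very definition of the quadratic analogue data. Setting $\varphi \defeq -\pi_E \circ \pi_{(2)}^{-1} \colon \q R \to E$ then expresses every relator $r \in R$ uniquely as $X - \varphi(X)$ with $X \defeq \pi_{(2)}(r)$, which is the desired graph description.

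For part (2), the universal property of the cofree cooperad $\cT^c(sE)$ states that any $\Sy$-module map $\psi \colon \cT^c(sE) \to sE$ extends uniquely to a coderivation $d_\psi$. Take $\psi$ to be the projection $\cT^c(sE) \twoheadrightarrow s^2 \q R$ followed by $s^{-1}\varphi$, and let $d_\varphi$ be its coderivation extension; by construction $d_\varphi$ lowers the natural weight grading of $\cT^c(sE)$ by one. One then verifies that $d_\varphi$ preserves the subcooperad $\cC(sE, s^2 \q R)$: since this subcooperad is the smallest one containing the cogenerators $sE$ and $s^2 \q R$, and since $d_\varphi$ sends each of these into $\cC(sE, s^2 \q R)$, the standard cogenerator argument in the cofree cooperad propagates stability to all weights.

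It remains to show $d_\varphi^2 = 0$. As a composite of coderivations, $d_\varphi^2$ is itself a coderivation and is therefore determined by its corestriction onto $sE$; since it lowers the weight by two, this corestriction is concentrated on the weight-$3$ component of $\cC(sE, s^2 \q R)$. Unwinding the infinitesimal cocomposition of $\cT^c(sE)$, the weight-$3$ corestriction of $d_\varphi^2$ computes $\varphi$ applied to an element of $\big\{R \circ_{(1)} E + E \circ_{(1)} R\big\} \cap \cT(E)^{(2)}$, which by condition $(ql_2)$ is contained in $R \cap \cT(E)^{(2)} = 0$, yielding the claim.

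The main obstacle is the final identification, which matches the cooperadic iterate $d_\varphi^2$ on weight-$3$ elements with the operadic expression appearing in $(ql_2)$. The key observation is that applying $d_\varphi$ to a weight-$n$ element amounts, via infinitesimal cocomposition, to singling out a weight-$2$ subtree and contracting it using $s^{-1}\varphi$; iterating this process on weight-$3$ elements reproduces exactly the shape $R \circ_{(1)} E + E \circ_{(1)} R$, and the outer $\varphi$-contraction is precisely the corestriction to $sE$. The rest is formal bookkeeping in the cofree cooperad.
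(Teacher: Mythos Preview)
The paper does not supply a proof of this lemma; it is quoted directly from \cite[Lemma~37]{GCTV12}. Your sketch is precisely the standard argument given there and in \cite[Section~7.8]{LodayVallette12}, so it is correct and matches the intended reference.

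One small caveat: in your last step you invoke ``$R \cap \cT(E)^{(2)} = 0$'' as written in the paper, but in the standard formulation (and in the paper's own examples, e.g.\ the presentation of $\BV$, where associativity, Jacobi, etc.\ are purely quadratic) this intersection is not zero. The correct conclusion of $(ql_2)$ is that the element lands in $R \cap \cT(E)^{(2)} \subset \q R$, on which $\varphi$ vanishes because such relators have no linear part; this is what actually forces the corestriction of $d_\varphi^2$ to be zero. Your argument goes through with this adjustment.
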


The \defn{Koszul dual cooperad} of an operad \(\mathrm{O} \cong \P(E, R)\) presented by an inhomogeneous quadratic data satisfying the quadratic-linear conditions is
\[
\mathrm{O}^\ac \defeq \big(\q \mathrm{O}^\ac, d_\varphi\big) = \big(\cC(sE, s^2 \q R), d_\varphi\big).
\]
The \defn{Koszul dual operad} \(\mathrm{O}^!\) is defined as in the homogeneous case \eqref{eq:koszul_dual_operad}.

\begin{example}\label{ex:qlCondition}
	The inhomogeneous quadratic presentation of the operad \(\BV\) given in \cref{lemma:presentationBV}
	satisfies the quadratic-linear conditions: the Jacobi relation and the derivation relation were included in this presentation in order to satisfy Condition~\((ql_2)\).
	In this case, the map
	\(\varphi \colon \q R \to E\) is given by
	\[
	\DM{1}{2} - \MDL{1}{2} - \MDR{1}{2} \mapsto \BB{1}{2}
	\]
	and by \(0\) otherwise.
\end{example}

\subsubsection{Inhomogeneous dg quadratic case}

In order to work with the operad \(\cBV\), we need to develop a Koszul duality theory for \emph{differential graded} quadratic-linear data \((E_\bullet, R_\bullet)\).
If this data satisfies the quadratic-linear conditions, then there exists a chain map \(\psi \colon (\q R_\bullet, \partial) \to (E_\bullet, \partial)\), whose graph coincides with the space \(R_\bullet\) of quadratic-linear relations.
This chain map induces a square-zero coderivation \(d_\psi\) on the Koszul dual cooperad of the quadratic analogue
\(\cC(sE_\bullet, s^2 \q R_\bullet)\),
which extends the map
\[
\cC(sE_\bullet, s^2 \q R_\bullet) \twoheadrightarrow s^2 \q R_\bullet \xra{s^{-1}\psi} sE_\bullet.
\]
The internal differential \(\partial\) of the dg quadratic-linear data \(\big(E_\bullet, R_\bullet\big)\) also induces a square-zero coderivation \(d_1\) on \(\cC(sE_\bullet, s^2 \q R_\bullet)\).

\begin{lemma}\label{lemma:commutingcodiff}
	For any dg quadratic-linear data \(\big(E_\bullet, R_\bullet\big)\) satisfying the quadratic-linear conditions,
	the two codifferentials \(d_1\) and \(d_\psi\) of the cooperad \(\cC(sE_\bullet, s^2 \q R_\bullet)\) anti-commute.
\end{lemma}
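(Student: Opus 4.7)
The plan is to apply the standard principle that a coderivation on a cofree cooperad is entirely determined by its corestriction onto the cogenerators. First, since both $d_1$ and $d_\psi$ are coderivations of homological degree $-1$, their graded commutator
\[
D \defeq d_1 \, d_\psi + d_\psi \, d_1
\]
is again a coderivation on $\cC(sE_\bullet, s^2 \q R_\bullet)$, so it will suffice to prove $D = 0$.

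I would then work inside the ambient cofree cooperad $\cT^c(sE_\bullet)$, in which $\cC(sE_\bullet, s^2 \q R_\bullet)$ sits as a sub-cooperad. Both $d_1$ and $d_\psi$ lift to coderivations $\tilde d_1$ and $\tilde d_\psi$ on $\cT^c(sE_\bullet)$, uniquely determined by their corestrictions onto $sE_\bullet$: namely, $s\partial s^{-1}$ on $sE_\bullet$ (and zero on higher weights) for $\tilde d_1$, and $s^{-1}\psi s^{-2}$ on $s^2 \q R_\bullet$ (extended by zero elsewhere) for $\tilde d_\psi$. That these lifts restrict to $\cC(sE_\bullet, s^2 \q R_\bullet)$, so that $\tilde D \defeq \tilde d_1 \tilde d_\psi + \tilde d_\psi \tilde d_1$ agrees with $D$ on the sub-cooperad, is a consequence of condition $(ql_2)$, exactly as in \cite[Lemma~37]{GCTV12}.

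The core of the argument will then be a direct computation of the corestriction $\pi_{sE_\bullet} \circ \tilde D$. Since $\tilde d_\psi$ decreases weight by one while $\tilde d_1$ preserves weight, $\tilde D$ decreases weight by one; consequently, its projection onto the weight-one component $sE_\bullet$ can only receive contributions from weight-two inputs, that is, from $s^2 \q R_\bullet$. For an element $X = s^2 x$ with $x \in \q R_\bullet$, the Koszul sign convention $\partial s = -s\partial$ will yield
\[
\pi_{sE_\bullet}\bigl(\tilde D(s^2 x)\bigr) \;=\; -\,s\,\partial \psi(x)\;+\; s\,\psi(\partial x)\;=\; s\bigl(\psi \partial - \partial \psi\bigr)(x),
\]
which vanishes because $\psi$ is a chain map---a reformulation of the hypothesis that the internal differential $\partial$ preserves the relation space $R_\bullet$. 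Since a coderivation on a cofree cooperad with vanishing corestriction onto the cogenerators must itself be zero, it follows that $\tilde D = 0$, and therefore $D = 0$.

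The main technical obstacle will be verifying that $\tilde d_1$ and $\tilde d_\psi$ indeed restrict to coderivations of the sub-cooperad $\cC(sE_\bullet, s^2 \q R_\bullet)$; apart from this and a careful tracking of signs through the suspensions, the entire content of the lemma reduces to the chain-map identity $\psi \partial = \partial \psi$.
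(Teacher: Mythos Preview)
Your proposal is correct and follows essentially the same approach as the paper: reduce to the projection onto $sE_\bullet$, use the weight count to restrict attention to $s^2\q R_\bullet$, and conclude from the fact that $\psi$ is a chain map. The paper's proof is slightly more direct in that it works entirely inside the sub-cooperad $\cC(sE_\bullet, s^2\q R_\bullet)$---the principle that a coderivation is determined by its corestriction onto cogenerators holds for any conilpotent cooperad cogenerated by $sE_\bullet$, not just the cofree one---so your detour through $\cT^c(sE_\bullet)$ and the ``main technical obstacle'' of checking that the lifts restrict can be bypassed.
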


\begin{proof}
	The commutator \([d_1, d_\psi] \defeq d_1 d_\psi + d_\psi d_1\) is a coderivation of the quadratic
	cooperad \(\cC\big(sE_\bullet, \allowbreak s^2 \q R_\bullet\big)\), therefore it is completely characterised by its projection onto the
	space \(sE_\bullet\) of cogenerators.
	The coderivation \(d_1\) preserves the weight and the coderivation \(d_\psi\) lowers the weight by \(1\), so it is enough to compute
	\([d_1, d_\psi]= d_1 d_\psi + d_\psi d_1\) on \(s^2\q R_\bullet\).
	This relation is a direct consequence of the fact that the map \(\psi\) preserves the differential induced by \(\partial\) since it coincides with the following commutative diagram:
	\[
	\begin{tikzcd}[column sep=huge, row sep=large]
		s^2\q R_\bullet \arrow[r, "d_\psi=s^{-1}\psi"] \arrow[d, "d_1=s^2\partial"'] & s E_\bullet\ \arrow[d, "-d_1=s\partial"]\\
		s^2\q R_\bullet \arrow[r, "d_\psi=s^{-1}\psi"] & s E_\bullet.
	\end{tikzcd}
	\]
	This concludes the proof.
\end{proof}

\cref{lemma:commutingcodiff} allows us to introduce the following.

\begin{definition}
	The \defn{Koszul dual cooperad} of an operad \(\mathrm{O} \cong \P(E_\bullet, R_\bullet)\) presented by inhomogeneous dg quadratic data satisfying the quadratic-linear conditions is
	\[
	\mathrm{O}^\ac \defeq \big(\q \mathrm{O}^\ac, d_1+d_\psi\big) = \big(\cC(sE_\bullet, s^2 \q R_\bullet), d_1+d_\psi\big).
	\]
	The \defn{Koszul dual operad} \(\mathrm{O}^!\) is defined by the same formula \eqref{eq:koszul_dual_operad} as in the homogeneous case.
\end{definition}

\subsubsection{Koszul resolution}\label{ss:cofibrant_operads}

Homotopy \(\rO\)-algebras are defined as algebras over a \defn{cofibrant replacement} of the operad \(\rO\), that is, a cofibrant operad \(\rQ\rO\) with an acyclic fibration
\(
\begin{tikzcd}[column sep=15pt]
	\rQ\rO \arrow[r, ->>, "\sim\ "] & \rO.
\end{tikzcd}
\)
A \defn{Koszul resolution} is a cofibrant replacement of the specific form
\begin{equation}\label{eq:Omega Oac->O}
	\begin{tikzcd}[column sep=22pt]
		\Omega\rO^{\ac} \arrow[r, ->>, "\sim\ "] & \rO,
	\end{tikzcd}
\end{equation}
where \(\rO^{\ac}\) is the Koszul dual cooperad associated to a chosen presentation of \(\rO\), \(\Omega\) is the cobar construction, and the morphism \eqref{eq:Omega Oac->O} is induced by the canonical \textit{twisting morphism} \(\rO^{\ac} \to \rO\).

\subsubsection{Twisting morphisms}

A \defn{twisting morphism} between a cooperad \(\rC\) and a operad \(\rO\) is a degree \(-1\) map of \(\Sy\)-modules \(\tau \colon \rC \to \rO\) satisfying the Maurer--Cartan equation in the convolution dg pre-Lie algebra \(\Hom_{\Sy}(\rC,\rO)\):
\[
\partial_{\rO} \tau + \tau \partial_{\rC} + \tau \star \tau = 0.
\]

\begin{lemma}\label{lemma:KDdgInhomo}
	Let \(\big(E_\bullet, R_\bullet\big)\) be dg quadratic-linear data satisfying the quadratic-linear conditions.
	Then, the canonical \(\Sy\)-module map
	\[
	\kappa \colon \mathcal{C}\big(sE_\bullet, s^2 qR_\bullet\big) \twoheadrightarrow sE_\bullet
	\xrightarrow{s^{-1}} E_\bullet \to \mathcal{P}\big(E_\bullet, R_\bullet\big)
	\]
	is a twisting morphism.
	Additionally, \(\kappa\) induces a twisting morphism
	\[
	\bar{\kappa} \colon \big(H_\bullet(\rO^{\ac}, d_1), \bar{d}_\psi\big) \to \big(H_\bullet(\rO, \partial), 0\big).
	\]
	where \(\bar{d}_\psi\) is the differential in the \(d_1\)-homology of \(\rO^\ac\) induced by \(d_\psi\).
\end{lemma}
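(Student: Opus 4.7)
The plan is to verify the Maurer--Cartan equation
\[
\partial_{\rO}\kappa + \kappa \circ (d_1 + d_\psi) + \kappa \star \kappa = 0
\]
weight by weight on the Koszul dual cooperad \(\cC(sE_\bullet, s^2 \q R_\bullet)\), and then to deduce the second claim by restricting to \(d_1\)-cycles and passing to \(d_1\)-homology. The essential reduction is that \(\kappa\) is concentrated in cooperadic weight one: it vanishes on the weight-zero part and on every component of weight \(\geqslant 2\). This collapses most terms to zero for trivial reasons and leaves only two genuine checks.

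In weight one, the partial composition \(\star\) requires two factors so \(\kappa \star \kappa = 0\), and \(d_\psi\) also vanishes there because it strictly lowers weight by one. The equation reduces to \(\partial_\rO \kappa + \kappa d_1 = 0\) on \(sE_\bullet\), which is just the compatibility of the internal differential \(\partial\) on \(E_\bullet\) with the suspension, encoded in the very definition of \(d_1\). In weight two, \(\partial_\rO \kappa\) and \(\kappa d_1\) vanish since \(\kappa = 0\) there, and the equation reduces to \(\kappa \star \kappa + \kappa d_\psi = 0\) on \(s^2 \q R_\bullet\). Evaluated on \(s^2 r\), the first term returns the class of the quadratic element \(r\) in \(\rO = \cT(E_\bullet)/(R_\bullet)\), and the second returns the class of \(\psi(r) \in E_\bullet\); since condition \((ql_1)\) presents \(R_\bullet\) as the graph of \(\psi\), these two classes agree up to the sign dictated by our conventions, and the two terms cancel. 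In weights \(\geqslant 3\) every summand is zero because \(\kappa \star \kappa\) lands in weight two and the differentials of \(\rO^\ac\) change weight by at most one.

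For the induced twisting morphism I would proceed in two steps. First, the weight-one computation above already shows that \(\kappa\) is a chain map from \((\rO^\ac, d_1)\) to \((\rO, \partial)\), since elsewhere both composites vanish; it therefore descends to a map \(\bar\kappa\) on \(d_1\)-homology. By \cref{lemma:commutingcodiff}, \(d_\psi\) anti-commutes with \(d_1\), hence preserves \(d_1\)-cycles and \(d_1\)-boundaries and induces a square-zero differential \(\bar d_\psi\) on \(H_\bullet(\rO^\ac, d_1)\). Second, to obtain the Maurer--Cartan equation for \(\bar\kappa\), apply the Maurer--Cartan equation for \(\kappa\) to a \(d_1\)-cycle \(x\): the term \(\kappa d_1(x)\) vanishes, and \(\partial_\rO \kappa(x)\) is a \(\partial\)-boundary and therefore represents zero in \(H_\bullet(\rO, \partial)\). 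What remains is
\[
\bar\kappa(\bar d_\psi[x]) + (\bar\kappa \star \bar\kappa)[x] = 0,
\]
which is precisely the twisting morphism condition relative to the zero differential on the target.

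The main technical obstacle is the sign bookkeeping in the weight-two verification, where one must track carefully the signs introduced by the two suspensions defining \(d_1 = -s\partial s^{-1}\) on \(sE_\bullet\) and the extension \(d_\psi\) of \(s^{-1}\psi\) on \(s^2\q R_\bullet\), and confirm that these match the sign produced by \(\star\) in the convolution dg pre-Lie algebra \(\Hom_\Sy(\rO^\ac, \rO)\). Once the conventions are fixed, the remainder of the argument is a straightforward extension of the inhomogeneous quadratic case of \cite{GCTV12}, the only new ingredient being the internal differential \(d_1\) on the cogenerators, which is handled uniformly because it preserves the cooperadic weight.
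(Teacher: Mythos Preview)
Your proposal is correct and follows essentially the same weight-by-weight approach as the paper's proof: both reduce to the compatibility \(\partial s^{-1}+s^{-1}(-\partial)=0\) in weight one and to the cancellation \(\kappa d_\psi + \kappa\star\kappa = 0\) on \(s^2\q R_\bullet\) in weight two, then observe that the chain-level Maurer--Cartan equation descends to \(d_1\)-homology. The only cosmetic difference is that the paper splits the weight-two case explicitly into homogeneous relations (where \(\kappa\star\kappa\) lands in the ideal \((R_\bullet)\) and \(d_\psi\) vanishes) and inhomogeneous ones (where \(\rho+\varepsilon\in R_\bullet\) forces the cancellation in \(\rO\)), while you treat both at once via the graph description.
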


\begin{proof}
	We have to show that the assignment \(\kappa\) satisfies the Maurer--Cartan equation
	\[
	\dif \kappa + \kappa\,(d_1 + d_\psi) + \kappa \star \kappa = 0.
	\]
	By definition of the various maps, the left-hand side of this equation vanishes on elements of weight \(0\) and of weight greater or equal to \(3\).
	For elements \(sE_\bullet\) of weight one, the only possibly non-trivial terms are the first two, which are equal to
	\[
	\partial s^{-1} + s^{-1}(-\partial) = 0.
	\]
	For elements of \(s^2 qR_\bullet\) that are also in \(s^2 R_\bullet\), that is homogeneous quadratic relations, their image under \(\kappa \star \kappa\) vanishes in \(\rO\) since they are initial homogeneous quadratic relations there.
	Their image under \(\kappa d_\psi\) vanishes by definition of \(\psi\), and their image under \(\dif \kappa + \kappa d_1\) vanishes for weight reasons.

	For elements \(s^2 \rho\) of \(s^2 qR_\bullet\) that are not in \(s^2 R_\bullet\), that is inhomogeneous quadratic relations, there exists a non-trivial \(\varepsilon \in E_\bullet\) such that \(\rho + \varepsilon \in R_\bullet\).
	In this case, the image of \(s^2 \rho\) under \(\dif \kappa + \kappa d_1\) still vanishes for weight reasons.
	The remaining non-trivial part of the Maurer--Cartan equation is equal to
	\[
	\kappa\big(d_\psi(s^2\rho)\big) + (\kappa \star \kappa)(s^2\rho)
	=
	\kappa(-s\varepsilon) + \varepsilon
	=
	-\varepsilon + \varepsilon
	=
	0
	\]
	in \(\rO\).

	Finally, since the internal codifferential \(d_1\) preserves the weight grading, the map \(\bar{\kappa}\) is well defined and equal to
	\[
	\bar{\kappa} \colon H_\bullet\big(\rO^{\ac}, d_1\big) \twoheadrightarrow H_\bullet\big(s E_\bullet,-\partial\big)
	\xrightarrow{s^{-1}} H_\bullet\big(E_\bullet,\partial\big) \to H_\bullet\big(\rO, \partial\big).
	\]
	The Maurer--Cartan equation on the chain level satisfied by \(\kappa\) implies the Maurer--Cartan equation on homology satisfied by \(\bar{\kappa}\).
\end{proof}


\subsubsection{Koszul property}

A twisting morphism \(\tau \colon \rC \to \rO\) is called \defn{Koszul} when its associated operad morphism \(\Omega\rC \to \rO\), which is always a fibration, is a weak equivalence.
This morphism is the unique operad map whose restriction to the generators \(s^{-1}\overline{\rC} \subset \Omega\rC\) sends \(s^{-1}c\) to \(\tau(c)\) for each \(c \in \overline{\rC}\).

\begin{definition}
	An operad \(\rO = \mathcal{P}\big(E_\bullet, R_\bullet\big)\) given by dg quadratic-linear data satisfying the
	quadratic-linear conditions is called \defn{Koszul} when \(\bar{\kappa}\) is a Koszul morphism.
\end{definition}

\begin{theorem}\label{t:koszul_main}
	The cobar construction \(\Omega\rO^\ac\) of the Koszul dual cooperad of a Koszul operad \(\rO = \mathcal{P}\big(E_\bullet, R_\bullet\big)\) is cofibrant and the canonical map
	\[
	\begin{tikzcd}[column sep=0.7cm]
		\rO_\infty \defeq \Omega \rO^{\ac}
		\arrow[r, ->>, "\sim"] & \rO
	\end{tikzcd}
	\]
	is an acyclic fibration.
\end{theorem}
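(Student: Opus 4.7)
The plan is to establish the three assertions separately: cofibrancy of $\Omega\rO^\ac$, the fibration property of the canonical map $\Omega\rO^\ac \to \rO$, and its being a weak equivalence. The first two are formal, while the third is where the Koszul hypothesis is essentially used.

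For cofibrancy, I would invoke the general fact that the cobar construction of an augmented conilpotent dg cooperad is quasi-free on its cogenerators $s^{-1}\overline{\rO^\ac}$, and that such quasi-free operads are cofibrant in Hinich's model structure. \cref{lemma:commutingcodiff} guarantees that $\big(\rO^\ac, d_1 + d_\psi\big)$ is a well-defined dg cooperad, so the cobar is well-defined. For the fibration property, it suffices to observe that fibrations in Hinich's model structure are aritywise surjections, and that the canonical map sends, for each $e \in E_\bullet$, the generator $s^{-1}(se)$ to $e \in \rO$; since $E_\bullet$ generates $\rO$ as an operad, surjectivity in every arity follows.

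For the weak equivalence, my strategy is to transport the Koszul hypothesis, which asserts that $\Omega \bar{\kappa} \colon \Omega\big(H_\bullet(\rO^\ac, d_1), \bar{d}_\psi\big) \to H_\bullet(\rO, \partial)$ is a quasi-isomorphism, to the chain level via a spectral sequence argument. Concretely, I would filter $\Omega\rO^\ac$ by the weight coming from the cooperad grading on $\rO^\ac$ (counting generators from $E_\bullet$). Since $d_\psi$ strictly lowers this weight while $d_1$ and the cobar differential preserve it, the $E^0$-page of the associated spectral sequence is the cobar construction of the quadratic analogue $\q\rO^\ac$ with only the internal differential $d_1$. Taking $d_1$-homology produces a cobar construction on $\big(H_\bullet(\rO^\ac, d_1), \bar{d}_\psi\big)$, at which point the Koszul hypothesis applies. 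A parallel filtration on $\rO$ by the minimal number of generators of $E_\bullet$ required to express an element, whose associated graded is $\q\rO$ by the quadratic-linear conditions $(ql_1)$ and $(ql_2)$, allows the comparison to be made, and the Koszul hypothesis forces the induced map of $E^\infty$-pages to be an isomorphism.

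I expect the principal obstacle to be verifying convergence together with the careful bookkeeping of the two interacting gradings, namely the weight on $\rO^\ac$ and the internal homological grading induced by $\partial$ on $E_\bullet$. Convergence on the cobar side is secured by the non-negativity and exhaustion of the weight filtration, and on the $\rO$ side by the conilpotent character of the generators. Identifying the $E^1$-page with a cobar of the $d_1$-homology requires a standard interchange argument, and it is precisely at this step that the quadratic-linear conditions, already used in \cref{lemma:commutingcodiff} to ensure that $d_1$ and $d_\psi$ anti-commute, play the decisive role.
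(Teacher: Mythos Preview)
Your overall architecture matches the paper's: cofibrancy via quasi-freeness, fibration via surjectivity on generators, and a spectral sequence to transport the Koszul hypothesis from homology to the chain level. The gap is in the choice of filtration.

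You correctly note that, with the total weight filtration, both $d_1$ and the cobar differential $d_2$ preserve filtration degree while $d_\psi$ lowers it. But then $d^0 = d_1 + d_2$, not $d_1$ alone, so the $E^0$-page is $\Omega(\q\rO^\ac, d_1)$ \emph{with its full cobar differential}. ``Taking $d_1$-homology'' is therefore not the computation of $E^1$, and you do not directly land on $\Omega H_\bullet(\rO^\ac, d_1)$; a second filtration or a separate argument would be needed. The paper instead filters by \emph{syzygy degree}: for a tree, this is the sum over vertices of (weight~$-\,1$). Now $d_1$ preserves syzygy degree while both $d_\psi$ and $d_2$ lower it by $1$, so $d^0 = d_1$ alone. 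The operadic K\"unneth theorem then identifies $E^1$ with $\Omega H_\bullet(\rO^\ac, d_1)$, carrying $d^1 = d_2 + \bar d_\psi$, which is exactly the cobar construction of the homology cooperad.

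On the $\rO$ side the paper also proceeds differently: it uses the \emph{trivial} filtration $0 = \mathcal{G}_{-1} \subset \mathcal{G}_0 = \rO = \mathcal{G}_1 = \cdots$, so that $(E^0,d^0)\cong (\rO,\partial)$ and $(E^1,d^1)\cong (H_\bullet(\rO,\partial),0)$. The induced map on $E^1$ is then precisely $g_{\bar\kappa}$, a quasi-isomorphism by hypothesis, and convergence (bounded below, exhaustive) finishes the argument. Your PBW-style weight filtration on $\rO$ with associated graded $\q\rO$ is the approach familiar from the non-dg inhomogeneous theory; it can be made to work, but it requires as input that $\Omega(\q\rO^\ac,d_1)\to(\q\rO,\partial)$ is a quasi-isomorphism, which is related to but not literally the hypothesis as stated.
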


\begin{proof}
	The twisting morphism \(\kappa\) of \cref{lemma:KDdgInhomo} induces a morphism of operads
	\[
	g_\kappa \colon \rO_\infty = \Omega \rO^{\ac} \twoheadrightarrow \rO,
	\]
	which is surjective since it reaches all the generators of \(\rO\), so it is a fibration.
	Let us recall that the cobar construction of the Koszul dual cooperad is given by
	\[
	\Omega \rO^{\ac} =
	\Big(
	\mathcal{T}\big(s^{-1}\overline{\mathcal{C}}\big(sE_\bullet, s^2 qR_\bullet\big)\big),
	d_1 + d_\psi + d_2
	\Big),
	\]
	where we use the same notation for the differentials induced by \(d_1\) and \(d_\psi\), by a slight abuse of notation.
	In order to prove that the map \(g_\kappa\) is a quasi-isomorphism, we consider, on the left-hand side, the increasing filtration
	\[
	\mathcal{F}_k \Omega \rO^{\ac} \defeq \bigoplus_{l \leqslant k} \Omega_l \rO^{\ac},
	\qquad
	0 = \mathcal{F}_{-1} \subset \mathcal{F}_0 \subset \mathcal{F}_1 \subset \mathcal{F}_2 \subset \cdots \subset \mathcal{F}_k \subset \mathcal{F}_{k+1} \subset \cdots \subset \Omega\rO^{\ac}
	\]
	given by the syzygy degree, which is defined for every element of \(\Omega\rO^{\ac}\) by the sum over the vertices of the weight of the labeling elements minus \(1\), see \cite[Section~7.3]{LodayVallette12}.
	The differential \(d_1\) preserves this filtration and the two differentials \(d_\psi\) and \(d_2\) lower it by \(1\).
	On the right-hand side, we consider the trivial filtration
	\[
	0 = \mathcal{G}_{-1} \subset \mathcal{G}_0 = \rO = \mathcal{G}_1 = \cdots = \mathcal{G}_k = \cdots = \rO.
	\]
	The assignment \(g_\kappa\) obviously preserves the respective filtrations.
	On the left-hand side, the first page of the associated spectral sequence is isomorphic to
	\[
	\big(E^0 \Omega \rO^{\ac}, d^0\big) \cong \Big(\mathcal{T}\big(s^{-1}\overline{\mathcal{C}}\big(sE_\bullet,
	s^2 qR_\bullet\big)\big), d_1\Big).
	\]
	The operadic K\"unneth theorem \cite[Proposition~6.2.3]{LodayVallette12} implies that
	the second page of this spectral sequence is isomorphic to the cobar construction of the homology of the Koszul dual cooperad equipped with the codifferential \(\bar{d}_\psi\):
	\[
	\big(E^1 \Omega \rO^{\ac}, d^1\big) \cong \Omega H_\bullet\big(\rO^\ac, d_1\big).
	\]
	On the right-hand side, the first page of the associated spectral sequence is isomorphic to
	\[
	\big(E^0 \rO, d^0\big) \cong \big(\rO, \dif\big),
	\]
	so its second page is isomorphic to the homology operad equipped with trivial differential:
	\[
	\big(E^1 \rO, d^1\big) \cong \big(H_\bullet(\rO, \partial), 0\big).
	\]
	The induced map
	\[
	\begin{tikzcd}
		E^1 g_\kappa = g_{\bar\kappa} \colon \Omega H_\bullet\big(\rO^\ac, d_1\big)
		\arrow[r, "\sim"] & H_\bullet(\rO, \partial)
	\end{tikzcd}
	\]
	is the quasi-isomorphism of operads corresponding to the Koszul morphism
	\(\bar{\kappa} \colon H_\bullet\big(\rO^{\ac}, d_1\big) \to H_\bullet\big(\rO,\partial\big)\).
	Since these two filtrations are bounded below and exhaustive, their convergence shows that the morphism \(g_\kappa\) is a quasi-isomorphism by \cite[Theorem~5.2.12]{WeibelBook}.

	This Koszul type resolution
	\[
	\begin{tikzcd}
		\rO_\infty = \Omega \rO^{\ac}
		=
		\Big(
		\mathcal{T}\big(s^{-1}\overline{\mathcal{C}}\big(sE_\bullet, s^2 qR_\bullet\big)\big), d_1 + d_\psi + d_2
		\Big)
		\arrow[r, ->>, "\sim"] & \rO
	\end{tikzcd}
	\]
	is quasi-free on a space of generators non-negatively graded by the syzygy degree, so it is cofibrant.
\end{proof}

\subsection{Application to the operad \(\cBV\)}\label{ss:cbv}

\subsubsection{Quadratic-linear conditions for \(\cBV\)}\label{subsubsec:QLcBV}

\begin{lemma}\label{lemme:qlsforcBV}
	The inhomogeneous quadratic presentation \((E_\bullet, R_\bullet)\) of the operad \(\cBV\) satisfies the quadratic-linear conditions.
\end{lemma}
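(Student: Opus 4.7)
The plan is to verify both quadratic-linear conditions by direct inspection of the eleven relators listed in \cref{lemma:presentationcBV}, reducing the work for condition $(ql_2)$ to the behaviour of the two genuinely quadratic-linear relators and invoking the analogous verification for $\BV$ recorded in \cref{ex:qlCondition}.

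Condition $(ql_1)$ will follow by inspection. Each relator in $R_\bullet$ carries at least one summand of weight two: the pure quadratic relators (associativity, square-zero, Leibniz~$\triangle$, Leibniz~$\square$, Jacobi, derivation, and compatibility~1, 2 and 3) are entirely of weight two, whereas the two quadratic-linear relators bracket~$\triangle$ and bracket~$\square$ each couple their linear summand $b$ or $c$ with three weight-two summands. Consequently no non-trivial linear combination of relators can lie purely in $E_\bullet$, so $R_\bullet \cap E_\bullet = 0$.

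For $(ql_2)$, the first observation is that only the quadratic-linear relators can contribute weight-two elements to $\{R_\bullet \circ_{(1)} E_\bullet + E_\bullet \circ_{(1)} R_\bullet\} \cap \cT(E_\bullet)^{(2)}$, since composing a purely quadratic relator with a generator lands in weight three. The task therefore reduces to analysing compositions of bracket~$\triangle$ and bracket~$\square$ with each of the five generators $m, \triangle, b, c, \square$ on either side: one writes out each full composition, identifies its weight-two part, and checks that this part lies in the span of the purely quadratic relators after discarding the weight-three consequences of known quadratic identities. I would organise these compositions into three groups. The first comprises the cases in which only generators of $\BV$ occur; these are handled as in \cref{ex:qlCondition}, the compositions of bracket~$\triangle$ with $m$, $\triangle$, and $b$ being absorbed respectively by Leibniz~$\triangle$, derivation, and Jacobi. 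The second group treats bracket~$\square$ in complete symmetry with bracket~$\triangle$, where the roles of Leibniz, derivation, and Jacobi are taken respectively by Leibniz~$\square$, compatibility~1, and compatibility~2. The third group consists of the mixed compositions---bracket~$\triangle$ with $c$ or $\square$, and bracket~$\square$ with $\triangle$ or $b$---which will be closed up by compatibility~3.

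The main obstacle is the sign and combinatorial bookkeeping in this third group, where one must check that compatibility~3 is \emph{exactly} the consequence that arises, rather than a related but distinct weight-two identity. This verification mirrors the observation recorded after \cref{lemma:presentationcBV}: Jacobi, derivation, and compatibility relations 2 and 3 are redundant for the operadic presentation of $\cBV$, yet they are indispensable precisely so that the quadratic-linear conditions hold and hence so that the Koszul duality machinery developed in \cref{ss:koszul_duals} becomes applicable.
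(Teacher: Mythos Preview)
Your overall strategy matches the paper's: verify $(ql_1)$ by inspection, then for $(ql_2)$ observe that only the two genuinely quadratic-linear relators (\textsc{bracket $\triangle$} and \textsc{bracket $\square$}) can contribute, and run through their compositions with each of the five generators. The first group, handled via \cref{ex:qlCondition}, is correctly identified.

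However, your claimed symmetry in the second group is false, and this is where the proposal breaks down. You assert that \textsc{bracket $\square$} composed with $\square$ and with $c$ will be absorbed by \textsc{compatibility 1} and \textsc{compatibility 2} respectively, in analogy with the way \textsc{bracket $\triangle$} composed with $\triangle$ and $b$ yields \textsc{derivation} and \textsc{jacobi}. But the latter extractions work only because the weight-three terms cancel modulo \textsc{square-zero} ($\triangle^2=0$) composed with a generator. There is no relation $\square^2=0$ in $R_\bullet$, nor a Jacobi relation for $c$; consequently the weight-three parts of \textsc{bracket $\square$} composed with $\square$ or $c$ cannot be cancelled, and these compositions contribute \emph{nothing} to the intersection with $\cT(E_\bullet)^{(2)}$. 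The paper's proof makes exactly this point in its detailed treatment of the $\square$-with-$\square$ case.

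Your third group is also misattributed. \textsc{compatibility 2} does not arise from \textsc{bracket $\square$} with $c$; it arises from combining \textsc{bracket $\triangle$} with $c$ and \textsc{bracket $\square$} with $b$. Likewise \textsc{compatibility 3} comes from combining \textsc{bracket $\triangle$} with $\square$ and \textsc{bracket $\square$} with $\triangle$. \textsc{compatibility 1} is not produced by any of these compositions at all; it is simply one of the homogeneous quadratic relators. So the correct bookkeeping is: cases $(\square, m)$ give \textsc{leibniz $\square$}; cases $(\square,\square)$ and $(\square,c)$ give nothing; the mixed pairs $(\triangle,c)/(\square,b)$ give \textsc{compatibility 2}; and the mixed pairs $(\triangle,\square)/(\square,\triangle)$ give \textsc{compatibility 3}.
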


\begin{proof}
	Condition \((ql_1)\) is trivially satisfied.
	In order to establish Condition \((ql_2)\), we have to compute
	\[
	\{R_\bullet \circ_{(1)} E_\bullet + E_\bullet \circ_{(1)} R_\bullet\} \cap \cT(E_\bullet)^{(2)};
	\]
	the only way to possibly get non-trivial elements is to consider either the inhomogeneous quadratic relations \textsc{bracket} \(\triangle\) or the \textsc{bracket} \(\square\) in \(R_\bullet\) composed above or below with the generators \(\m\), \(\b\), \(\triangle\), \(\c\), or \(\square\) in \(E_\bullet\).
	This gives the following 8 possibilities that are all straightforward to check: in the table below the \emph{inhomogeneous relation} is composed with a \emph{generator} to produce a \emph{combination} which either produces \emph{no relation}, is equal to a \emph{homogeneous relation}, or is a \emph{direct consequence} of a homogeneous relation.
	\begin{center}
		\renewcommand{\arraystretch}{1.15}
		\begin{tabular}{c l l l l }
			\hline
			& Inhomogeneous Relation & Generator & Combination & Homogeneous Relation \\
			\hline
			1 & \textsc{bracket \(\triangle\)} & \(\m\) & \begin{minipage}[t]{3.1cm} \(\b \circ_1 \m + \m \circ_1 \b=\) \\
				\(\big(\b \circ_1 \m + \m \circ_1 \b\big)^{(123)}\) \end{minipage} & Consequence of \textsc{leibniz \(\triangle\)} \\
			2 & \textsc{bracket \(\triangle\)} & \(\b\) & \textsc{jacobi \(\triangle\)} & \textsc{jacobi \(\triangle\)} \\
			3 & \textsc{bracket \(\triangle\)} & \(\triangle\) & \textsc{derivation} & \textsc{derivation} \\
			4 & \textsc{bracket \(\square\)} & \(\m\) & \begin{minipage}[t]{3.1cm} \(\c \circ_1 \m + \m \circ_1 \c=\) \\
				\(\big(\c \circ_1 \m + \m \circ_1 \c\big)^{(123)}\) \end{minipage} & Consequence of \textsc{leibniz \(\square\)} \\
			5 & \textsc{bracket \(\triangle\)} and \textsc{bracket \(\square\)} & \(\c\) and \(\b\) & \textsc{compatibility 2} & \textsc{compatibility 2} \\
			6 & \textsc{bracket \(\triangle\)} and \textsc{bracket \(\square\)} & \(\square\) and \(\triangle\) & \textsc{compatibility 3} & \textsc{compatibility 3} \\
			7 & \textsc{bracket \(\square\)} & \(\c\) & No relation & No relation \\
			8 & \textsc{bracket \(\square\)} & \(\square\) & No relation & No relation \\
			\hline
		\end{tabular}
	\end{center}
	Let us make explicit two types of computations to show how it works.
	In the first case of the above list, one comes up with the following six elements of \(\cT(E_\bullet)\):
	\begin{align*}
		\Theta(a,b,c) &\defeq \LBM{a}{b}{c}-\DLMM{a}{b}{c}+\LMDM{a}{b}{c}+\LMMBBB{a}{b}{c}~,\\
		\Xi(a,b,c) &\defeq \LMB{a}{b}{c}-\LMDM{a}{b}{c}+\LMMB{a}{b}{c}+\LMMBB{a}{b}{c}~,
	\end{align*}
	where \((a,b,c)\) is either equal to \((1,2,3)\), \((2,3,1)\), or \((3,1,2)\).
	The only cubical trees (i.e.\ with three vertices) on the right-hand sides which appear in a composite of a homogeneous quadratic relation (associativity here) in \(R_\bullet\) with a generator (\(\triangle\) here) in \(E_\bullet\) are the ones where the operator \(\triangle\) sits at the top or at the bottom of the composite of two products \(\m\).
	So, up to the associativity relation composed with one \(\triangle\), the only way to get ony quadratic terms from a linear combination of the aforementioned six terms is
	\begin{align*}
		\Theta(1,2,3)+\Xi(1,2,3)-\Theta(2,3,1)+\Xi(2,3,1)
		&=
		\LBM{1}{2}{3}+\LMB{1}{2}{3}-\LBM{2}{3}{1}-\LMB{2}{3}{1},
	\end{align*}
	and the similar elements obtained by cyclic permutations.
	This homogeneous quadratic relation is equal to the difference of the Leibniz \(\triangle\) relation with inputs \((1,2,3)\) and the Leibniz \(\triangle\) relation with inputs \((2,3,1)\), so it is not new.

	In the last case of the above list, one comes up with the following three elements of \(\cT(E_\bullet)\):
	\begin{align*}
		&\CBoxL{1}{2}-\BoxMBoxL{1}{2}+\MBoxBoxL{1}{2}+\MBoxBox{1}{2}~,\\
		&\CBoxR{1}{2}-\BoxMBoxR{1}{2}+\MBoxBox{1}{2}+\MBoxBoxR{1}{2}~,\\
		&\BoxC{1}{2}-\BoxBoxM{1}{2}+\BoxMBoxL{1}{2}+\BoxMBoxR{1}{2}~.
	\end{align*}
	None of the cubical trees on the right-hand sides appear in any composite of a homogeneous quadratic relation in \(R_\bullet\) with a generator in \(E_\bullet\).
	Then, the presence of the three different cubical trees with two adjacent generators \(\square\) prevents us from producing a non-trivial homogeneous quadratic relation from a linear combination of these above three terms.
\end{proof}

	\cref{lemme:qlsforcBV} implies that the quadratic-linear relations of the operad \(\cBV\) are equal to
	the graph of the map \(\psi\) given by
	\[
	\DM{1}{2} - \MDL{1}{2} - \MDR{1}{2} \mapsto \BB{1}{2} \qquad \text{and} \qquad
	\BoxM{1}{2} - \MBoxL{1}{2} - \MBoxR{1}{2} \mapsto \CC{1}{2},
	\]
	and by \(0\) otherwise.

\subsubsection{The cooperad \(\cBV^{\ac}\)}

In order to develop an effective theory of \(\cBV_{\!\infty}\)-algebras, we will make explicit the cooperad
\[
\cBV^{\ac} \defeq \big(\q\cBV^{\ac}, d_1 + d_{\psi}\big)
= \big(\cC(sE_\bullet, s^2 \q R_\bullet), d_1 + d_{\psi}\big).
\]
To do so we start with an explicit description of the operad \(\cBV^! \).

\begin{lemma}\label{prop:cBV!}
	The Koszul dual operad \(\cBV^!\) admits the following presentation, where the symmetric action on all generators
	is the sign representation.

	\medskip\noindent \(\diamond\) The generating \(\Sy\)-module is given by
	\[
	\Big(\KK\set{\m'} \oplus \KK\set{\c'} \oplus \KK\set{\triangle'} \oplus \KK\set{\b'} \oplus \KK\set{\square'}\ ,\ (d_1)^* + (d_\psi)^*\Big),
	\]
	where
	\[
	\m' \defeq \Mprime,\quad
	\b' \defeq \Bprime,\quad
	\triangle' \defeq \Dprime,\quad
	\c' \defeq \Cprime,\quad
	\square' \defeq \BOXprime,
	\]
	with
	\[
	|\m'| = |\c'| = 0, \quad
	|\b'| = |\square'| = -1, \quad
	|\triangle'| = -2.
	\]
	The differential \((d_1)^*\) is the unique derivation extending the assignment
	\[
	\Mprime \mapsto 0,\quad
	\Cprime \mapsto -\Bprime \mapsto 0,\quad
	\BOXprime \mapsto -\Dprime \mapsto 0,
	\]
	and the differential \((d_\psi)^*\) is the unique derivation extending the assignment
	\[
	\Bprime \mapsto -\DMprime{1}{2},\quad
	\Cprime \mapsto -\BoxMprime{1}{2}.
	\]

	\medskip\noindent \(\diamond\) The \(\Sy\)-module of relations is spanned by the following 11 types of relators:

	\noindent
	\begin{tabular}{ll}
		\textnormal{\textsc{jacobi:}}
		\(\LLprime{1}{2}{3}+\LLprime{2}{3}{1}+\LLprime{3}{1}{2}\) &
		\textnormal{\textsc{associativity:}}
		\(\BBBprime{1}{2}{3}-\BBRprime{1}{2}{3}\) \\[4pt]

		\textnormal{\textsc{leibniz \(\m'\text{--}\b'\):}}
		\(\LMBprime{1}{2}{3}-\LBMprime{1}{3}{2}-\LBMprime{2}{3}{1}\) & \\[4pt]

		\textnormal{\textsc{commutativity \(\m'\text{--}\triangle'\):}}
		\(\DMprime{1}{2}-\MDLprime{1}{2}\) &
		\textnormal{\textsc{commutativity \(\b'\text{--}\triangle'\):}}
		\(\DBprime{1}{2}-\BDLprime{1}{2}\) \\[4pt]

		\textnormal{\textsc{nilpotency:}}
		\(\BoxBoxprime,\ \BoxCprime{1}{2},\ \CBoxLprime{1}{2},\ \LCCprime{1}{2}{3}\) &
		\textnormal{\textsc{commutativity \(\triangle'\text{--}\square'\):}}
		\(\DBoxprime - \BoxDprime\) \\[4pt]

		\textnormal{\textsc{leibniz \(\m'\text{--}\square'\):}}
		\(\LMBoxprime{1}{2}{3}+\LBoxMprime{1}{3}{2}+\LBoxMprime{2}{3}{1}\) &
		\textnormal{\textsc{commutativity \(\m'\text{--}\square'\):}}
		\(\BoxMprime{1}{2}+\MBoxLprime{1}{2}\)
	\end{tabular}

	\noindent
	\begin{tabular}{l}
		\textnormal{\textsc{commutativity \(\b'\text{--}\c'\):}}
		\(\CBprime{1}{2}{3}-\CBprime{2}{3}{1},\
		\BCprime{1}{2}{3}-\BCprime{2}{3}{1},\
		\CBprime{1}{2}{3}-\BCprime{1}{2}{3}\) \\[4pt]

		\textnormal{\textsc{compatibility \(\b'\square' \text{--} \c'\triangle'\):}}
		\(\BoxBprime{1}{2}-\DCprime{1}{2},\
		\DCprime{1}{2}+\CDLprime{1}{2},\
		\BoxBprime{1}{2}+\BBoxLprime{1}{2}\)~.
	\end{tabular}
\end{lemma}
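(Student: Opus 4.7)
The strategy is to unpack the formula
\[
\cBV^! \;\defeq\; \mathrm{S} \otimes_H \big(\cBV^{\ac}\big)^*,
\]
together with the explicit description
\[
\cBV^{\ac} \cong \big(\cC(sE_\bullet, s^2 \q R_\bullet),\ d_1 + d_\psi\big)
\]
furnished by Section~\ref{ss:koszul_duals}, whose hypotheses were verified in Lemma~\ref{lemme:qlsforcBV}. Once each piece of this formula is dualized and tensored by the suspension operad, the stated presentation will emerge.

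First I would deal with the generators and the two differentials. The dualization and the suspension twist together send an arity~\(n\), degree~\(d\) generator with trivial \(\Sy_n\)-action to a generator of the same arity, degree \(n-2-d\), and carrying the sign representation; applied to each element of \(E_\bullet\) this yields the five listed generators \(\m',\b',\triangle',\c',\square'\) with the claimed degrees. The derivation \((d_1)^*\) is then the linear dual of the coderivation on \(\q\cBV^{\ac}\) induced by the internal differential \(\partial\) of \(E_\bullet\); reading off \(\partial\,\b = \c\) and \(\partial\,\triangle = \square\) from Lemma~\ref{lemma:presentationcBV} produces the stated formulas for \((d_1)^*\), the Koszul sign accounting for the minus signs. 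Analogously, \((d_\psi)^*\) is obtained by dualizing the map \(\psi\) described at the end of Section~\ref{subsubsec:QLcBV}, whose only non-trivial values send the quadratic parts of the \textsc{bracket \(\triangle\)} and \textsc{bracket \(\square\)} relators to \(\b\) and to \(\c\) respectively, yielding \(\b'\mapsto -\DMprime{1}{2}\) and \(\c'\mapsto -\BoxMprime{1}{2}\).

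Next I would extract the quadratic relations. By the standard Koszul-duality recipe for quadratic operads (see \cite[\S 7.6]{LodayVallette12}), these form the orthogonal complement \((\q R_\bullet)^\perp \subset \cT(E^\vee)^{(2)}\) under the natural pairing induced by declaring the basis of corollas to be orthogonal. The computation splits according to the shape of the two-vertex composition: unary--unary compositions produce the four \textsc{nilpotency} relators and the \textsc{commutativity \(\triangle'\text{--}\square'\)} relator; unary--binary compositions yield the four commutativity relators involving \(\m'\) or \(\b'\) with \(\triangle'\) or \(\square'\); and binary--binary compositions give the \textsc{jacobi}, \textsc{associativity}, both \textsc{leibniz} relators, the three \textsc{commutativity \(\b'\text{--}\c'\)} relators, and the three \textsc{compatibility} relators. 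Each orthogonal complement can be computed by writing the quadratic relations of \(\cBV\) as a matrix in the appropriate basis of two-vertex trees and diagonalizing, the sign-twisted dual basis then reading off the listed relators.

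Finally I would check that the derivations \((d_1)^*\) and \((d_\psi)^*\) descend to the quotient by the \(\Sy\)-ideal generated by the 11 relators, and that \(\bigl((d_1)^* + (d_\psi)^*\bigr)^2 = 0\), both of which follow dually from the analogous statements for \(d_1 + d_\psi\) on \(\cBV^{\ac}\) established in Lemma~\ref{lemma:commutingcodiff}. The main obstacle will be the combinatorial and sign bookkeeping in the orthogonal-complement computation: several relators (notably the two \textsc{leibniz} families and the three \textsc{compatibility \(\b'\square'\text{--}\c'\triangle'\)} relators) mix signs that come from three sources---the suspension operad, the sign-representation twist, and the Koszul convention applied to symmetrizations---and only careful tracking ensures that the resulting signs match those displayed in the statement.
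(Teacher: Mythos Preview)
Your proposal is correct and follows essentially the same route as the paper: both invoke the standard presentation \(\mathrm{S}\otimes_H(\q\cBV^{\ac})^*\cong\mathcal{P}\big(s^{-1}\mathrm{S}\otimes_H(E_\bullet)^*,\,s^{-2}\mathrm{S}\otimes_H(\q R_\bullet)^\perp\big)\) from \cite[Proposition~7.2.1]{LodayVallette12}, read off the dual generators and differentials, and compute the orthogonal complement of the quadratic relations. The paper's proof is terser---it simply cites the proposition and declares the orthogonal-complement computation ``straightforward''---whereas you have helpfully outlined how that computation is organized by tree shape and where the sign subtleties lie.
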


\begin{proof}
	By \cite[Proposition~7.2.1]{LodayVallette12}, the Koszul dual operad $\cBV^!$ admits the following presentation
	\[
	\mathrm{S} \otimes_H\big(\q\cBV^{\ac}\big)^* \cong \mathcal{P}\big(s^{-1} \mathrm{S} \otimes_H(E_\bullet)^*, s^{-2}
	\mathrm{S} \otimes_H(\q R_\bullet)^{\perp}\big).
	\]
	Here the notation with a prime stands for the linear duals of the binary generators, like \(\m' \defeq \m^*\) for instance,
	and the desuspensions of the linear duals of the arity-one generators; like \(\triangle' \defeq s^{-1}\triangle^*\), for instance.
	The rest is a straightforward computation of the orthogonal space of quadratic relations.
	The dual differentials are direct consequences of the definitions of the coderivations on the Koszul dual cooperad; the minus sign
	comes from the Koszul convention \(\big(s^{\otimes 2}\big)^* = -s^* \otimes s^*\).
\end{proof}

The translation of this operadic statement in terms of algebraic structure gives the following description for \(\cBV^!\)-algebras.

\begin{proposition}\label{prop:DualcBValg}
	The desuspension of a differential graded \(\cBV^!\)-algebra structure on \(sA\) is
		a Gerstenhaber \(\mathbb{K}[\hbar, \varepsilon]\)-algebra \((A, m ,\{\, , \,\} )\) on a chain
	\(\mathbb{K}[\hbar]\)-complex \((A,d)\), with \(|\hbar|=-2\) and \(|\varepsilon|=-1\), such that the differential \(d\) is a derivation
	with respect to \(\{\, , \,\}\) and
	\begin{equation}\label{Eq:derivvareps}
	d(\varepsilon a) + \varepsilon d(a)=-\hbar a.
	\end{equation}
	Moreover, it is equipped with a binary symmetric degree \(1\) nilpotent \(\mathbb{K}[\hbar, \varepsilon]\)-linear product \(\mu\) satisfying
	\begin{itemize}
		\item[$\diamond$] the relation \(\varepsilon\mu(a,b)=0\),

		\item[$\diamond$] the two products \(\mu\) and \(m\) commute, i.e. \(\mu \circ_i m = m \circ_i \mu\), for \(i=1,2\),
		and their composite is fully symmetric, i.e. \(\mu \circ_1 m = \left(\mu \circ_1 m\right)^{(23)}\),
		\item[$\diamond$] the product \(\mu\) and bracket \(\{\; , \, \}\) satisfy the Leibniz relation,
		\item[$\diamond$] the relation \(\varepsilon m(a,b)=\hbar\mu(a,b)\) holds,
		\item[$\diamond$] the derivative of the product \(\mu\) satisfies
		\begin{equation}\label{Eq:derivmu}
		d(\mu(a,b))=-\mu(d(a), b) - (-1)^{|a|}\mu(a, d(b)) + m(a,b) + \varepsilon\{a,b\} ~.
		\end{equation}
	\end{itemize}
\end{proposition}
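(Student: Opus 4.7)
The plan is to systematically translate the operadic presentation of $\cBV^!$ given in \cref{prop:cBV!} into the stated algebraic structure, via the equivalence between dg $\cBV^!$-algebra structures on $sA$ and dg operad morphisms $\rho\colon\cBV^! \to \End_{sA}$, followed by the desuspension $sA \rightsquigarrow A$. I would first fix the dictionary between generators and operations: the arity-one generators $\triangle'$ and $\square'$ become linear endomorphisms of $sA$ of degrees $-2$ and $-1$, which after desuspension are endomorphisms of $A$ of the same degrees and which we baptize $\hbar$ and $\varepsilon$; the binary generators $\m'$, $\c'$, $\b'$, all carrying the sign representation, desuspend to symmetric operations $m$, $\mu$, $\{\,,\,\}$ on $A$ of the degrees listed in the proposition. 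All further identities follow by translating each family of relations in \cref{prop:cBV!} separately and checking the chain-map condition for $\rho$.

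Next I would process the eleven families of quadratic relations. The \textsc{nilpotency} relations $\square'\circ\square'$ and $\triangle'\circ\triangle'=0$ together with the \textsc{commutativity} $\triangle'\text{--}\square'$ relation exhibit $A$ as a graded $\KK[\hbar,\varepsilon]$-module; the remaining \textsc{commutativity} families, read on the desuspension, state the $\KK[\hbar,\varepsilon]$-linearity of $m$, $\mu$, and $\{\,,\,\}$, while the nilpotency piece involving $\c'$ gives $\varepsilon\mu(a,b)=0$. The \textsc{associativity}, \textsc{jacobi}, and \textsc{leibniz} $\m'\text{--}\b'$ relations then exhibit $(A,m,\{\,,\,\})$ as a Gerstenhaber algebra. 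The \textsc{commutativity} $\b'\text{--}\c'$ family gives both $\mu\circ_i m = m\circ_i\mu$ and the full symmetry of their common composite; the \textsc{leibniz} $\m'\text{--}\square'$ relation yields the Leibniz rule between $\mu$ and $\{\,,\,\}$; and the \textsc{compatibility} $\b'\square'\text{--}\c'\triangle'$ family produces the identity $\varepsilon m(a,b) = \hbar\mu(a,b)$.

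The central step is the treatment of the internal differential of $\cBV^!$. The requirement that $\rho$ commute with differentials means that, for each generator $g$, the identity $[d_{sA},\rho(g)] = \rho\big((d_1)^*g + (d_\psi)^*g\big)$ holds in $\End_{sA}$. Applied to $\square'$, the equations $(d_1)^*\square' = -\triangle'$ and $(d_\psi)^*\square' = 0$ produce, after desuspension, precisely \eqref{Eq:derivvareps}. Applied to $\b'$, the identity $(d_\psi)^*\b' = -\DMprime{1}{2}$ yields that $d$ is a derivation of $\{\,,\,\}$, once one uses the $\KK[\hbar]$-linearity of $m$ already established. Applied to $\c'$, the combined identity $(d_1)^*\c' + (d_\psi)^*\c' = -\b' - \BoxMprime{1}{2}$ gives \eqref{Eq:derivmu} after invoking $\varepsilon m = \hbar\mu$ and the Koszul sign convention $(s^{\otimes 2})^* = -s^*\otimes s^*$.

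The main obstacle is the bookkeeping of Koszul signs through the desuspension of each binary operation: the signs appearing in the two differentials of \cref{prop:cBV!} and those in each quadratic relation must combine coherently to reproduce the precise formulas of the proposition, in particular the relative sign between the $m(a,b)$ and $\varepsilon\{a,b\}$ contributions in \eqref{Eq:derivmu}. I would organize this computation by first writing out, once and for all, the desuspension formula for a generic binary operation in the sign representation, then applying it uniformly to each of the five generators, and finally verifying the eleven families of relations and the three chain-map identities in turn. No further operadic input beyond \cref{prop:cBV!} is required.
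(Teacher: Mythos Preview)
Your overall strategy---translate the presentation of \cref{prop:cBV!} through an operad morphism \(\rho\colon\cBV^!\to\End_{sA}\) and then desuspend---is exactly the paper's approach, and the organisation into ``quadratic relations'' followed by ``chain-map conditions'' is sound. However, your dictionary between generators and operations is inverted, and this propagates into incorrect attributions later on.

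Under Koszul duality the commutative and Lie sectors exchange: in \cref{prop:cBV!} it is \(\m'\) that satisfies the \textsc{jacobi} relation and \(\b'\) that satisfies \textsc{associativity}, so after desuspension \(\m'\) yields the degree~\(1\) bracket \(\{\,,\,\}\) and \(\b'\) the degree~\(0\) product \(m\) (this is forced by degree counting: a degree~\(0\) binary operation on \(sA\) becomes degree~\(1\) on \(A\)). Your listing \(\m',\c',\b'\leadsto m,\mu,\{\,,\,\}\) has \(\m'\) and \(\b'\) swapped. Consequently your claim that the chain-map condition on \(\b'\) ``yields that \(d\) is a derivation of \(\{\,,\,\}\)'' is wrong: with the correct dictionary, \((d_\psi)^*\b'=-\triangle'\circ\m'\) translates to \([d,m]=\hbar\{\,,\,\}\), i.e.\ the relation~\eqref{eq:dKoszuldual} of the appendix, which the paper observes is the \(d\)-image of~\eqref{Eq:derivmu} and hence redundant. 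The derivation property of \(d\) with respect to \(\{\,,\,\}\) instead comes from the fact that both \((d_1)^*\) and \((d_\psi)^*\) vanish on \(\m'\).

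One further slip: \(\triangle'\circ\triangle'=0\) is \emph{not} among the relations of \cref{prop:cBV!}. Its absence is precisely what makes the subalgebra generated by \(\triangle'\) polynomial, giving the \(\KK[\hbar]\)-module structure rather than a dual-numbers one; only \(\square'\) is nilpotent, whence \(\varepsilon^2=0\). Once you correct the dictionary and drop this spurious relation, the rest of your plan goes through unchanged and matches the paper's argument.
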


\begin{proof}
	This is a direct corollary of \cref{prop:cBV!}. On the desuspension \(A\) of \(sA\), the suspensions of the linear duals \(s\b^*\)
	and \(s\m^*\) act and induce
	symmetric binary operations \(m\) and \(\{\, , \,\}\) of
	respective degrees \(0\) and \(1\) and the desuspension of the linear dual \(s^{-1}\triangle^*\) induces a degree \(-2\) linear
	operator \(\tau\).
	The relations orthogonal to the quadratic analogues of the relations \(R\) coincide with the first quadratic relations given above.
	They assert that \((A, m, \{\, , \,\})\) is a
	Gerstenhaber algebra with which \(\tau\) commutes, that is a Gerstenhaber \(\mathbb{K}[\hbar]\)-algebra with \(|\hbar|=-2\).

	The desuspension of the linear dual \(s^{-1}\square^*\) induces a degree \(-1\) nilpotent operator \(\theta\) whose action is
	encoded by the algebra of dual numbers \(\mathbb{K}[\varepsilon]\).
	The suspension of the linear dual \(s\c^*\) induces a degree \(1\) symmetric nilpotent binary product \(\mu\).
	The rest of the relations given in \cref{prop:cBV!} produce the relations satisfied by \(\theta\) and \(\mu\).

	Relation~\eqref{Eq:derivvareps}, that \([d, \theta] = \tau\), is a direct consequence of the definition of the internal differential
	\(d_1\), which implies
	\(d_1^*\left(\square'\right)=-\triangle'\) in \(\mathrm{S} \otimes_H\left(\q\cBV^{\ac}\right)^*\).
	The induced differential \((d_\psi)^*\) on the suspension of the linear dual operad \(\mathrm{S} \otimes_H\left(\q\cBV^{\ac}\right)^*\)
	 is the unique derivation which
	sends \(\c'\) to \(-\square' \circ \m'\) and
	\(\b'\) to \(-\triangle' \circ \m'\) and the other generators to \(0\). The first value of the the derivation \((d_\psi)^*\) produces
	 Relation~\eqref{Eq:derivmu} and
	its second value gives a relation (\eqref{eq:dKoszuldual}) which is obtained from \eqref{Eq:derivmu} under \(d\).
	The other values of the derivation \((d_\psi)^*\) induce the fact that the differential \(d\)
	 commutes with the action of \(\hbar\) and with
	the commutative product.
	The sign in Relation~\eqref{Eq:derivmu} comes from the fact that the differential on \(A\) is equal to the opposite of the
	differential on \(sA\).
\end{proof}

	Let us recall the following description of the underlying \(\Sy\)-module of the cooperad \(\BV^\ac\).

\begin{lemma}[{\cite[Proposition~3]{GCTV12}}]\label{lem:qBVac}
	The underlying \(\Sy\)-module of the Koszul dual cooperad \(\q\BV^{\ac}\) is given by
	\[
	\q\BV^{\ac} \cong \mathrm{D}^{\ac} \circ \Lie_1^{\ac} \circ \Com^{\ac}
	\cong T^c(\delta) \circ \Com^c_{2} \circ \Lie_{1}^c,
	\]
	where \(\mathrm{D} \defeq T(\triangle)/\big(\triangle^2\big)\) is the algebra of dual numbers,
	\(\Lie_1\) is the operad of Lie algebras with bracket of degree \(1\),
	\(\Lie_{1}^c\) is the cooperad of Lie coalgebras with cobracket of degree \(1\),
	\(\Com^c_{2}\) the cooperad of cocommutative coalgebras with coproduct of degree \(2\), and
	\(T^c(\delta)\) is the cofree conilpotent coalgebra on a degree \(2\) generator.
\end{lemma}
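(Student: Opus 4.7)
The plan is to exhibit \(\q\BV\) as the composite of its three natural sub-operads via a triple of distributive laws, and then apply Koszul duality to each factor.

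First I would isolate the three sub-operads of \(\q\BV\) generated separately by \(\m\), \(\b\), and \(\triangle\). Associativity yields \(\Com\); Jacobi yields \(\Lie_1\); the square-zero relation yields the arity-one operad \(\rD \defeq T(\triangle)/(\triangle^2)\). The remaining quadratic relations of \(\q\BV\) are all of ``derivation'' type and couple these sub-operads pairwise: \textsc{leibniz} makes \(\b\) a derivation of \(\m\), \textsc{derivation} makes \(\triangle\) a derivation of \(\b\), and the quadratic projection of the inhomogeneous \textsc{bracket} relation, obtained by discarding the linear term \(\b\) under \(\q\), asserts that \(\triangle\) is also a derivation of \(\m\).

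Next I would interpret these three couplings as three pairwise distributive laws of operads and verify the Yang--Baxter-type compatibility that lets them assemble into a single three-fold distributive law, realized concretely as a confluent rewriting system normalizing every composition so that \(\Com\)-operations sit at the bottom, \(\Lie_1\)-operations in the middle, and powers of \(\triangle\) on top. By the standard theory of distributive laws between Koszul operads, see \cite[Chapter~8]{LodayVallette12}, this yields an \(\Sy\)-module isomorphism
\[
\q\BV \;\cong\; \rD \circ \Lie_1 \circ \Com,
\]
and the analogous decomposition of the Koszul dual cooperad
\[
\q\BV^{\ac} \;\cong\; \rD^{\ac} \circ \Lie_1^{\ac} \circ \Com^{\ac}.
\]

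To obtain the second isomorphism in the statement, I would identify each factor: \(\Com^{\ac} \cong \Lie_1^c\) is \cref{ex:Comac}; \(\Lie_1^{\ac} \cong \Com_2^c\) follows from the Koszul self-duality of the \(\Com\)/\(\Lie\) pair together with the degree shift built into \(\Lie_1\); and \(\rD^{\ac} \cong T^c(\delta)\) with \(\delta = s\triangle\) of degree \(2\) is the cooperadic avatar of the classical Koszul duality between the exterior algebra on a single odd generator and the polynomial algebra on a single even generator. The main technical obstacle is the Yang--Baxter compatibility of the three distributive laws: one has to check that the normal form obtained by applying the rewriting rules in different orders coincides, so that the decomposition of \(\q\BV\) is not merely a surjection but an isomorphism of \(\Sy\)-modules. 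This is the heart of the argument given in \cite[Proposition~3]{GCTV12}, to which one may appeal directly.
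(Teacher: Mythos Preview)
Your approach coincides with the paper's: both invoke the distributive-law decomposition of \(\q\BV\) established in \cite[Section~1.2]{GCTV12} and pass to Koszul duals factorwise. One slip to correct: with the Loday--Vallette convention (root at the bottom, so \(\P\circ\Q\) has \(\P\) at the root) and your own description of the normal form (\(\Com\) at the bottom, \(\triangle\) on top), the decomposition reads \(\q\BV \cong \Com \circ \Lie_1 \circ \rD\), not \(\rD \circ \Lie_1 \circ \Com\); Koszul duality for distributive laws then \emph{reverses} the order of the factors (see \cite[Proposition~8.6.6]{LodayVallette12}), yielding \(\q\BV^{\ac} \cong \rD^{\ac} \circ \Lie_1^{\ac} \circ \Com^{\ac}\). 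Your final formula is correct, but only because the wrong ordering and the omitted reversal cancelled---make the reversal explicit.
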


\begin{proof}
	The quadratic analogue \(\q\BV\) of the operad \(\BV\) is given by a distributive law, see \cite[Section~1.2]{GCTV12}.
	This implies that its underlying \(\Sy\)-module is isomorphic to
	\[
	\q\BV \cong \Com \circ \Lie_1 \circ \mathrm{D}.
	\]
	Then it implies the isomorphism for the underlying \(\Sy\)-module of the Koszul dual cooperad \(\q\BV^{\ac}\).
\end{proof}

\begin{theorem}\label{thm:FormcBVac}
	The cooperads \(\BV^\ac\) and \(\cBV^\ac\) admit the following decompositions as cooperads:
	\[
	\begin{tikzcd}[column sep=-2.5pt]
		\BV^\ac & \cong & \Com^\ac & \oplus &
		\rM^*
		\arrow[loop below, in=240, out=300, distance=2.3em, "d_\varphi"]
	\end{tikzcd}
	\quad\text{and}\quad
	\begin{tikzcd}[column sep=-2.5pt]
		\cBV^\ac & \cong & \Com^\ac & \oplus &
		\rM^*
		\arrow[rr, bend left=45, "\quad d_1 = s^{-1}", out=80, in=100, distance=1.1em]
		\arrow[loop below, in=240, out=300, distance=2.3em, "d_\varphi"] &
		\oplus & s^{-1}\rM^* \arrow[loop below, in=245, out=295, distance=2em, "-d_\varphi"]
	\end{tikzcd}
	\]
	where
	\begin{itemize}[label=\(\diamond\), leftmargin=*]
		\item \(\Com^\ac\) is the sub-cooperad spanned by the cogenerator \(s\m\).
		\item \(\rM^*\) is the coideal of \(\BV^\ac\) spanned by the cogenerators \(s\b\) and \(s\triangle\).
		\item \(s^{-1}\rM^*\) is the coideal of \(\cBV^\ac\) spanned by the cogenerators \(s\c\) and \(s\square\).
	\end{itemize}
\end{theorem}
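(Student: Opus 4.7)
The plan is to establish the decomposition in two stages: first on the level of the underlying $\Sy$-modules, where the quadratic analogues $\q\BV^\ac$ and $\q\cBV^\ac$ already split into the announced direct summands, and then by verifying that the total codifferential $d_1 + d_\psi$ acts as depicted in the diagram.

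For the underlying $\Sy$-module decomposition, I would begin from \cref{lem:qBVac}, which yields $\q\BV^\ac \cong T^c(\delta) \circ \Com^c_{2} \circ \Lie_{1}^c$. The distributive law origin of this isomorphism equips $\q\BV^\ac$ with a grading by the total number of cogenerators of type $s\b$ and $s\triangle$; the weight-zero part is exactly the sub-cooperad $\Lie_{1}^c = \Com^\ac$ generated by $s\m$, and the strictly positive part is a coideal which we identify with $\rM^*$. The same reasoning applied to the distributive law decomposition of $\q\cBV$, which adjoins the generators $\c$ and $\square$ to the $\BV$ picture, shows that $\q\cBV^\ac$ splits further as $\Com^\ac \oplus \rM^* \oplus s^{-1}\rM^*$, with the last summand consisting of elements of positive weight for the auxiliary grading counting $s\c$ and $s\square$ cogenerators.

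Next, I would check how the codifferentials respect this decomposition. Since $\partial(\m) = 0$ and $\psi$ vanishes on the purely $s\m$-indexed associativity relation, both $d_1$ and $d_\psi$ annihilate $\Com^\ac$, which therefore sits as a sub-cooperad carrying the trivial differential. The internal differential $d_1$ is induced by $\partial(\b) = \c$ and $\partial(\triangle) = \square$, which after the suspension-desuspension accounting of the Koszul construction becomes, up to sign, the desuspension $s^{-1} \colon \rM^* \to s^{-1}\rM^*$, matching the cross arrow in the diagram. For $d_\psi$, the formulas recorded after \cref{lemme:qlsforcBV} show that $\psi$ converts an element of $s^2 \q R_\bullet$ carrying one $\triangle$-vertex into the cogenerator $s\b$, and one carrying a $\square$-vertex into $s\c$. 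As a coderivation, $d_\psi$ therefore preserves both $\rM^*$ and $s^{-1}\rM^*$, producing the self-loops $d_\varphi$ and $-d_\varphi$.

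The main technical obstacle will be the sign bookkeeping: showing precisely that the self-loop on $s^{-1}\rM^*$ carries the opposite sign of the one on $\rM^*$ and that the cross arrow is genuinely $s^{-1}$. This is dictated by the Koszul sign convention $(s^{\ot 2})^* = -s^* \ot s^*$ already used in \cref{prop:cBV!}, and can be verified directly on cogenerators with the help of the anti-commutation of $d_1$ and $d_\psi$ established in \cref{lemma:commutingcodiff}.
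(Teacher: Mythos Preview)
Your overall strategy---split the underlying \(\Sy\)-module by cogenerator gradings, then track the codifferential---matches the paper's, and your treatment of \(d_1\) and \(d_\psi\) on the summands is essentially what the paper does. However, there is a real gap in how you obtain the \(\Sy\)-module decomposition of \(\q\cBV^{\ac}\).

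You write that ``the same reasoning applied to the distributive law decomposition of \(\q\cBV\)'' yields \(\q\cBV^{\ac} \cong \Com^{\ac} \oplus \rM^* \oplus s^{-1}\rM^*\). But no distributive law for \(\q\cBV\) is established anywhere in the paper, and one cannot simply be asserted: the bracket \(\c\) does not satisfy Jacobi and \(\square\) is not a derivation of \(\c\), so the \((\c,\square)\)-sector is not governed by an operad like \(\Lie_1\) or \(\mathrm{D}\). More concretely, if you merely grade \(\q\cBV^{\ac}\) by the number of \(s\c\) and \(s\square\) cogenerators, nothing you have said rules out contributions in weight \(\geqslant 2\); the claim that the positive-weight part is a \emph{single} shifted copy \(s^{-1}\rM^*\) is exactly what needs proof. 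The paper supplies this by passing to the linear dual operad \(\cBV^!\) via \cref{prop:cBV!}, where the \textsc{nilpotency} relations \(\square'\circ\square'=0\), \(\square'\circ\c'=0\), \(\c'\circ\square'=0\), \(\c'\circ_1\c'=0\) force any element of \((\q\cBV^{\ac})^*\) to contain at most one generator from \(\{\c',\square'\}\); together with the remaining commutativity and compatibility relations this shows the pair \((\c',\square')\) behaves exactly like \((\b',\triangle')\) up to a shift, giving \((\q\cBV^{\ac})^* \cong \Lie_{-1}\oplus\rM\oplus s\rM\) and hence, dually, the announced three-term splitting. Your argument needs this input (or an equivalent one) to go through.
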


\begin{proof}
	\cref{lem:qBVac} gives the underlying \(\Sy\)-module of the Koszul dual cooperad \(\q\BV^{\ac}\)
	\[
	\q\BV^{\ac} \cong \mathrm{D}^{\ac} \circ \Lie_1^{\ac} \circ \Com^{\ac}.
	\]
	So the linear dual operad \(\big(\q\BV^{\ac}\big)^*\) splits as
	\[
	\big(\q\BV^{\ac}\big)^* \cong \Lie_{-1} \oplus \rM,
	\]
	where \(\Lie_{-1}\) is the operad of Lie algebras with bracket of degree \(-1\), viewed here as
	the sub-operad spanned by the generator \(s^{-1}\m^*\) and where \(\rM\) is the \(\Lie_{-1}\)-bimodule given by the ideal of \(\big(\q\BV^{\ac}\big)^*\) spanned by the two generators \(s^{-1}\b^*\) and \(s^{-1}\triangle^*\).

	The presentation of the operad \(\mathrm{S} \otimes_H\big(\q\cBV^{\ac}\big)^*\) given in the above \cref{prop:cBV!} shows that the pair of nilpotent generators \(\c'\) and \(\square'\) behave in the same way as the pair of generators \(\b'\) and \(\triangle'\) up to suspension.
	This implies the following isomorphism of operads
	\[
	\big(\q\cBV^{\ac}\big)^* \cong \Lie_{-1} \oplus \rM \oplus s\rM,
	\]
	where \(s\rM\) is isomorphic to the ideal of \(\big(\q\cBV^{\ac}\big)^*\) spanned by
	the two generators \(s^{-1}\c^*\) and \(s^{-1}\square^*\).
	Dually, we get the isomorphism of cooperads
	\[
	\q\cBV^{\ac} \cong \Com^\ac \oplus \rM^* \oplus s^{-1}\rM^*.
	\]
	Under this isomorphism, the codifferential \(d_1\) is equal to the desuspension isomorphism \(s^{-1} \colon \rM^* \to s^{-1}\rM^*\).
	Since the total number of cogenerators \(s\c\) and \(s\square\) is preserved under the codifferential \(d_\psi\), both coideals \(\rM^*\) cogenerated by \(s\b\) and \(s\triangle\) and \(s^{-1}\rM^*\) cogenerated by \(s\c\) and \(s\square\) are stable under \(d_\psi\), where they are respectively equal to \(d_\varphi\) and \(-d_\varphi\).
\end{proof}

\subsubsection{Koszul property}

\begin{theorem}\label{prop:cBVKoszul}
	The operad \(\cBV\) is Koszul.
\end{theorem}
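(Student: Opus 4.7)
The plan is to apply the characterisation of Koszulness in terms of the map $\bar\kappa$ and to reduce the calculation to the classical fact that the operad $\Com$ is Koszul. Everything rests on the decomposition of $\cBV^\ac$ provided by \cref{thm:FormcBVac}, together with the identification $H_\bullet(\cBV, \partial) \cong \Com$ from \cref{thm:Homology}.

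First I would compute the $d_1$-homology of the Koszul dual cooperad. Using the splitting
\[
\cBV^\ac \cong \Com^\ac \oplus \rM^* \oplus s^{-1}\rM^*,
\]
the piece $(\rM^* \oplus s^{-1}\rM^*, d_1)$ is the mapping cone of the desuspension isomorphism $s^{-1}\colon \rM^* \to s^{-1}\rM^*$, hence acyclic, while $d_1$ vanishes on $\Com^\ac$. It follows that $H_\bullet(\cBV^\ac, d_1) \cong \Com^\ac$.

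Next I would verify that the induced codifferential $\bar d_\psi$ on this homology vanishes. The coderivation $d_\psi$ is determined by the map $\psi$, which is zero except on the \textsc{bracket} $\triangle$ and \textsc{bracket} $\square$ relators; since every element of $\Com^\ac \subset \cBV^\ac$ is built purely from the cogenerator $s\m$, the restriction $d_\psi|_{\Com^\ac}$ lands in corelations that $\psi$ annihilates. Hence $\bar d_\psi = 0$ on $\Com^\ac$, so
\[
(H_\bullet(\cBV^\ac, d_1), \bar d_\psi) \cong (\Com^\ac, 0).
\]
By \cref{thm:Homology} one also has $(H_\bullet(\cBV, \partial), 0) \cong (\Com, 0)$, and under these identifications $\bar\kappa$ becomes the canonical twisting morphism $\Com^\ac \to \Com$ sending the cogenerator $s\m$ to $\m$.

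It remains to invoke the classical fact that the operad $\Com$ is Koszul, with $\Com^\ac \cong \Lie_1^c$ (\cref{ex:Comac}): the induced operad morphism $\Omega\,\Com^\ac \xrightarrow{\sim} \Com$ is a quasi-isomorphism. Thus $\bar\kappa$ is a Koszul twisting morphism, which by definition means $\cBV$ is Koszul. The main technical point is the vanishing of $\bar d_\psi$ on the representatives of $H_\bullet(\cBV^\ac, d_1)$; this is the place where the careful choice of inhomogeneous presentation in \cref{lemma:presentationcBV} (and the fact that the quadratic-linear conditions are met, \cref{lemme:qlsforcBV}) does the work, by ensuring that $\psi$ only involves the mixed generators $\b, \c$ and not purely associative compositions.
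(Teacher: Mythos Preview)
Your proof is correct and follows essentially the same approach as the paper: both use \cref{thm:FormcBVac} to identify $H_\bullet(\cBV^\ac, d_1) \cong \Com^\ac$ with $\bar d_\psi = 0$, use \cref{thm:Homology} for $H_\bullet(\cBV,\partial)\cong\Com$, and then reduce $\bar\kappa$ to the canonical Koszul twisting morphism $\Com^\ac \to \Com$. Your justification for $\bar d_\psi = 0$ via the support of $\psi$ is a slightly more hands-on rephrasing of what the paper extracts directly from the decomposition in \cref{thm:FormcBVac}, but the substance is identical.
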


\begin{proof}
	We have already seen in \cref{lemme:qlsforcBV} that the dg inhomogeneous quadratic data \(\big(E_\bullet, R_\bullet\big)\)
	introduced in \cref{lemma:presentationcBV} for the operad \(\cBV\) satisfies the quadratic conditions.
	It remains to study its homology twisting morphism \(\bar{\kappa}\).
	By \cref{thm:FormcBVac}, the homology of the Koszul dual cooperad \(\cBV^\ac\) with respect to the internal codifferential
	\(d_1\) is isomorphic to the Koszul dual cooperad \(\Com^\ac\):
	\[
	H_\bullet\big(\cBV^{\ac}, d_1\big) \cong \Com^\ac.
	\]
	\cref{thm:FormcBVac} also shows that the induced codifferential \(\bar{d}_\psi = 0\) is trivial in the case of the operad \(\cBV\).
	\cref{thm:Homology} computes the homology operad of \(\cBV\):
	\[
	H_\bullet\big(\cBV, \partial\big) \cong \Com.
	\]
	In the end, the homology twisting morphism \(\bar{\kappa}\) is equal to the canonical twisting morphism
	\(\Com^\ac \to \Com\) of the operad \(\Com\), which is known to be Koszul, see \cite[Proposition~13.1.2]{LodayVallette12}.
\end{proof}

\begin{remark}
	In the present case of the operad \(\cBV\), an even stronger phenomenon appears, which does not hold in full generality: the homology of the operad \(\cBV\) and the homology of the Koszul dual cooperad \(\cBV^\ac\) are given by the homology of their presentations
	\begin{align*}
		&H_\bullet\big(\cBV, \partial\big) \cong
		\mathcal{P}\big(H_\bullet(E_\bullet, \partial), H_\bullet(R_\bullet, \partial)\big) \cong \Com
		\quad \text{and} \quad \\
		&H_\bullet\big(\cBV^\ac, d_1\big) \cong
		\mathcal{C}\big(sH_\bullet(E_\bullet, \partial), s^2 H_\bullet(qR_\bullet, \partial)\big) \cong \Com^\ac~.
	\end{align*}
\end{remark}

\subsection{Tensor products and operadic diagonals}\label{ss:hopf}

The universal framework for discussing tensor products of algebras over an operad is the \defn{Hadamard product} of operads.
Given two symmetric operads \(\rO_1\) and \(\rO_2\), their \defn{Hadamard product} is the \(\Sy\)-module
\[
\rO_1 \,\ot_H\, \rO_2
\qquad\text{with components}\qquad
(\rO_1 \ot_H \rO_2)(n) \coloneqq \rO_1(n)\otimes \rO_2(n),
\]
equipped with the operad structure inherited aritywise from \(\rO_1\) and \(\rO_2\).

An \defn{operadic diagonal} on an operad \(\rO\) is a morphism of operads
\[
\Delta \colon \rO \longrightarrow \rO \ot_H \rO.
\]
Such a morphism encodes, in a universal way, the data needed to equip the tensor product of two \(\rO\)-algebras with a new \(\rO\)-algebra structure:
given \(\rO\)-algebras \(A\) and \(B\), the composite
\[
\rO \xrightarrow{\;\Delta\;} \rO \ot_H \rO
\longrightarrow \End_A \ot_H \End_B
\cong \End_{A \otimes B}
\]
determines a canonical \(\rO\)-algebra structure on \(A \otimes B\).

\medskip

The operad \(\Com\) carries the operadic diagonal determined by
\[
\Delta(\m) = \m \ot \m,
\]
which gives the usual tensor product of commutative algebras.
Similarly, the operad \(\BV\) admits an operadic diagonal determined by
\[
\Delta(\m) = \m \ot \m,
\qquad
\Delta(\triangle) = \triangle \ot \id + \id \ot \triangle,
\]
and the operad \(\cBV\) carries the operadic diagonal determined by
\[
\Delta(\m) = \m \ot \m,
\qquad
\Delta(\triangle) = \triangle \ot \id + \id \ot \, \triangle,
\qquad
\Delta(\square) = \square \ot \id + \id \ot \, \square.
\]

The following is a well known fact obtained using a standard lifting argument and the fact that the Hadamard product of acyclic fibrations is an acyclic fibration.

\begin{proposition}\label{prop:Hopf-lifting}
	Let \(\rO\) be an operad with an operadic diagonal \(\Delta\).
	Let \(\rO_\infty \twoheadrightarrow \rO\) be any cofibrant resolution.
	There exists an operadic diagonal \(\Delta_\infty\) on \(\rO_\infty\), unique up to homotopy, fitting in the diagram
		\[
	\begin{tikzcd}[column sep=large, row sep=large]
		\rO_\infty \arrow[r, "\Delta_\infty", dashed] \arrow[d, ->>, "\sim"] &
		\rO_\infty \ot_H \rO_\infty \arrow[d, ->>, "\sim"] \\
		\rO \arrow[r, "\Delta"] &
		\rO \ot_H \rO .
	\end{tikzcd}
	\]
\end{proposition}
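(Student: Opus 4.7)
The plan is to apply the lifting axiom in Hinich's model category of operads to the square
\[
\begin{tikzcd}[column sep=large, row sep=large]
\emptyset \arrow[r] \arrow[d, >->] & \rO_\infty \ot_H \rO_\infty \arrow[d, ->>, "p \,\ot_H\, p"] \\
\rO_\infty \arrow[r, "\Delta \circ p"'] \arrow[ur, dashed, "\Delta_\infty"] & \rO \ot_H \rO,
\end{tikzcd}
\]
where \(p \colon \rO_\infty \twoheadrightarrow \rO\) denotes the given cofibrant resolution. Since \(\rO_\infty\) is cofibrant, the left vertical map is a cofibration, so the desired \(\Delta_\infty\) will exist provided the right vertical map is an acyclic fibration.

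The first step is therefore to verify that \(p \ot_H p\) is an acyclic fibration. Fibrations in Hinich's model structure are the aritywise surjections, and aritywise surjectivity is manifestly preserved by the Hadamard product. Weak equivalences are aritywise quasi\nobreakdash-isomorphisms, and the Künneth theorem over the characteristic zero field \(\KK\) ensures that the tensor product of two quasi\nobreakdash-isomorphisms of chain complexes is again a quasi\nobreakdash-isomorphism; applied aritywise, this shows \(p \ot_H p\) is aritywise a quasi\nobreakdash-isomorphism. Hence \(p \ot_H p\) is an acyclic fibration, and the lifting axiom produces \(\Delta_\infty\) making the required square commute.

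For uniqueness up to homotopy, suppose \(\Delta_\infty\) and \(\Delta_\infty'\) are two such lifts. Choose a cylinder object \(\mathrm{Cyl}(\rO_\infty)\) for \(\rO_\infty\), which exists since \(\rO_\infty\) is cofibrant, and consider the diagram
\[
\begin{tikzcd}[column sep=large, row sep=large]
\rO_\infty \sqcup \rO_\infty \arrow[r, "{(\Delta_\infty,\, \Delta_\infty')}"] \arrow[d, >->] & \rO_\infty \ot_H \rO_\infty \arrow[d, ->>, "\sim"'] \\
\mathrm{Cyl}(\rO_\infty) \arrow[r] \arrow[ur, dashed] & \rO \ot_H \rO,
\end{tikzcd}
\]
where the bottom map is the composite of the weak equivalence \(\mathrm{Cyl}(\rO_\infty) \xrightarrow{\sim} \rO_\infty\) with \(\Delta \circ p\). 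Both lifts agree with this composite after projecting to \(\rO \ot_H \rO\), so the square commutes, and a diagonal filler, provided again by the lifting axiom, realises a homotopy between \(\Delta_\infty\) and \(\Delta_\infty'\).

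The proof contains no real obstacle; the only substantive content is the Künneth\nobreakdash-type verification that Hadamard product preserves acyclic fibrations, after which existence and uniqueness up to homotopy follow formally from the model-categorical lifting axiom.
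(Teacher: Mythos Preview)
Your proposal is correct and follows exactly the approach the paper indicates: the paper does not give a detailed proof but only remarks that this is ``a well known fact obtained using a standard lifting argument and the fact that the Hadamard product of acyclic fibrations is an acyclic fibration.'' You have supplied precisely these details, including the K\"unneth argument for preservation of weak equivalences and the cylinder-object argument for uniqueness up to homotopy.
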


Since \(\cBV\) admits an operadic diagonal, its cofibrant resolution \(\cBV_{\!\infty} = \Omega \cBV^\ac\) inherits one as well, unique up to homotopy, by \cref{prop:Hopf-lifting}.
Consequently, the tensor product of two \(\cBV_{\!\infty}\)-algebras carries a natural \(\cBV_{\!\infty}\)-algebra structure.
While this establishes the existence of such a structure, it does not provide an explicit formula for the operadic diagonal on \(\cBV_{\!\infty}\).
Since tensor products of \(\cBV_{\!\infty}\)-algebras are expected to play a central role in double copy constructions, we plan to develop explicit formulas for a chosen lift in future work.

We note that the same lifting argument applies to both \(\rC_\infty\)- and \(\BV_{\!\infty}\)-algebras, endowing them with tensor product structures.
However, explicit formulas for the operadic diagonals remain unknown in both cases.
As explained in \cite{medina2023dennis}, such formulas for \(\Com_\infty\) would resolve a question posed by D.~Sullivan in \cite[p.231]{lawrence2014interval}.
\subsection{Model categorical relationship}\label{ss:homotopy_fibre}

We start from the canonical diagram of dg quadratic-linear data
\[
\begin{tikzcd}[column sep=small]
	(V, S) \rar & (E_\bullet, R_\bullet) \rar & (E, R)
\end{tikzcd}
\]
defining the diagram \(\Com \to \cBV \to \BV\) studied in \cref{thm:Homology}.
Passing to the associated Koszul dual cooperads we obtain a diagram of cooperads
\begin{equation*}\label{eq:diagram_cooperads}
	\begin{tikzcd}[column sep=small]
		\Com^{\ac} \rar & \cBV^{\ac} \rar & \BV^{\ac}.
	\end{tikzcd}
\end{equation*}
Applying the cobar construction to it yields a diagram of operads
\[
\begin{tikzcd}[column sep=small]
	\Omega\Com^{\ac} \rar & \Omega\cBV^{\ac} \rar & \Omega\BV^{\ac}.
\end{tikzcd}
\]

\begin{theorem}\label{prop:FactAcyCofFib}
	The diagram
	\[
	\begin{tikzcd}[column sep=20pt]
		\Com_\infty \arrow[r, >->, "\sim"] & \cBV_{\!\infty} \arrow[r,->>] & \BV_{\!\infty}
	\end{tikzcd}
	\]
	is made up of an acyclic cofibration followed by a fibration.
\end{theorem}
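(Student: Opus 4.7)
The plan is to work at the level of cooperads first, where both the acyclic cofibration and the fibration have transparent descriptions, and then transport the conclusions to the operad level via the cobar construction $\Omega$. The two maps come from the cooperad maps
\[
\Com^{\ac} \xhookrightarrow{\ i\ } \cBV^{\ac} \xtwoheadrightarrow{\ p\ } \BV^{\ac},
\]
where $i$ is the inclusion of the sub-cooperad spanned by $s\m$ and $p$ is the projection killing the coideal $s^{-1}\rM^{*}$. Both maps are direct readouts of the decomposition in \cref{thm:FormcBVac}.

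For the first map, I would separate the two properties. The cobar construction $\Omega\,\rC$ of any conilpotent cooperad is quasi-free on $s^{-1}\overline{\rC}$, and the map $\Omega\, i \colon \Omega\Com^{\ac} \to \Omega\cBV^{\ac}$ is a relative quasi-free extension along the generators $s^{-1}(\rM^{*} \oplus s^{-1}\rM^{*})$. Since such extensions are precisely the generating cofibrations in Hinich's model structure on dg operads, $\Omega\, i$ is a cofibration. For the acyclicity, \cref{thm:FormcBVac} exhibits the quotient cooperad $\cBV^{\ac}/\Com^{\ac} \cong \rM^{*} \oplus s^{-1}\rM^{*}$ as the mapping cone of the isomorphism $s^{-1} \colon (\rM^{*},d_{\varphi}) \to (s^{-1}\rM^{*},-d_{\varphi})$ induced by $d_{1}$. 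This cone is acyclic, so $i$ is a quasi-isomorphism of cooperads. Since both cooperads are conilpotent (in fact weight-graded and non-negatively graded), $\Omega$ preserves quasi-isomorphisms between them by the standard operadic bar-cobar theory (\cite[Ch.\,11]{LodayVallette12}). Hence $\Omega\, i$ is a weak equivalence.

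For the second map, it suffices to observe that $p$ is a surjection of $\Sy$-modules (the underlying $\Sy$-module decompositions make this visible), and that $p$ is indeed a morphism of dg cooperads: $s^{-1}\rM^{*}$ is a sub-dg $\Sy$-module since $d_{1}$ vanishes on $s^{-1}\rM^{*}$ (there is no further suspension to hit), $d_{\psi}$ restricts there as $-d_{\varphi}$, and it is closed under the cooperadic decompositions as a coideal. Applying $\Omega$ preserves surjectivity at the level of generators $s^{-1}\overline{\rC}$ and hence at the level of the free operads $\cT(s^{-1}\overline{\rC})$, while compatibility of $p$ with the differentials $d_{1} + d_{\psi} + d_{2}$ is automatic by functoriality. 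A surjection of operads is a fibration in Hinich's model structure, so $\Omega\, p$ is a fibration.

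The only substantive point requiring care is the acyclicity of the quotient $\cBV^{\ac}/\Com^{\ac}$ together with the statement that $\Omega$ preserves this quasi-isomorphism; this is the main obstacle, but it reduces to recognising the mapping-cone structure already encoded in \cref{thm:FormcBVac} and invoking the standard fact that on conilpotent cooperads the cobar functor is homotopical. The composite $\Omega p \circ \Omega i$ coincides with the canonical map $\Com_{\infty} \to \BV_{\!\infty}$ induced by $p \circ i$, so the resulting factorisation is the one stated.
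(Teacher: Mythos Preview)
Your overall strategy is sound, but the cofibration step has a gap. A relative quasi-free extension is \emph{not} automatically a cofibration in Hinich's model structure: one needs an exhaustive filtration on the new generators with the differential of each stage landing in the sub-operad generated by the previous stage (the ``triangulated'' or standard-cofibration condition). The paper supplies this explicitly, filtering the new generators \(s^{-1}\rM^{*}\oplus s^{-2}\rM^{*}\) by homological degree and checking that the degree-\(0\) part (spanned by \(\c\) and \(\square\)) has vanishing differential, while \((d_{1}+d_{\psi}+d_{2})(\mathrm{S}_{k})\subset \Com_{\infty}\vee\cT(\mathrm{S}_{k-1})\). You should either carry out this verification or invoke a general statement that \(\Omega\) applied to an injection of non-negatively graded conilpotent cooperads yields a cofibration; the phrase ``such extensions are precisely the generating cofibrations'' is both imprecise and, as stated, false.

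Your acyclicity argument is genuinely different from the paper's and is in fact more direct. The paper does not show that \(i\colon\Com^{\ac}\hookrightarrow\cBV^{\ac}\) is a quasi-isomorphism and then transport through \(\Omega\); instead it appeals to the Koszul resolutions \(\Com_{\infty}\xrightarrow{\sim}\Com\) and \(\cBV_{\!\infty}\xrightarrow{\sim}\cBV\) together with \(\Com\xrightarrow{\sim}\cBV\) (\cref{thm:Homology}) and concludes by 2-out-of-3. Your route---recognising \(\cBV^{\ac}/\Com^{\ac}\) as the cone on an isomorphism and invoking that \(\Omega\) is homotopical on non-negatively graded conilpotent cooperads---is cleaner and avoids the detour through the strict operads, but you should state the hypothesis under which \(\Omega\) preserves quasi-isomorphisms (it does not do so unconditionally). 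The fibration argument is essentially the same as the paper's.
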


\begin{proof}
	The above diagram of cooperads extends to a short exact sequence.
	\[
	\begin{tikzcd}[column sep=small]
		0 \rar & \Com^{\ac} \rar & \cBV^{\ac} \rar & \BV^{\ac} \rar & 0
	\end{tikzcd}
	\]
	Forgetting the differentials, it corresponds to
	\[
	\begin{tikzcd}[column sep=small]
		\Com^{\ac} \rar & \Com^\ac \oplus \rM^* \oplus s^{-1}\rM^* \rar & \Com^\ac \oplus \rM^*
	\end{tikzcd}
	\]
	by \cref{thm:FormcBVac}, with the canonical inclusion and projection defined by the direct summands.
	Since the cobar construction is a left adjoint, we get the exact sequence
	\[
	\begin{tikzcd}[column sep=small]
		\Com_\infty \rar & \cBV_{\!\infty} \rar & \BV_{\!\infty} \rar & 0
	\end{tikzcd}
	\]
	and this implies that the map \(\cBV_{\!\infty} \to \BV_{\!\infty}\) is a fibration.
	Let us now focus on the first map.
	Forgetting the differentials, it is the inclusion into the coproduct
	\[
	\Com_\infty \to \Com_\infty \vee \mathcal{T}(s^{-1}\rM^* \oplus s^{-2}\rM^*).
	\]
	Both operads \(\Com_\infty\) and \(\cBV_{\!\infty}\) are non-negatively graded.
	The degree \(0\) summand of the generating space \(s^{-1}\rM^* \oplus s^{-2}\rM^*\) is spanned by \(s^{-1}(s \c) \cong \c\) and \(s^{-1}(s \square) \cong \square\), whose differential vanishes.
	Defining the increasing and exhaustive filtration
	\[
	\mathrm{S}_k \coloneqq \bigoplus_{l=0}^k (s^{-1}\rM^* \oplus s^{-2}\rM^*)_l,
	\]
	we obtain an increasing and exhaustive filtration \(\Com_\infty \vee \mathcal{T}(\mathrm{S}_k)\) of the operad \(\cBV_{\!\infty}\) satisfying
	\[
	(d_1 + d_\psi + d_2)(S_k) \subset \Com_\infty \vee \mathcal{T}(\mathrm{S}_{k-1})
	\]
	for all \(k \geqslant 0\).
	This proves that the map \(\Com_\infty \to \cBV_{\!\infty}\) is a cofibration, which is acyclic by the arguments of \cref{prop:cBVKoszul}.
\end{proof}

A straightforward consequence of this theorem is that the kernel of the map \(\cBV_{\!\infty} \to \BV_{\!\infty}\) is a model for the \textit{homotopy fibre} of the map \(\Com_\infty \to \BV_{\!\infty}\).
By \cref{thm:FormcBVac}, this kernel is the ideal of \(\cBV_{\!\infty}\) generated by \(s^{-2}\rM^*\).
It is explicitly given by the linear span of trees with vertices labelled by elements of \(s^{-1}\overline{\cBV}^{\ac}\) such that at least one vertex is labelled by an element of \(s^{-2}\rM^*\).

%
%



\subsection{Homotopy algebras}\label{ss:homotopy_algebras}

We verify that algebras over the operad \(\cBV_{\!\infty}\) are precisely \(\cBV_{\!\infty}\)-algebras as defined in \cref{ss:generating_maps} and compare them to the \(\BV^\square_\infty\)-algebras introduced by M. Reiterer \cite{Reiterer2020HomotopyBVYMCK}.

\medskip

The main technical tool we will use is the following.

\begin{lemma}[{\cite[Proposition~10.1.1]{LodayVallette12}}]\label{lem:TwCAlg}
	For any conilpotent cooperad \(\rC\),
	an \(\Omega\rC\)-algebra structure \(\Omega\rC \to \End_A\) on a chain complex \(A\) is equivalent to a twisting morphism \(\rC \to \End_A\).
\end{lemma}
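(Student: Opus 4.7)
The plan is to recognize that the two sides of the asserted equivalence are both controlled by the same data, namely a degree $-1$ $\Sy$-module map $\tau\colon\rC\to\End_A$, and that the compatibility conditions on each side translate into one another under this correspondence.

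First I would invoke the universal property of the free operad. Since $\Omega\rC = \big(\cT(s^{-1}\overline{\rC}),\, d_1+d_2\big)$ is quasi-free on $s^{-1}\overline{\rC}$, any morphism of graded operads $f\colon\Omega\rC\to\End_A$ is determined by and freely generated from its restriction to $s^{-1}\overline{\rC}$. Desuspending, this is the same as a degree $-1$ map of $\Sy$-modules $\tau\colon\overline{\rC}\to\End_A$, which we extend by zero on the counit to obtain $\tau\colon\rC\to\End_A$. Conversely, any such $\tau$ extends uniquely to a graded operad map $f_\tau\colon\cT(s^{-1}\overline{\rC})\to\End_A$. This settles the graded bijection; the remaining work is to show that $f_\tau$ respects the differentials if and only if $\tau$ satisfies the Maurer--Cartan equation.

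Next I would compare differentials. The cobar differential decomposes as $d_1+d_2$, where $d_1$ is the unique derivation extending the desuspension of the internal differential of $\rC$, and $d_2$ is the unique derivation extending the desuspension of the infinitesimal decomposition map $\rC\to\rC\circ_{(1)}\rC$. Since both $\partial_{\End_A}\circ f_\tau$ and $f_\tau\circ(d_1+d_2)$ are derivations with respect to the same operad map $f_\tau$, to check their equality it suffices to check on the generators $s^{-1}\overline{\rC}$. Unwinding the definitions, on a generator $s^{-1}c$ we have $f_\tau(d_1(s^{-1}c)) = -\tau(d_{\rC}c)$, $f_\tau(d_2(s^{-1}c)) = (\tau\star\tau)(c)$, and $\partial_{\End_A}(f_\tau(s^{-1}c)) = \partial_{\End_A}\tau(c)$, where $\star$ denotes the pre-Lie product \eqref{eq:preLie_product} in the convolution operad $\Hom_\Sy(\rC,\End_A)$. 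Equality of the two sides on every generator is therefore exactly
\[
\partial_{\End_A}\tau \;+\; \tau\,d_{\rC} \;+\; \tau\star\tau \;=\; 0,
\]
that is, the Maurer--Cartan equation defining a twisting morphism.

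The main care point will be in fixing the signs and the convention that the bar/cobar suspension introduces when passing between $s^{-1}\overline{\rC}$ and $\overline{\rC}$, as well as in checking that the infinitesimal decomposition of $\rC$ indeed produces the pre-Lie term $\tau\star\tau$ under $f_\tau$ (as opposed to, say, its full operadic composite). Once these sign conventions are nailed down, both implications follow from the derivation argument above, giving the desired natural bijection between operad morphisms $\Omega\rC\to\End_A$ and twisting morphisms $\rC\to\End_A$.
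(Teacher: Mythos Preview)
Your argument is correct and is precisely the standard proof of this fact. Note, however, that the paper does not supply its own proof of this lemma: it is stated with a citation to \cite[Proposition~10.1.1]{LodayVallette12} and used as a black box. What you have written is essentially the proof one finds in that reference, so there is nothing to compare against beyond saying that your approach coincides with the cited one.

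One small remark on your sign bookkeeping: as you already flag, the passage from \(f_\tau\circ d_2\) on generators to the pre-Lie term \(\tau\star\tau\) and from \(f_\tau\circ d_1\) to \(\pm\tau\,d_{\rC}\) each picks up a Koszul sign from the desuspension, and these must conspire to yield the Maurer--Cartan equation with the sign convention used in the paper, namely \(\partial_{\End_A}\tau + \tau\,\partial_{\rC} + \tau\star\tau = 0\). Your caveat about nailing down conventions is well placed; once done, the argument goes through exactly as you describe.
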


\subsubsection{\(\rC_\infty\)-algebras}

Let us start with the case of \(\rC_\infty\)-algebras to fix the general method.

\begin{proposition}
	Algebras over the Koszul resolution \(\Com_\infty \defeq \Omega \Com^{\ac}\) are precisely the \(\rC_\infty\)-algebras defined in \cref{ss:generating_maps}.
\end{proposition}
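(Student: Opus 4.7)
The plan is to apply \cref{lem:TwCAlg}, which reduces the proposition to showing that a twisting morphism $\tau : \Com^\ac \to \End_A$ is precisely the data of a $\rC_\infty$-algebra structure on $A$ in the sense of \cref{ss:generating_maps}. The argument then splits into identifying the underlying generating maps and identifying the Maurer--Cartan relation with the $A_\infty$-relations.

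First I would recall the description of the Koszul dual cooperad. By \cref{ex:Comac}, $\Com^\ac \cong \Lie_1^c$; its arity $n$ component has dimension $(n-1)!$ and is concentrated in homological degree $n-1$, and its internal differential is trivial. The classical duality between the Lie cooperad and shuffle-symmetric multilinear maps, developed in \cite[Ch.~13]{LodayVallette12}, provides a natural isomorphism between $\Sy_n$-equivariant degree $-1$ maps $\tau_n : \Com^\ac(n) \to \End_A(n)$ and degree $2-n$ linear maps $m_n : A^{\otimes n} \to A$ satisfying the shuffle relations
\[
\sum_{\sigma \in \Sh(j, n-j)} \sign(\sigma) \, m_n \circ \sigma = 0, \qquad 1 \leqslant j \leqslant n-1.
\]
Assembling these componentwise gives a bijection between degree $-1$ $\Sy$-equivariant maps $\tau : \Com^\ac \to \End_A$ and collections $\set{m_n}_{n \geqslant 1}$ of shuffle-symmetric operations of the required degrees.

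Next I would unpack the Maurer--Cartan equation. Since $\partial_{\Com^\ac} = 0$, the equation reduces to $\partial_{\End_A} \circ \tau + \tau \star \tau = 0$. Projecting this identity onto the $\Sy$-module of cogenerators in arity $n$ and using the fact that the cocomposition of $\Com^\ac$ dual to the partial composition in $\Lie$ assembles all two-vertex tree decompositions with appropriate signs, one obtains exactly
\[
\sum_{r+s+t=n} (-1)^{r+st}\, m_{r+1+t} \circ \big(\id^{\ot r} \ot m_s \ot \id^{\ot t}\big) = 0,
\]
i.e., the defining relation of a $\rC_\infty$-algebra.

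The main obstacle is the bookkeeping in both steps: matching the $\Sy_n$-action on $\Com^\ac(n)$ with the shuffle symmetries on the $m_n$, and tracking the Koszul signs arising from the cooperadic cocomposition so that they match the $A_\infty$-signs. These computations are entirely standard for the operad $\Com$, as carried out in \cite[\S 13.1]{LodayVallette12}, so after setting up the identifications explicitly the proposition follows by direct verification.
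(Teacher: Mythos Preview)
Your proposal is correct and follows essentially the same approach as the paper: both reduce via \cref{lem:TwCAlg} to twisting morphisms out of \(\Com^{\ac}\cong\Lie_1^c\), and both identify such morphisms with \(\rA_\infty\)-structures whose operations vanish on signed shuffle sums. The only presentational difference is that the paper makes the mechanism explicit by invoking Ree's theorem---the kernel of the surjection \(\Ass_1^c\twoheadrightarrow\Lie_1^c\) is spanned by the signed shuffles---so that the argument literally factors through the known description of \(\rA_\infty\)-algebras, whereas you appeal directly to the ``Lie cooperad / shuffle-symmetric maps'' duality and then verify the Maurer--Cartan identity by hand; these are the same content.
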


\begin{proof}
	By \cref{lem:TwCAlg}, consider a twisting morphism \(\alpha \colon \Com^{\ac} \to \End_A\), that is, a degree \(-1\) map with \(\alpha(\id)=0\) satisfying the Maurer--Cartan equation
	\begin{equation}\label{EQ:MCCOM}
		\partial_A \alpha + \alpha \star \alpha = 0,
	\end{equation}
	where \(\partial_A\) denotes the differential of \(\End_A\).
	As recalled in \cref{ex:Comac}, the Koszul dual cooperad \(\Com^\ac \cong \Lie^c_1\) is the cooperad of shifted Lie coalgebras.
	The relationship with the Koszul dual cooperad of associative algebras \(\Ass^\ac \cong \Ass^c_1\) is given by Ree's theorem
	 \cite[Theorem~1.3.6]{LodayVallette12}: the kernel of the canonical morphism of cooperads \(\Ass^c_1 \to \Lie^c_1\) is spanned
	 by the signed sums of shuffles.
	 Therefore the data of a Maurer--Cartan element \(\alpha \colon \Lie^c_1 \to \End_A\) is equivalent to the data of a
	 Maurer--Cartan element \(\alpha \colon \Ass^c_1 \to \End_A\) vanishing on the signed sum of shuffles. The cooperad \(\Ass^c_1\)
	 is isomorphic to the Koszul dual cooperad \(\Ass^{\ac}\) of the operad \(\Ass\) encoding associative algebras. In the end, a
	 \(\Com_\infty\)-algebra is an \(\Ass_\infty\)-algebra, also called \(\rA_\infty\)-algebra, whose structure operations vanish on the
	 signed sum of shuffles.
\end{proof}

\subsubsection{\(\BV_{\!\infty}\)-algebras}\label{ss:BVinfty}

The case of \(\BV_{\!\infty}\)-algebras is more involved due to the higher complexity of the Koszul dual cooperad \(\BV^\ac\), but it is mandatory on the way to \(\cBV_{\!\infty}\)-algebras.

\begin{theorem}[{\cite[Theorem~20]{GCTV12}}]\label{thm:BVinfty}
	Algebras over the Koszul resolution \(\BV_{\!\infty} \defeq \Omega\BV^{\ac}\) are precisely the \(\BV_{\!\infty}\)-algebras defined in
	 \cref{BVinfinity}, that is chain complexes \((A,d)\) equipped with a collection of maps \(\{m_{p_1,\dots,p_k}^t\}\) satisfying the 		block and shuffle symmetries, with \(m_1^0=d\), and satisfying the relations
	\begin{align*}\tag{\(\mathsf{M}^t_{p_1,\dots,p_k}\)}\label{RelationsM}
		&\quad\ 	\sum_{\mathclap{\begin{array}{c}
			\scriptstyle 0 \leqslant s \leqslant t
	\end{array}}}
	\hspace{25pt}
	\sum_{\hspace*{10pt}\mathclap{\begin{array}{l}
				\scriptstyle I \,\sqcup J = \{1, \dots, k\} \\[-2pt]
				\scriptstyle I = \{i_1, \dots, i_a\} \neq \emptyset \\[-2pt]
				\scriptstyle J = \{j_1, \dots, j_b\}
	\end{array}}}
	\hspace*{35pt} 
	\sum_{\hspace*{0pt}\mathclap{\begin{array}{c}
				\scriptstyle 1 \leqslant q_1 \leqslant p_1 \\[-2pt]
				\scalebox{0.5}{\(\vdots\)} \\[-2pt]
				\scriptstyle 1 \leqslant q_a \leqslant p_a
	\end{array}}}
	\hspace*{3pt}	
	\pm \ m^{s}_{r,\, p_{j_1}, \dots, p_{j_b}}
	\Big(m^{t-s}_{\frac{p_1, \dots, p_{i_a}}{q_1, \dots, q_a}}
	(w_{i_1} \!\ot\dotsb\ot w_{i_a}) \ot w_{j_1} \!\ot\dotsb\ot w_{j_b}\Big)
\\
		&
		-\sum_{1 \leqslant i \leqslant k}
	\hspace*{15pt}
	\sum_{\mathclap{1 \leqslant j \leqslant p_i-1}}
	\ \pm \, m^{t-1}_{p_1, \dots, j, p_i-j, \dots, p_k}
	\Big(w_1 \ot\dotsb\ot w_i^{(1)} \ot w_i^{(2)} \ot\dotsb\ot w_k\Big)=0.
	\end{align*}
\end{theorem}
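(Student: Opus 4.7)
The strategy is to apply \cref{lem:TwCAlg} to the conilpotent cooperad $\rC = \BV^{\ac}$, so that an $\Omega\BV^{\ac}$-algebra structure on a chain complex $(A,d)$ amounts to the data of a twisting morphism $\alpha \colon \BV^{\ac} \to \End_A$. Such an $\alpha$ is a degree $-1$ map of $\Sy$-modules vanishing on the counit and satisfying the Maurer--Cartan equation
\[
\partial_{\End_A}\,\alpha \;+\; \alpha\,(d_1 + d_\varphi) \;+\; \alpha \star \alpha \;=\; 0 .
\]
The task is therefore to translate this single equation, componentwise over the $\Sy$-module $\BV^{\ac}$, into the collection of relations $\mathsf{M}^t_{p_1,\dots,p_k}$ on a collection of operations $m^t_{p_1,\dots,p_k}$.

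The first step is to use the decomposition of \cref{lem:qBVac},
\[
\BV^{\ac} \;\cong\; T^c(\delta) \,\circ\, \Com^c_2 \,\circ\, \Lie^c_1 ,
\]
to produce a canonical basis of $\BV^{\ac}$. An element is specified by a non-negative integer $t$ (the power of the degree $2$ cogenerator $\delta$, coming from $T^c(\delta)$), an unordered partition of the inputs into $k$ blocks (from the cocommutative coalgebra structure $\Com^c_2$), and a choice of shuffled-Lie cobracket pattern of arity $p_i$ on each block (from $\Lie^c_1$). Restricting $\alpha$ to such an element yields precisely a multilinear map $m^t_{p_1,\dots,p_k} \colon A^{\ot p_1} \ot \cdots \ot A^{\ot p_k} \to A$, and the degree count matches the prescribed $3 - 2t - p_1 - \cdots - p_k - k$ because each $\delta$ contributes $2$, each $\Com^c_2$-coproduct contributes $-2$ (giving the $-k$ after accounting for $k-1$ coproducts and the final suspension), and each $\Lie^c_1$-cobracket contributes $-1$, on top of the universal suspension $s$ entering the definition of $\BV^{\ac}$. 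The block symmetry of $m^t_{p_1,\dots,p_k}$ comes directly from the cocommutativity of $\Com^c_2$, and the shuffle symmetry inside each block comes from Ree's theorem applied to the inclusion $\Lie^c_1 \hookrightarrow \Ass^c_1$, exactly as in the $\Com_\infty$ case.

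The next step is to expand the Maurer--Cartan equation evaluated on the basis element indexed by $(t; p_1,\dots,p_k)$. The term $\alpha \star \alpha$ decomposes along the infinitesimal decomposition of $\BV^{\ac}$: each way of contracting a sub-cooperation produces a composition $m^{s}_{r,\,p_{j_1},\dots,p_{j_b}} \circ (m^{t-s}_{\frac{p_1,\dots,p_{i_a}}{q_1,\dots,q_a}} \ot \id)$, with the straight-shuffle extension encoding exactly the combinatorics by which an inner $\Lie^c_1$-cobracket, supported on a sub-block of size $q_i$, sits inside an outer block of size $p_i$. This produces the sum \eqref{eq:relation_maps_1} as the operadic interpretation of that term. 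The internal codifferential $d_\varphi$ on $\BV^{\ac}$ is the one constructed in \cref{ex:qlCondition}; it replaces a cobracket by a product, and its dualisation under $\alpha$ accounts for the ``splitting'' sum \eqref{eq:relation_maps_2} with the predicted signs. Finally, the term $\partial_{\End_A}\,\alpha$ combines with the bracket $[m_1^0, m^t_{p_1,\dots,p_k}]$ arising from the $s=0$, $I=\{1,\dots,k\}$, $a=k$, $r=1$ corner of the first sum, into the differential contribution visible in the explicit formulas of \cref{ss:explicit_descriptions}.

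The main obstacle will be the careful bookkeeping of signs: the operadic suspension entering \eqref{eq:koszul_dual_operad}, the Koszul sign rule, the sign of the block permutation in $\Sh$, and the explicit prefactor $(-1)^{(q_1+\cdots+q_a - a+1)(p_{j_1}+\cdots+p_{j_b} - b) + r - 1}$ must all be reconciled with the sign conventions hidden in the infinitesimal decomposition of $T^c(\delta) \circ \Com^c_2 \circ \Lie^c_1$. This is the place where the argument of \cite[Theorem~20]{GCTV12} does the real work, and the one we would import as is, since all the combinatorics of straight shuffles has been designed precisely to make this matching functorial. Once this identification of generators, symmetries, signs, and relations is complete, the $\mathsf{M}^t_{p_1,\dots,p_k}$ are exactly the components of the Maurer--Cartan equation for $\alpha$, and the equivalence of structures follows.
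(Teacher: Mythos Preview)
Your proposal follows essentially the same route as the paper: both invoke \cref{lem:TwCAlg} to pass to twisting morphisms, use the decomposition of \cref{lem:qBVac} to read off the operations $m^t_{p_1,\dots,p_k}$ with their degrees and block/shuffle symmetries, and then identify the terms $\partial_A\alpha$, $\alpha\star\alpha$, and $\alpha\,d_\varphi$ of the Maurer--Cartan equation with the two sums of $\mathsf{M}^t_{p_1,\dots,p_k}$, deferring the sign bookkeeping to \cite{GCTV12}.

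One correction: the Maurer--Cartan equation you write should be $\partial_A\alpha + \alpha\,d_\varphi + \alpha\star\alpha = 0$, with no $d_1$ term---the presentation $(E,R)$ of $\BV$ in \cref{lemma:presentationBV} has trivial differential on the generators, so the internal codifferential $d_1$ vanishes here (it only appears for $\cBV^\ac$, coming from $\partial\triangle = \square$). Relatedly, your description of $d_\varphi$ as ``replacing a cobracket by a product'' is slightly off: the paper gives the explicit formula
\[
d_\varphi\big(\delta^t\otimes L_1 \odot \cdots \odot L_k\big)
= \sum_{i=1}^k (-1)^{|L_1|+\cdots+|L_{i-1}|}\,\delta^{t-1}\otimes L_1\odot\cdots\odot L_i'\odot L_i''\odot\cdots\odot L_k,
\]
so $d_\varphi$ lowers the $\delta$-power by one and applies the binary part of the $\Lie^c_1$-decomposition to a single block---this is what produces the $m^{t-1}_{p_1,\dots,j,p_i-j,\dots,p_k}$ term.
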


\begin{proof}
	By \cref{lem:qBVac}, the underlying graded \(\Sy\)-modules of the Koszul dual cooperad is given by
		\[
	\q\BV^{\ac} \cong \mathrm{D}^{\ac} \circ \Lie_1^{\ac} \circ \Com^{\ac}
	\cong T^c(\delta) \circ \Com^c_{2} \circ \Lie_{1}^c.
	\]
	These latter three cooperads admits the following underlying graded \(\Sy\)-modules:
	\[
	T^c(\delta)\cong \KK[\delta], \quad
	\Com_{2}^c(n) \cong s^{2n-2} \KK_n,
	\qquad \text{and} \qquad
	\Lie_{1}^c(n)\cong s^{n-1}\mathrm{sign}_{n} \otimes \Lie^*(n),
	\]
	where \(\KK_n\) stands for the trivial representation of \(\Sy_n\) and \(\mathrm{sign}_n\) stands for the sign representation
	of \(\Sy_n\).
	Therefore, the image of any morphism \(\alpha \colon \q\BV^\ac \to \mathrm{End}_A\) of \(\Sy\)-modules is a collection of
	maps \(\{m_{p_1,\dots,p_k}^t\}\) having the degrees and symmetries described in \cref{ss:generating_maps}.

	Then the unique coderivation \(d_\varphi\) of the cooperad \(\q\BV^{\ac}\) which extends the map \(\varphi\) is explicitly given by
	\[
	d_\varphi\big(\delta^t\otimes L_1 \odot \cdots \odot L_k\big) =
	\sum_{i=1}^k (-1)^{|L_1|+\cdots+|L_{i-1}|}\, \delta^{t-1}
	\otimes L_1\odot \cdots\odot L_i'\odot L_i''\odot \cdots
	\odot L_k,
	\]
	where \(\odot\) stands for the graded symmetric tensor product and where \(L_i' \odot L_i''\) is Sweedler-type notation for the image of \(L_i\) under the binary part
	\[
	\Lie^c_1 \to \Lie^c_1(2)\otimes_{\Sy_2} \big(\Lie^c_1 \otimes \Lie^c_1\big)
	\]
	of the decomposition map of the cooperad \(\Lie^c_1\).
	The image of \(d_\varphi\) is equal to \(0\) when \(t=0\) or on elements \(L_i \in \Lie^c_1(1)=\id\KK\).

	The Maurer--Cartan equation defining twisting morphisms from \(\BV^\ac=\left(\q\BV^\ac, d_\varphi\right)\) to \(\mathrm{End}_A\)
	is
	\begin{equation*}
		\partial_A \alpha + \alpha d_\varphi+ \alpha \star \alpha = 0.
	\end{equation*}
	Under the convention \(m_1^0=d\), the first and the third terms of this Maurer--Cartan equation coincide with the first term of
	 Relations~\eqref{RelationsM} and the second term of the Maurer--Cartan equation coincides with the second term of
	 Relations~\eqref{RelationsM}, see \cite[\S 2]{GCTV12} for a more details.
\end{proof}

\subsubsection{\(\cBV_{\!\infty}\)-algebras}

\begin{theorem}\label{thm:cBVinfty}
	Algebras over the Koszul resolution \(\cBV_{\!\infty} \defeq \Omega\cBV^{\ac}\) are precisely the \(\cBV_{\!\infty}\)-algebras defined in~\cref{ss:generating_maps}.
\end{theorem}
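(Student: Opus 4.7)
The plan is to adapt the strategy of \cref{thm:BVinfty}, taking into account the extra summand $s^{-1}\rM^*$ in the decomposition of $\cBV^\ac$ established in \cref{thm:FormcBVac}.

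By \cref{lem:TwCAlg}, the statement reduces to showing that twisting morphisms $\alpha \colon \cBV^\ac \to \End_A$, that is, degree $-1$ morphisms of $\Sy$-modules satisfying
\[
\partial_A \alpha + \alpha\,(d_1 + d_\psi) + \alpha \star \alpha = 0,
\]
correspond bijectively to families $\{m^t_{p_1,\dots,p_k}\}$ as in \cref{def:cBV-algebra}, together with the $\rC_\infty$-algebra condition on $(A, m^0_1, m^0_2, \dots)$. Using \cref{thm:FormcBVac}, I would split $\alpha = \alpha_{\Com} + \alpha_{\rM} + \alpha_{s^{-1}\rM}$ according to $\cBV^\ac \cong \Com^\ac \oplus \rM^* \oplus s^{-1}\rM^*$. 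Combined with \cref{lem:qBVac}, the pair $(\alpha_{\Com}, \alpha_{\rM})$, evaluated on the basis elements of the form $\delta^t \ot L_1 \odot \dotsb \odot L_k$, corresponds bijectively to a family $\{m^t_{p_1,\dots,p_k}\}$ with the prescribed degrees and block/shuffle symmetries, via the same bookkeeping as in the proof of \cref{thm:BVinfty}.

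I would then project the Maurer--Cartan equation onto each of the three summands. On $\Com^\ac$, both $d_1$ and $d_\psi$ vanish and $\Com^\ac$ is a sub-cooperad, so the equation reduces to $\partial_A \alpha_{\Com} + \alpha_{\Com} \star \alpha_{\Com} = 0$, which is precisely the $\rC_\infty$-algebra condition on $(m^0_n)_{n \geqslant 1}$. On $\rM^*$, using that $d_1$ is the desuspension isomorphism onto $s^{-1}\rM^*$, the equation uniquely determines $\alpha_{s^{-1}\rM}$ in terms of $\alpha_{\Com}$ and $\alpha_{\rM}$, without imposing additional constraints on these components. In particular, the data that genuinely need to be specified is precisely $(\alpha_{\Com}, \alpha_{\rM})$, matching the indexing of generating maps in \cref{def:cBV-algebra}.

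The main obstacle is the final consistency check: verifying that the projection of the Maurer--Cartan equation onto $s^{-1}\rM^*$ is automatically satisfied once $\alpha_{s^{-1}\rM}$ is determined as above. This amounts to exploiting $(d_1 + d_\psi)^2 = 0$ together with the Jacobi-type identity for the pre-Lie product and $\partial_A^2 = 0$, with careful sign bookkeeping for the cross-terms in $\alpha \star \alpha$ that mix components from the three summands. Once this consistency is established, the bijection with \cref{def:cBV-algebra} follows, yielding the desired characterization of algebras over $\cBV_{\!\infty} = \Omega\cBV^{\ac}$.
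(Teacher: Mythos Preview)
Your proposal is correct and follows essentially the same approach as the paper: reduce to twisting morphisms via \cref{lem:TwCAlg}, split along the decomposition of \cref{thm:FormcBVac}, extract the $\rC_\infty$-relations on $\Com^\ac$, use the $\rM^*$-component of the Maurer--Cartan equation to determine $\alpha_{s^{-1}\rM}$ (equivalently, the obstruction maps $n^t_{p_1,\dots,p_k}$) uniquely, and then verify that the $s^{-1}\rM^*$-component is automatic. The paper carries out the last step by applying $\partial_A$ to the Maurer--Cartan equation on $\mu\in\rM^*$ and using that $\partial_A$ is a derivation of $\End_A$, that $d_1+d_\psi$ is a coderivation of $\cBV^\ac$, the pre-Lie identity $(\alpha\star\alpha)\star\alpha=\alpha\star(\alpha\star\alpha)$, and an induction on weight; these are exactly the ingredients you name, so your plan would go through.
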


\begin{proof}
	By \cref{lem:TwCAlg}, \(\cBV_{\!\infty}\)-algebra structures are equivalent to twisting morphisms
	 \(\alpha \colon \cBV^\ac \to \mathrm{End}_A\), which are degree \(-1\) assignments satisfying the Maurer--Cartan equation
	\begin{equation}\label{eq:TwAlpha}
		\partial_A \alpha + \alpha (d_1+d_\psi) + \alpha \star \alpha = 0.
	\end{equation}
	\cref{thm:FormcBVac} shows that the Koszul dual cooperad \(\cBV^\ac\) satisfies the following decomposition:
	\[
	\begin{tikzcd}[column sep=-2.5pt]
		\cBV^\ac & \cong & \Com^\ac & \oplus &
		\rM^*
		\arrow[rr, bend left=45, "\quad d_1 = s^{-1}", out=80, in=100, distance=1.1em]
		\arrow[loop below, in=240, out=300, distance=2.3em, "d_\varphi"] &
		\oplus & s^{-1}\rM^*, \arrow[loop below, in=245, out=295, distance=2em, "-d_\varphi"]
	\end{tikzcd}
	\]
	where the dg sub-cooperad \(\big(\Com^\ac \oplus \rM^*, d_\varphi\big)\) is isomorphic to the cooperad \(\BV^\ac\).
	Therefore its image under the twisting morphism \(\alpha\) produces operations \(\{m_{p_1,\dots,p_k}^t\}\) for \(t \geqslant 0\), \(k \geqslant 1\), and \(p_1,\ldots,p_k \geqslant 1\), having the same degrees and symmetries as the ones of \(\BV_{\!\infty}\)-algebras described above in \cref{thm:BVinfty}.
	The image of the last summand \(s^{-1}\rM^*\cong \q \BV^\ac/\Com^\ac\) under the twisting morphism \(\alpha\) produces operations \(\{-n_{p_1,\dots,p_k}^t\}\) for \(t\geqslant 0\), \(k\geqslant 1\), and \(p_1,\ldots,p_k\geqslant 1\), with \(t+k\geqslant 2\), satisfying the same symmetries as the operations \(\{m_{p_1,\dots,p_k}^t\}\) but with degree shifted by \(-1\).

	\medskip

	The restriction of the Maurer--Cartan equation \eqref{eq:TwAlpha} to the summand \(\Com^\ac\) is simply \(\partial_A \alpha\!\!\mid_{\Com^\ac} + \alpha\!\!\mid_{\Com^\ac} \star \alpha\!\!\mid_{\Com^\ac} = 0\), which is the Maurer--Cartan equation \eqref{EQ:MCCOM} satisfied by the operations \(\{m^0_{p}\}_{p\geqslant 1}\), which therefore form a \(\rC_\infty\)-algebra.
	The evaluation of the Maurer--Cartan equation \eqref{eq:TwAlpha} on any element \(\mu^t_{p_1, \ldots, p_k}\in \rM^*\), simply denoted by \(\mu\), is equal to
	\[
	\partial_A \alpha(\mu) + \alpha (d_1+d_\psi)(\mu) + (\alpha \star \alpha)(\mu) = 0,
	\]
	which is equivalent to
	\[
	-\alpha \big(s^{-1}\mu\big)= \partial_A \alpha(\mu) +\alpha d_\varphi(\mu) + (\alpha \star \alpha)(\mu).
	\]
	The right-hand side of this equation is the Maurer--Cartan equation defining the relations \eqref{RelationsM} satisfied by the operations
	\(\{m_{p_1,\dots,p_k}^t\}\) of a \(\BV_{\!\infty}\)-algebra.
	Since the left-hand side is equal to the obstruction map \(n_{p_1,\dots,p_k}^t\),
	it completely prescribes their values in terms of the generating maps \(\{m_{p_1,\dots,p_k}^t\}\); this actually recovers
	their definition given in \cref{sssec:Obstructions}.

	It remains to prove that the obstruction maps \(n_{p_1,\dots,p_k}^t = -\alpha \big(s^{-1}\mu^t_{p_1,\ldots, p_k}\big)\)
	automatically satisfy the evaluation of the Maurer--Cartan equation \eqref{eq:TwAlpha} on the last summand \(s^{-1}\rM^*\), that is
	\[\partial_A \alpha \big(s^{-1}\mu\big) -\alpha\big(s^{-1}d_\varphi(\mu)\big)+(\alpha \star \alpha)\big(s^{-1}\mu\big)=0.\]
	The image under \(\partial_A\) of Equation~\eqref{eq:TwAlpha} evaluated on any \(\mu\in \rM^*\) gives
	\[
	\partial_A \alpha (d_1+d_\psi)(\mu) + \partial_A(\alpha \star \alpha)(\mu)=
	\partial_A \alpha (d_1+d_\psi)(\mu) + \big((\partial_A\alpha) \star \alpha\big)(\mu)
	-\big(\alpha \star (\partial_A\alpha)\big)(\mu)=0,
	\]
	since \(\partial_A\) is a derivation of the operad \(\mathrm{End}_A\).
	Since both sides of the product \(\star\) apply to strictly lower weight elements of \(\Com^\ac \oplus \rM^*\), we can apply
	Equation~\eqref{eq:TwAlpha} to them and obtain
	\begin{multline*}
		\partial_A \alpha (d_1+d_\psi)(\mu) -
		\big((\alpha\star\alpha) \star \alpha\big)(\mu)
		-\big((\alpha(d_1+d_\psi) \star \alpha\big)(\mu)
		- \big(\alpha\star (\alpha\star\alpha)\big)(\mu)\\
		- \big(\alpha\star(\alpha(d_1+d_\psi)\big)(\mu)=
		0.
	\end{multline*}
	The pre-Lie relation implies that the degree \(-1\) element \(\alpha\) satisfies \((\alpha\star\alpha) \star \alpha=\alpha\star (\alpha\star\alpha)\).
	Since \(d_1+d_\psi\) is a coderivation of the cooperad \(\cBV^\ac\), we get
	\begin{multline*}
		\partial_A \alpha d_1(\mu) +
		\partial_A \alpha d_\psi(\mu)
		+\big(\alpha\star\alpha\big)(d_1+d_\psi)(\mu)=
		\partial_A \alpha \big(s^{-1}\mu\big) +
		\big(\alpha \star \alpha\big)\big(s^{-1}\mu\big)\\
		+
		\big(\partial_A \alpha + \alpha \star\alpha\big)\big(d_\psi(\mu)\big)=
		0.
	\end{multline*}
	Since \(d_\psi\) lowers the weight by one on elements of \(\rM^*\), we can use Equation~\eqref{eq:TwAlpha} evaluated on \(d_\psi(\mu)\) to finally get
	\[
		\big(\partial_A \alpha +
		\alpha \star \alpha\big)\big(s^{-1}\mu\big)-
		\alpha(d_1+d_\psi)d_\psi(\mu)
		=
		\partial_A \alpha \big(s^{-1}\mu\big) -
		\alpha\big(s^{-1}d_\varphi(\mu)\big)
		+(\alpha \star \alpha)\big(s^{-1}\mu\big)
		=0. \ \qedhere
	\]
\end{proof}

\begin{corollary}\label{cor:RelationsN}
The obstruction maps of any \(\cBV_{\!\infty}\)-algebra satisfy the relation
	\begin{align*}\tag{\(\mathsf{N}^t_{p_1,\dots,p_k}\)}\label{RelationsN}
		&\quad\ 	\sum_{\mathclap{\begin{array}{c}
			\scriptstyle 0 \leqslant s \leqslant t
	\end{array}}}
	\hspace{25pt}
	\sum_{\hspace*{10pt}\mathclap{\begin{array}{l}
				\scriptstyle I \,\sqcup J = \{1, \dots, k\} \\[-2pt]
				\scriptstyle I = \{i_1, \dots, i_a\} \neq \emptyset \\[-2pt]
				\scriptstyle J = \{j_1, \dots, j_b\}
	\end{array}}}
	\hspace*{35pt} 
	\sum_{\hspace*{0pt}\mathclap{\begin{array}{c}
				\scriptstyle 1 \leqslant q_1 \leqslant p_1 \\[-2pt]
				\scalebox{0.5}{\(\vdots\)} \\[-2pt]
				\scriptstyle 1 \leqslant q_a \leqslant p_a
	\end{array}}}
	\hspace*{3pt}	
	\pm \ m^{s}_{r,\, p_{j_1}, \dots, p_{j_b}}
	\Big(n^{t-s}_{\frac{p_1, \dots, p_{i_a}}{q_1, \dots, q_a}}
	(w_{i_1} \!\ot\dotsb\ot w_{i_a}) \ot w_{j_1} \!\ot\dotsb\ot w_{j_b}\Big)
\\&\quad 	-\sum_{\mathclap{\begin{array}{c}
			\scriptstyle 0 \leqslant s \leqslant t
	\end{array}}}
	\hspace{25pt}
	\sum_{\hspace*{10pt}\mathclap{\begin{array}{l}
				\scriptstyle I \,\sqcup J = \{1, \dots, k\} \\[-2pt]
				\scriptstyle I = \{i_1, \dots, i_a\} \neq \emptyset \\[-2pt]
				\scriptstyle J = \{j_1, \dots, j_b\}
	\end{array}}}
	\hspace*{35pt} 
	\sum_{\hspace*{0pt}\mathclap{\begin{array}{c}
				\scriptstyle 1 \leqslant q_1 \leqslant p_1 \\[-2pt]
				\scalebox{0.5}{\(\vdots\)} \\[-2pt]
				\scriptstyle 1 \leqslant q_a \leqslant p_a
	\end{array}}}
	\hspace*{3pt}	
	\pm \ n^{s}_{r,\, p_{j_1}, \dots, p_{j_b}}
	\Big(m^{t-s}_{\frac{p_1, \dots, p_{i_a}}{q_1, \dots, q_a}}
	(w_{i_1} \!\ot\dotsb\ot w_{i_a}) \ot w_{j_1} \!\ot\dotsb\ot w_{j_b}\Big)
\\
		&\quad
		+\sum_{1 \leqslant i \leqslant k}
	\hspace*{15pt}
	\sum_{\mathclap{1 \leqslant j \leqslant p_i-1}}
	\ \pm \, n^{t-1}_{p_1, \dots, j, p_i-j, \dots, p_k}
	\Big(w_1 \ot\dotsb\ot w_i^{(1)} \ot w_i^{(2)} \ot\dotsb\ot w_k\Big)=0.
	\end{align*}
	for \(t \geqslant 0\), \(k \geqslant 1\), and \(p_1,\ldots,p_k \geqslant 1\) such that \(t+k\geqslant 2\), where the sign of the second sum is equal to the sign of the first sum of Relations~\eqref{RelationsM}, where the sign of the first sum is the same but with the extra term
	\[
	(-1)^{p_{j_1}+\cdots+p_{j_b}+r-1},
	\]
	and where the sign of the third sum is equal to the sign of the second sum of \eqref{RelationsM}.
\end{corollary}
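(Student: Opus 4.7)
The plan is to deduce Corollary \ref{cor:RelationsN} directly from the proof of Theorem \ref{thm:cBVinfty}. That theorem shows that a $\cBV_{\!\infty}$-algebra structure on $A$ is equivalent to a twisting morphism $\alpha \colon \cBV^\ac \to \End_A$ satisfying the Maurer--Cartan equation
\[
\partial_A \alpha + \alpha(d_1 + d_\psi) + \alpha \star \alpha = 0,
\]
and that under the identification $n^t_{p_1,\dots,p_k} = -\alpha(s^{-1}\mu^t_{p_1,\dots,p_k})$, the evaluation of this equation on any element of the summand $s^{-1}\rM^*$ is automatically satisfied. The content of the corollary is precisely the unpacking of this automatic identity into operations on inputs of $A$.

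First, I would specialize the Maurer--Cartan equation to an element $s^{-1}\mu \in s^{-1}\rM^*$. From Theorem \ref{thm:FormcBVac}, the codifferential $d_1$ vanishes on $s^{-1}\rM^*$ (as $d_1^2 = 0$ and $d_1$ is the desuspension on $\rM^*$), while $d_\psi$ restricts to $-d_\varphi$ on $s^{-1}\rM^*$. The MC equation on $s^{-1}\mu$ therefore reduces to
\[
\partial_A \alpha(s^{-1}\mu) - \alpha\bigl(d_\varphi(s^{-1}\mu)\bigr) + (\alpha \star \alpha)(s^{-1}\mu) = 0,
\]
which formally mirrors the $\BV_{\!\infty}$ Maurer--Cartan equation already expanded as \eqref{RelationsM} in Theorem \ref{thm:BVinfty}, but now evaluated on the desuspended element.

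Second, I would expand $(\alpha \star \alpha)(s^{-1}\mu)$ using the cooperadic decomposition of $\cBV^\ac \cong \Com^\ac \oplus \rM^* \oplus s^{-1}\rM^*$. Each partial composition contributes according to which summand each of the two factors lands in: $\Com^\ac \oplus \rM^*$ yields a generating map $m$, while $s^{-1}\rM^*$ yields an obstruction map $n$ (with an overall sign from $\alpha(s^{-1}\mu) = -n$). Since the total number of $s^{-1}$ factors in the coproduct of $s^{-1}\mu$ equals one, exactly one of the two $\alpha$-factors hits $s^{-1}\rM^*$. This produces two families of terms: the first sum in \eqref{RelationsN}, of shape $m \circ n$ (outer $\alpha$ in $\rM^*\oplus\Com^\ac$, inner in $s^{-1}\rM^*$), and the second sum, of shape $n \circ m$ (the reverse configuration). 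The third sum, of shape $n^{t-1}$, is the explicit expansion of $-\alpha \circ d_\varphi$ on $s^{-1}\rM^*$, which is formally identical to the second sum of \eqref{RelationsM} on the obstruction maps.

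The main obstacle will be the precise tracking of signs, especially the extra factor $(-1)^{p_{j_1}+\cdots+p_{j_b}+r-1}$ that distinguishes the first sum. This factor arises from commuting the desuspension $s^{-1}$ through the tensor components produced by the cooperadic decomposition of $s^{-1}\mu$, where the pieces carry Koszul signs depending on the arities $p_{j_1},\dots,p_{j_b}$ and the arity $r$ of the inserted factor. The remaining signs coincide with those already computed in the proof of Theorem \ref{thm:BVinfty} for the $\BV_{\!\infty}$ relations \eqref{RelationsM}, since the cooperadic decomposition on $\rM^*$ and on $s^{-1}\rM^*$ differ only by a uniform homological shift. Once these signs are confirmed, assembling the three contributions into \eqref{RelationsN} is a direct bookkeeping exercise.
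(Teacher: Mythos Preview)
Your proposal is correct and follows essentially the same approach as the paper: both recognize that the relations \eqref{RelationsN} are nothing but the explicit expansion of the Maurer--Cartan equation evaluated on elements of the summand \(s^{-1}\rM^*\), an identity already established in the proof of Theorem~\ref{thm:cBVinfty}. Your account is somewhat more detailed than the paper's---in particular, your explanation of why exactly one of the two \(\alpha\)-factors in \((\alpha\star\alpha)(s^{-1}\mu)\) lands in \(s^{-1}\rM^*\), and how this splits the contribution into the \(m\circ n\) and \(n\circ m\) sums---but the underlying argument is identical.
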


\begin{proof}
	These relations coincide to the Maurer--Cartan equation
		\[-\partial_A \alpha \big(s^{-1}\mu\big) +\alpha\big(s^{-1}d_\varphi(\mu)\big)-(\alpha \star \alpha)\big(s^{-1}\mu\big)=0\]
		established above in the proof of \cref{thm:cBVinfty}:
	under the convention \(m_1^0=d\), the first and the third terms of this Maurer--Cartan equation coincide with the first two terms of
	 Relations~\eqref{RelationsN} and the second term of the Maurer--Cartan equation coincides with the last term of
	 Relations~\eqref{RelationsN}.
\end{proof}

\subsubsection{\(\BV^\square_{\!\infty}\)-algebras}\label{ss:reiterer}

The notion of \(\cBV_{\!\infty}\)-algebra introduced in this paper is close to the notion of \(\BV^\square_\infty\)-algebra introduced by M. Reiterer in \cite{Reiterer2020HomotopyBVYMCK}.
His definition can be unravelled to provide a collection of operations \(\{m^t_{p_1,\ldots, p_k}\}\) with the same degrees and symmetries as ours and, when they are unobstructed, they also give rise to \(\BV_{\!\infty}\)-algebra structures.

In a \(\BV^\square_\infty\)-algebra, there is a linear operator of degree \(0\) given by the action of a certain element of a Hopf algebra; such an operator can be encoded by the general operator \(n^1_1\) in a \(\cBV_{\!\infty}\)-algebra.
The relations satisfied by a \(\BV^\square_\infty\)-algebra are obtained by starting from the relations of a \(\BV_{\!\infty}\)-algebra and by adding an extra term, which is equal to \(0\) when \(k=1\), except for \(t=1\) and \(p_1=1\), where it is precisely the aforementioned operator.
Such obstructions are encoded in the present operations \(\{n^t_{p_1, \ldots, p_k}\}\), but in the case of \(\BV^\square_\infty\)-algebras these obstructions admit a formula involving only the operator \(n^1_1\) and the operations \(\{m^t_{p_1, \ldots, p_k}\}\).
Thus the notion of a \(\BV^\square_\infty\)-algebra is more restrictive than the present notion of a \(\cBV_{\!\infty}\)-algebra.

\medskip

The discrepancy between these two notions comes from the different approaches which lead to them.
M. Reiterer starts with an equivalent definition of \(\BV_{\!\infty}\)-algebras in terms of a series of (co)derivations on the (co)free Gerstenhaber algebra whose sum satisfies some Maurer--Cartan equation, see \cite[\S 4.1]{Reiterer2020HomotopyBVYMCK}, and then modifies this equation ``by hand'' in order to add obstructions with a prescribed form.
One limitation of this approach is that it produces the above-mentioned constraints, including the fact that there are no obstructions for \(t\geqslant 0\), \(p\geqslant 0\), except in the sole case \(t=p=1\).
The equivalent definition of a \(\cBV_{\!\infty}\)-algebra in terms of a square-zero coderivation of the cofree \(\cBV^{\ac}\)-coalgebra is more subtle and cannot be obtained so easily from the similar definition of a \(\BV_{\!\infty}\)-algebra, see \cref{Def:BarConstr} and \cref{prop:FormBar}.

\medskip

Even if the notion of \(\BV^\square_\infty\)-algebras could be encoded by an operad, we do not expect this operad to be a cofibrant replacement of the operad \(\cBV\); so \(\BV^\square_\infty\)-algebras seem not to be a model for homotopy \(\cBV\)-algebras as we understand them.


\subsection{Homotopical properties}\label{ss:tools}

The cofibrant replacement \(\cBV_{\!\infty} \defeq \Omega \cBV^{\ac}\) of the operad \(\cBV\) has the crucial feature of being the cobar construction of a cooperad, namely the Koszul dual \(\cBV^{\ac}\).
This places \(\cBV_{\!\infty}\)-algebras into a rich algebraic and homotopical framework that provides them with a \emph{deformation theory}, an \emph{obstruction theory}, \emph{\(\infty\)-morphisms}, a \emph{homotopy transfer} and a \emph{rectification theorem}.
We only sketch these properties here and refer to \cite[Chapters~10--12]{LodayVallette12} for detailed proofs.

\subsubsection{Deformation theory}

\begin{definition}[{\cite[Section~10.1]{LodayVallette12}}]
	For any chain complex \(A\), the \defn{deformation dg pre-Lie algebra} of \(\cBV_{\!\infty}\)-algebra structures on \(A\) is
	\[
	\g_{\cBV, A} \defeq
	\left(
	\Hom_{\Sy} \left({\cBV}^{\ac}, \End_A \right), \partial, \star
	\right).
	\]
\end{definition}

\begin{proposition}\label{prop:DefpreLie}
	The underlying module of the deformation dg pre-Lie algebra is isomorphic to
	\[
	\Hom_{\Sy}(\cBV^{\ac}, \End_A)
	\cong
	s^{2}\, \Hom\big(\mathrm{S}^c(s\mathrm{Lie}^c(sA))[\delta], A\big)
	\oplus
	s^{3}\, \Hom\big(\mathrm{S}^c(s\mathrm{Lie}^c(sA))[\delta]/s\mathrm{Lie}^c(sA), A\big),
	\]
	where \(\delta\) has degree \(2\), \(\mathrm{S}^c(-)\) is the conilpotent cofree cocommutative coalgebra, and \(\mathrm{Lie}^c(-)\) is the conilpotent cofree Lie coalgebra.
\end{proposition}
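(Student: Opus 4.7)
The plan is to combine the $\Sy$-module decomposition of \cref{thm:FormcBVac} with the standard Schur-functor adjunction
\[
\Hom_\Sy(M, \End_A) \cong \Hom\bigl(M(A),\, A\bigr),
\qquad M(A) \defeq \bigoplus_n M(n) \otimes_{\Sy_n} A^{\otimes n},
\]
valid for any $\Sy$-module $M$. Since the claim concerns only the underlying graded module, I can ignore the differentials and the pre-Lie product throughout and work purely at the level of graded $\Sy$-modules.

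First I would invoke \cref{thm:FormcBVac} to write
\[
\cBV^\ac \cong \Com^\ac \oplus \rM^* \oplus s^{-1}\rM^*
\]
as graded $\Sy$-modules, so that $\Com^\ac \oplus \rM^*$ recovers the underlying graded $\Sy$-module of $\BV^\ac$. By \cref{lem:qBVac}, the latter admits the further description $\BV^\ac \cong T^c(\delta) \circ \Com^c_2 \circ \Lie^c_1$, whence $\BV^\ac(A) \cong \Com^c_2\!\bigl(\Lie^c_1(A)\bigr)[\delta]$.

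The next step is the suspension bookkeeping. Starting from the arity-wise identifications $\Lie^c_1(n) \cong s^{n-1}\mathrm{sgn}_n \otimes \Lie^c(n)$ and $\Com^c_2(n) \cong s^{2n-2}\KK_n$ of \cref{lem:qBVac}, and combining them with the standard sign identity $(sA)^{\otimes n} \cong s^n\mathrm{sgn}_n \otimes A^{\otimes n}$ of $\Sy_n$-representations, one obtains the graded isomorphisms
\[
\Lie^c_1(A) \cong s^{-2}\, s\Lie^c(sA)
\qquad\text{and}\qquad
\Com^c_2(V) \cong s^{-2}\,\mathrm{S}^c(s^2 V).
\]
Specialising the second to $V = \Lie^c_1(A)$ and substituting the first then yields
\[
\BV^\ac(A) \cong s^{-2}\,\mathrm{S}^c\!\bigl(s\Lie^c(sA)\bigr)[\delta],
\]
and applying $\Hom(-,A)$ produces the first summand $s^2\,\Hom\bigl(\mathrm{S}^c(s\Lie^c(sA))[\delta],\, A\bigr)$ of the proposition.

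For the second summand I would observe that the sub-cooperad $\Com^\ac \subset \BV^\ac$ spanned by $s\m$ corresponds, under the above identification, to the factor $\delta^0 \otimes \Com^c_2(1) \otimes \Lie^c_1 \cong \Lie^c_1$, and therefore in Schur terms to $s^{-2}\, s\Lie^c(sA)$. Consequently the complementary coideal satisfies
\[
\rM^*(A) \cong s^{-2}\bigl[\mathrm{S}^c\bigl(s\Lie^c(sA)\bigr)[\delta]\, /\, s\Lie^c(sA)\bigr],
\]
and the extra $s^{-1}$ shift on the third summand of $\cBV^\ac$ becomes an additional $s$ after $\Hom(-,A)$, producing the second summand $s^3\,\Hom\bigl(\mathrm{S}^c(s\Lie^c(sA))[\delta]/s\Lie^c(sA),\, A\bigr)$. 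The main obstacle in executing this plan is the careful tracking of suspensions, Koszul signs and sign representations of the symmetric groups; once those are handled, all the conceptual input is supplied by \cref{thm:FormcBVac} and \cref{lem:qBVac}.
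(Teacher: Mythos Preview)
Your proposal is correct and follows essentially the same approach as the paper: both use the decomposition of \cref{thm:FormcBVac} to write \(\cBV^{\ac} \cong \BV^{\ac} \oplus s^{-1}\BV^{\ac}/\Com^{\ac}\), compute \(\BV^{\ac}(A)\) from \cref{lem:qBVac}, and then pass to \(\Hom(-,A)\). The paper's proof is more terse, simply asserting \(\BV^{\ac}(A) \cong s^{-2}\mathrm{S}^c(s\mathrm{Lie}^c(sA))[\delta]\) and \(\Com^{\ac}(A) \cong s^{-1}\mathrm{Lie}^c(sA)\) without spelling out the suspension bookkeeping that you carry out explicitly.
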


\begin{proof}
	This follows from the decomposition of the Koszul dual cooperad given in \cref{thm:FormcBVac}
	\[
	\cBV^{\ac}
	\cong \underbrace{\Com^{\ac} \oplus \rM^*}_{\cong \BV^{\ac}} \oplus s^{-1}\rM^*
	\cong \BV^{\ac} \oplus s^{-1}\BV^{\ac}/\Com^{\ac},
	\]
	which yields
	\begin{align*}
		\cBV^{\ac}(A) &\cong \BV^{\ac}(A) \oplus s^{-1}(\BV^{\ac}/\Com^{\ac})(A) \\
		&\cong s^{-2}\,\mathrm{S}^c(s\mathrm{Lie}^c(sA))[\delta] \oplus
		s^{-3}\,\mathrm{S}^c(s\mathrm{Lie}^c(sA))[\delta]/s^{-2}\mathrm{Lie}^c(sA),
	\end{align*}
	since \(\Com^{\ac}(A) \cong s^{-1}\mathrm{Lie}^c(sA)\).
\end{proof}

Maurer--Cartan elements, i.e.\ degree \(-1\) elements \(\alpha\) satisfying
\[
\partial\alpha + \alpha \star \alpha
= \partial_A \alpha + \alpha(d_1+d_\psi) + \alpha \star \alpha = 0,
\]
of the deformation dg pre-Lie algebra \(\g_{\cBV, A}\) are in one-to-one correspondence with \(\cBV_{\!\infty}\)-algebra structures on \(A\), see \cite[Proposition~10.1.1]{LodayVallette12}.
One may start with a graded module \(A\) whose differential is encoded in the first component of \(\alpha\), or with a dg module \((A, d)\) and a Maurer--Cartan element having trivial component on \(A\).
The pre-Lie product admits explicit integration formulas by \cite[Theorem~2]{DSV16}, yielding a complete topological group structure on the degree zero part: this is the \defn{gauge group of symmetries} \(\mathrm{G}\).
Functoriality of the deformation pre-Lie algebra in the Koszul dual cooperad allows one to relate the deformation dg pre-Lie algebras of \(\cBV_{\!\infty}\)-algebras with those of \(\BV_{\!\infty}\)-algebras, \(\Gerst_\infty\)-algebras, and \(\Com_\infty\)-algebras.

\begin{definition}
	The \defn{cohomology groups} of a \(\cBV_{\!\infty}\)-algebra \(\alpha \in \mathrm{MC}(\g_{\cBV, A})\) are the groups defined by the chain complex
	\[
	\g_{\cBV, A}^\alpha \defeq
	\big(\Hom_{\Sy}(\cBV^{\ac}, \End_A), \partial^\alpha \defeq \partial + [\alpha, -]\big),
	\]
	with differential twisted by \(\alpha\).
\end{definition}

Since twisting by a Maurer--Cartan element produces a dg Lie algebra, this chain complex carries a Lie bracket, sometimes called the \defn{intrinsic Lie bracket}, which detects the deformations of the \(\cBV_{\!\infty}\)-algebra structure, see \cite[Section~12.2]{LodayVallette12}.

\begin{theorem}
	The space \(Z_{-1}\big(\g_{\cBV, A}^\alpha\big)\) of degree \(-1\) cycles is isomorphic to the tangent space at \(\alpha\) of the algebraic variety
	\(\mathrm{MC}\big(\g_{\cBV, A}\big)\) and the degree \(-1\) homology group
	\(H_{-1}\big(\g_{\cBV, A}^\alpha\big)\) is isomorphic to the tangent space at \(\alpha\) of the algebraic stack
	\(\mathcal{MC}\big(\g_{\cBV, A}\big)\defeq \mathrm{MC}\big(\g_{\cBV, A}\big)/\mathrm{G}\) defined as the moduli space of Maurer--Cartan elements up to gauge group action.
\end{theorem}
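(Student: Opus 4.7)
The plan is to translate both assertions into the standard dual-numbers formalism for tangent spaces and then unwind the definitions.

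First I would address the tangent space of the variety \(\mathrm{MC}(\g_{\cBV, A})\). By definition, a tangent vector at \(\alpha\) is an \(\KK[\epsilon]/(\epsilon^2)\)-point of \(\mathrm{MC}(\g_{\cBV, A})\) reducing to \(\alpha\) modulo \(\epsilon\); equivalently, an element \(\beta \in (\g_{\cBV, A})_{-1}\) such that \(\alpha + \epsilon\beta \in \g_{\cBV, A} \otimes \KK[\epsilon]/(\epsilon^2)\) satisfies the Maurer--Cartan equation. Expanding
\[
\partial(\alpha + \epsilon\beta) + (\alpha + \epsilon\beta) \star (\alpha + \epsilon\beta) = 0
\]
and using that \(\alpha\) is itself a Maurer--Cartan element, the coefficient of \(\epsilon\) becomes
\[
\partial\beta + \alpha \star \beta + \beta \star \alpha = 0.
\]
Since \(|\alpha| = |\beta| = -1\), the right-hand side is precisely \((\partial + [\alpha, -])\beta = \partial^\alpha \beta\), where the bracket is the antisymmetrization of the pre-Lie product \(\star\). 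Therefore the tangent vectors are exactly the degree \(-1\) cycles of the twisted complex \(\g_{\cBV, A}^\alpha\), giving the first claim.

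For the second assertion I would proceed via the gauge group \(\rG\) of \cite[Theorem~2]{DSV16}, whose Lie algebra is the degree \(0\) part of \(\g_{\cBV, A}\). The infinitesimal action of an element \(\gamma \in (\g_{\cBV, A})_0\) on a Maurer--Cartan element \(\alpha\) is, by differentiating the exponential gauge action at the identity, given by
\[
\gamma \cdot \alpha = \partial\gamma + [\alpha, \gamma] = \partial^\alpha \gamma.
\]
Consequently, two tangent vectors \(\beta, \beta' \in Z_{-1}(\g_{\cBV, A}^\alpha)\) define the same point on the stack quotient \(\mathcal{MC}(\g_{\cBV, A}) = \mathrm{MC}(\g_{\cBV, A})/\rG\) if and only if \(\beta - \beta' = \partial^\alpha \gamma\) for some \(\gamma\) of degree \(0\), i.e.\ if and only if they differ by a boundary in \(\g_{\cBV, A}^\alpha\). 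Passing to the quotient yields the identification of the tangent space at \([\alpha]\) of \(\mathcal{MC}(\g_{\cBV, A})\) with \(H_{-1}(\g_{\cBV, A}^\alpha)\).

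The only step requiring genuine verification, and the most delicate point in the argument, is the integration/differentiation statement for the gauge action: that the pre-Lie exponential constructed in \cite{DSV16} really does produce a group whose infinitesimal action at \(\alpha\) is \(\partial^\alpha\). This is the content of the general formalism and can be invoked directly from \emph{loc.\ cit.}, so no additional computation is needed once one has identified \(\g_{\cBV, A}\) as the appropriate deformation dg pre-Lie algebra via \cref{prop:DefpreLie}.
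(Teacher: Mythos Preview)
Your argument is correct and is precisely the standard computation underlying the cited result. The paper's own proof consists of a single sentence invoking \cite[Theorem~12.2.14]{LodayVallette12}, so you have simply unpacked what that reference proves in general: the dual-numbers calculation for the variety, and the identification of the infinitesimal gauge orbit with the image of \(\partial^\alpha\) on degree~\(0\). There is no substantive difference in approach; you have written out the content of the citation rather than deferring to it.

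One minor point: your formula \(\gamma\cdot\alpha = \partial\gamma + [\alpha,\gamma]\) for the infinitesimal gauge action may carry an overall sign depending on the convention for the exponential action (differentiating \(e^{t\gamma}\cdot\alpha\) at \(t=0\) typically gives \(-\partial^\alpha\gamma\)), but this does not affect the identification of the tangent space with \(H_{-1}\), since only the image of \(\partial^\alpha\) on \((\g_{\cBV,A})_0\) matters.
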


\begin{proof}
	This is a direct application of \cite[Theorem~12.2.14]{LodayVallette12}.
\end{proof}

\subsubsection{Obstruction theory}

The weight grading of the Koszul dual cooperad \(\cBV^{\ac}\) by its cogenerators \(s \m, s \b, s \triangle, s \c, s \square\) induces a weight grading on the deformation dg pre-Lie algebra
\[
\g_{\cBV, A} \cong \prod_{w \geqslant 0} \g_{\cBV, A}^{(w)}.
\]
Maurer--Cartan elements concentrated in weight one are in one-to-one correspondence with \(\cBV\)-algebra structures on \(A\).
This decomposition yields an effective obstruction theory for \(\cBV_{\!\infty}\)-algebras, as shown for instance by the next theorem.

\begin{theorem}
	If
	\[
	H_{-2}\big(\g_{\cBV, A}^{(w)}, \partial_0\big) \cong 0,
	\]
	for all \(w \geqslant 2\), where
	\[
	\partial_0(f) \defeq \partial_A f - (-1)^{|f|} f d_1,
	\]
	then any dg commutative algebra \((A, d, \cdot)\) equipped with a degree \(+1\) linear operator \(\triangle\) and a degree \(+1\) symmetric binary operation \(b\) extends to a \(\cBV_{\!\infty}\)-algebra such that
	\[
	m^0_1 = d,
	\qquad
	m^0_2 = \cdot,
	\qquad
	m^1_1 = \triangle,
	\qquad
	m^0_{1,1} = b,
	\qquad
	n^1_1 = [d, \triangle],
	\qquad
	n^0_{1,1} = [d, b].
	\]
\end{theorem}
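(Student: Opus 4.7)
By \cref{thm:cBVinfty}, giving a \(\cBV_{\!\infty}\)-algebra structure on \(A\) extending the prescribed data amounts to producing a Maurer--Cartan element \(\alpha = \sum_{w \geqslant 1} \alpha^{(w)} \in \g_{\cBV, A}\) whose weight-\(1\) component \(\alpha^{(1)}\) is determined by the operations \(m^0_2 = \cdot\), \(m^1_1 = \triangle\), \(m^0_{1,1} = b\) together with the prescribed obstructions \(n^1_1 = [d, \triangle]\), \(n^0_{1,1} = [d, b]\), while \(m^0_1 = d\) is incorporated into the internal differential \(\partial_A\). The argument is a classical weight-by-weight obstruction-theoretic extension, organized by the decomposition \(\partial = \partial_0 + \partial_1\), where \(\partial_0\) is the weight-preserving part (coming from \(\partial_A\) and \(d_1\)) and \(\partial_1\) (coming from \(d_\psi\)) raises weight by one, as follows from the analysis of \cref{thm:FormcBVac} and \cref{prop:cBVKoszul}.

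The first step is to verify the base cases, namely that \(\partial\alpha + \alpha \star \alpha = 0\) holds modulo weight \(\geqslant 2\) for \(\alpha^{(1)}\) as above. The weight-\(0\) part reduces to \(d^2 = 0\); the weight-\(1\) part reduces to the associativity and commutativity of \(\cdot\), the fact that \(d\) is a derivation of \(\cdot\), the symmetries of \(\triangle\) and \(b\), and the two identities \(n^1_1 = [d,\triangle]\) and \(n^0_{1,1} = [d,b]\) taken from the table of weight-\(1\) obstructions in \cref{ss:explicit_descriptions}. All of these hold by hypothesis or by definition of the prescribed obstruction maps.

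For the inductive step, assume that \(\alpha^{(\leqslant n-1)}\) has been constructed for some \(n \geqslant 2\) so that \(\partial\alpha + \alpha \star \alpha\) vanishes modulo weight \(\geqslant n\). Set
\[ O^{(n)} \defeq \partial_1 \alpha^{(n-1)} + \sum_{\substack{w_1+w_2 = n \\ w_1, w_2 \geqslant 1}} \alpha^{(w_1)} \star \alpha^{(w_2)} \in \g_{\cBV,A}^{(n)}, \]
a degree \(-2\) element whose vanishing is exactly the weight-\(n\) component of the Maurer--Cartan equation modulo \(\partial_0 \alpha^{(n)}\). A standard computation using \(\partial^2 = 0\), the right pre-Lie identity for \(\star\), the fact that \(\partial\) is a derivation of \(\star\), and the Maurer--Cartan equation in all weights strictly less than \(n\) shows that \(\partial_0 O^{(n)} = 0\). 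The hypothesis \(H_{-2}\big(\g_{\cBV,A}^{(n)}, \partial_0\big) \cong 0\) then yields a degree-\((-1)\) element \(\alpha^{(n)} \in \g_{\cBV,A}^{(n)}\) with \(\partial_0 \alpha^{(n)} = -O^{(n)}\), completing the induction. Summing \(\alpha = \sum_{w \geqslant 1} \alpha^{(w)}\) in the complete pre-Lie algebra \(\g_{\cBV,A} \cong \prod_{w} \g_{\cBV,A}^{(w)}\) gives the sought Maurer--Cartan element, whose weight-\(1\) component is exactly the prescribed data.

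The main obstacle is verifying the integrability condition \(\partial_0 O^{(n)} = 0\). This is the operadic incarnation of the classical obstruction-theoretic identity and requires careful extraction of the weight-\(n\) component of the consequence \(\partial(\partial\alpha + \alpha \star \alpha) = 0\) of \(\partial^2 = 0\), disentangling the contributions of \(\partial_0\) and \(\partial_1\), and using the inductive hypothesis to cancel all terms except \(\partial_0 O^{(n)}\). Once this integrability is established, the hypothesis on \(H_{-2}\) performs the inductive extension, and convergence in the weight filtration delivers the desired \(\cBV_{\!\infty}\)-algebra structure.
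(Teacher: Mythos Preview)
Your proposal is correct and is precisely the argument behind \cite[Theorem~12.2.12]{LodayVallette12}, which the paper invokes as a one-line black box. You have unpacked the weight-by-weight obstruction argument that this reference carries out in general: decompose \(\partial = \partial_0 + \partial_1\) according to the weight filtration, verify the base case at weight \(1\), show the weight-\(n\) obstruction \(O^{(n)}\) is \(\partial_0\)-closed via \(\partial R = -[\beta, R]\), and kill it using the hypothesis on \(H_{-2}\). The two approaches are identical in substance; the paper simply defers to Loday--Vallette while you reproduce that proof in the specific setting of \(\cBV_{\!\infty}\).
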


\begin{proof}
	This is a direct application of \cite[Theorem~12.2.12]{LodayVallette12}.
\end{proof}

In \cite[Theorem~28]{GCTV12}, this kind of obstruction theory was used to
endow the BRST complex (off-shell) of a topological vertex operator algebra with an explicit \(\BV_{\!\infty}\)-algebra lifting the \(\BV\)-algebra on cohomology (on-shell), answering a conjecture of Lian--Zuckerman \cite{LianZuckerman93}.

\begin{proposition}
	For every \(w \geqslant 0\), the chain complex \(\big(\g_{\cBV, A}^{(w)}, \partial_0\big)\) is acyclic whenever \((\End_A, \partial_A)\) is acyclic, for example when \((A, d)\) is acyclic.
\end{proposition}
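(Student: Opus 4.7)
The plan is to exploit the fact that $\partial_0$ is the standard Hom-complex differential associated to the chain complexes $(\cBV^{\ac,(w)}, d_1)$ and $(\End_A, \partial_A)$, and that over a field of characteristic zero, acyclicity upgrades to contractibility, which transfers to any Hom complex.

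First, I would reduce to non-equivariant maps. Since $\KK$ has characteristic zero, taking $\Sy_n$-invariants in each arity is exact: the averaging projector exhibits $\g_{\cBV,A}^{(w)}$ as a direct summand of the full Hom complex
\[
\Hom\big(\cBV^{\ac,(w)},\, \End_A\big) \;=\; \prod_{n \geqslant 0} \Hom\big(\cBV^{\ac,(w)}(n),\, \End_A(n)\big),
\]
so it suffices to prove acyclicity of this full Hom complex. Now observe that $d_1$ preserves the weight grading on $\cBV^\ac$: by \cref{thm:FormcBVac}, $d_1$ is the coderivation extending the cogenerator map $s^{-1} \colon \rM^* \to s^{-1}\rM^*$, which replaces one cogenerator by another without changing the total count. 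Hence $(\cBV^{\ac,(w)}, d_1)$ is a well-defined chain complex and $\partial_0$ coincides with the standard differential $f \mapsto \partial_A f - (-1)^{|f|} f d_1$ on the Hom complex.

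Next, I would construct a contracting homotopy on $\End_A$. Since $\KK$ is a field, the acyclic chain complex $\End_A$ decomposes, aritywise, as a direct sum of disk complexes $\KK \xrightarrow{\id} \KK$ in various degrees, yielding a degree $+1$ map $h \colon \End_A \to \End_A$ with $\partial_A h + h \partial_A = \id_{\End_A}$. Post-composition defines a degree $+1$ endomorphism $H(f) \defeq h \circ f$ of $\Hom(\cBV^{\ac,(w)}, \End_A)$. A short sign computation, in which the cross terms involving $d_1$ cancel thanks to the Koszul sign $(-1)^{|h \circ f|} = -(-1)^{|f|}$, yields
\[
(\partial_0 H + H \partial_0)(f) \;=\; (\partial_A h + h \partial_A)\circ f \;=\; f,
\]
so $H$ is a contracting homotopy. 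Restricting to $\Sy$-invariants concludes the proof.

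The argument presents no substantive obstacle; the essential point is that $\partial_0$ involves neither the inhomogeneous differential $d_\psi$ nor the pre-Lie product $\star$, so the verification reduces to the purely linear algebraic fact that Hom into a contractible complex is contractible. I expect the only care required is keeping the sign conventions consistent, and verifying that $d_1$ indeed preserves the weight grading---both of which are immediate from the explicit description of $\cBV^\ac$ in \cref{thm:FormcBVac}.
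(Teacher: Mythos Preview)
Your proof is correct and takes a genuinely different route from the paper's. The paper exploits the specific decomposition of \cref{thm:FormcBVac}: it observes that \(\g_{\cBV,A}^{(w)}\) splits (under \(\partial_0\)) into a \(\Com^\ac\)-summand and two copies of a \(\rM^*\)-summand linked by the isomorphism \((d_1)^*\), then runs a spectral sequence filtered by the \(\End_A\)-degree; the \((d_1)^*\)-isomorphism kills the \(\rM^*\)-contributions on the first page, leaving only \(\Hom_{\Sy}(\Com^\ac(w+1),\End_A)\cong (\Com^\ac(w+1))^*\otimes_{\Sy}\End_A(w+1)\), whose acyclicity follows from that of \(\End_A\). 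Your argument bypasses all of this structure: you use only that over a field an acyclic complex is contractible, and that post-composition with a contracting homotopy \(h\) of \(\End_A\) yields a contracting homotopy for the full Hom complex (the \(d_1\)-terms cancel exactly as you compute). Your approach is more elementary and in fact works verbatim for any dg cooperad in place of \(\cBV^\ac\); the paper's approach, by contrast, makes visible the reduction to the \(\Com^\ac\)-part, which is consonant with the rest of the section but not logically required here. One small remark: your reduction to the non-equivariant Hom is not strictly necessary---you could instead average \(h\) over \(\Sy_n\) to make it equivariant---but as written it is correct and clean.
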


\begin{proof}
	For any \(w \geqslant 0\), the complex \(\big(\g_{\cBV, A}^{(w)}, \partial_0\big)\) has the following form:
	\[
	\begin{tikzcd}[column sep=tiny]
		\Hom_{\Sy}(\Com^\ac(w+1), \End_A)
		\arrow[loop below,distance=2em, "(\partial_A)_*"]
		\!\!\!&\!\!\! \oplus \!\!\!&\!\!\!
		\Hom_{\Sy}((\BV^{\ac}/\Com^{\ac})^{(w)}, \End_A)
		\arrow[rr,bend left=45, "(d_1)^* = s", "\cong"']
		\arrow[loop below,distance=2em, "(\partial_A)_*"]
		\!\!\!&\!\!\! \oplus \!\!\!&\!\!\!
		s\Hom_{\Sy}((\BV^{\ac}/\Com^{\ac})^{(w)}, \End_A)
		\arrow[loop below,distance=2em, "(\partial_A)_*"].
	\end{tikzcd}
	\]
	Filtering by the homological degree of \(\End_A\) yields a convergent spectral sequence whose first page has differential \((d_1)^*\), an isomorphism between the last two summands.
	Thus the second page is
	\[
	(E^1, d^1) \cong \big(\Hom_{\Sy}(\Com^\ac(w+1), \End_A), (\partial_A)_*\big)
	\cong \big((\Com^\ac(w+1))^* \otimes_{\Sy} \Hom(A^{\otimes w+1}, A), \id \otimes \partial_A\big),
	\]
	which is acyclic under the assumption.
\end{proof}

Since the Koszul dual cooperad \(\cBV^{\ac}\) contains the Koszul dual cooperads \(\BV^{\ac}\), \(\Gerst^{\ac}\), and \(\Com^{\ac}\), one can set up \emph{relative} obstruction theories detecting when a \(\BV_{\!\infty}\)-algebra, a \(\Gerst_\infty\)-algebra, or a \(\Com_\infty\)-algebra can be lifted to a \(\cBV_{\!\infty}\)-algebra.
This is obtained by adapting \cite[Section~3.4]{GCTV12}, where the relative weight grading is given by the number of cogenerators in \(s^{-1}\rM^*\), \(s\triangle\) and \(s^{-1}\rM^*\), or \(\rM^* \oplus s^{-1}\rM^*\), respectively.

\subsubsection{Homotopy theory}

The third equivalent definition of a \(\cBV_{\!\infty}\)-algebra identifies such a structure on \(A\) with a square-zero coderivation (a \emph{codifferential}) on the conilpotent cofree dg \(\cBV^{\ac}\)-coalgebra \(\cBV^{\ac}(A)\).
Interpreted as an assignment, this leads to the following definition.

\begin{definition}\label{Def:BarConstr}
	The \defn{bar construction} of a \(\cBV_{\!\infty}\)-algebra \((A, \alpha)\) is the dg \(\cBV^{\ac}\)-coalgebra
	\[
	\mathrm{B}_\iota A \defeq (\cBV^{\ac}(A), d_1 + d_\psi + d_\alpha),
	\]
	where \(d_\alpha\) is the unique coderivation extending the map \(\cBV^{\ac}(A) \to A\) defined by \(\alpha\).
\end{definition}

\begin{proposition}\label{prop:FormBar}
	The underlying module of the bar construction is isomorphic to
	\[
	\cBV^{\ac}(A) \cong
	s^{-2}\, \mathrm{S}^c(s\mathrm{Lie}^c(sA))[\delta]
	\oplus
	s^{-3}\, \mathrm{S}^c(s\mathrm{Lie}^c(sA))[\delta]/s\mathrm{Lie}^c(sA).
	\]
\end{proposition}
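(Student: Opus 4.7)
The plan is to reuse the same decomposition analysis that was carried out for the deformation pre-Lie algebra in the proof of \cref{prop:DefpreLie}, since the question concerns the underlying module of the bar construction, which depends only on the underlying $\Sy$-module of $\cBV^{\ac}$ applied to $A$, not on the codifferentials $d_1+d_\psi+d_\alpha$ that were then layered on top.

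First, I would invoke \cref{thm:FormcBVac}, which provides the splitting of the underlying $\Sy$-module
\[
\cBV^{\ac} \;\cong\; \Com^{\ac} \,\oplus\, \rM^{*} \,\oplus\, s^{-1}\rM^{*} \;\cong\; \BV^{\ac} \,\oplus\, s^{-1}\BV^{\ac}/\Com^{\ac}.
\]
Applied to $A$ via the Schur functor construction, this yields a corresponding direct-sum decomposition
\[
\cBV^{\ac}(A) \;\cong\; \BV^{\ac}(A) \,\oplus\, s^{-1}\bigl(\BV^{\ac}/\Com^{\ac}\bigr)(A).
\]

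Next, I would identify each summand using \cref{lem:qBVac}, which provides the identification at the level of $\Sy$-modules $\BV^{\ac} \cong T^c(\delta)\circ\Com^c_{2}\circ\Lie^c_{1}$ with $|\delta|=2$, together with the standard identifications of cofree cocommutative and cofree Lie coalgebras as Schur functors. Composing these, one obtains $\BV^{\ac}(A) \cong s^{-2}\,\mathrm{S}^c(s\mathrm{Lie}^c(sA))[\delta]$, where the overall shift $s^{-2}$ comes from the fact that the cooperad $\Com^c_2$ carries an internal degree shift of $2(n-1)$ balanced against the symmetric-algebra summation (this is a routine but careful bookkeeping step). Since $\Com^{\ac}(A) \cong s^{-1}\mathrm{Lie}^c(sA)$, the quotient summand becomes
\[
s^{-1}\bigl(\BV^{\ac}/\Com^{\ac}\bigr)(A) \;\cong\; s^{-3}\,\mathrm{S}^c(s\mathrm{Lie}^c(sA))[\delta]\,\big/\,s\mathrm{Lie}^c(sA),
\]
where the denominator captures precisely the $\Com^{\ac}(A)$-part that has been quotiented out.

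Combining these two identifications yields the claimed formula. The only genuinely delicate step is tracking the suspension conventions consistently (note that the analogous computation inside the proof of \cref{prop:DefpreLie} uses conventions that differ by the overall global shifts $s^{2}$, $s^{3}$ inherent to passing from the cooperad $\cBV^{\ac}$ viewed as mapping out of versus evaluating on $A$). I expect no conceptual obstacle beyond this bookkeeping, since all the structural work, namely the cooperadic decomposition \cref{thm:FormcBVac} and the identification in \cref{lem:qBVac}, has already been established.
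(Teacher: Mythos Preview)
Your proposal is correct and follows essentially the same approach as the paper: the paper's proof simply says ``This follows directly from the argument in \cref{prop:DefpreLie},'' and that argument is precisely the decomposition via \cref{thm:FormcBVac} together with the identification from \cref{lem:qBVac} that you spell out. Your write-up is in fact more detailed than what the paper records.
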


\begin{proof}
	This follows directly from the argument in \cref{prop:DefpreLie}.
\end{proof}

\begin{definition}
	An \defn{\(\infty\)-morphism} \(A \rightsquigarrow B\) of \(\cBV_{\!\infty}\)-algebras is a morphism of dg \(\cBV^{\ac}\)-coalgebras between the corresponding bar constructions \(\mathrm{B}_\iota A \to \mathrm{B}_\iota B\).
\end{definition}

The data of an \(\infty\)-morphism \((A, \alpha) \rightsquigarrow (B, \beta)\) is equivalent to a degree zero element \(f \in \Hom_{\Sy}\big(\cBV^{\ac}, \End^A_B\big)\) satisfying
\[
\partial f = f \star \alpha - \beta \circledcirc f,
\]
see \cite[Theorem~10.2.3]{LodayVallette12} and \cite[Section~5]{DSV16}.
So an \(\infty\)-morphism consists of two families of maps
\begin{align*}
	f_{p_1,\dots,p_k}^t &\colon
	A^{\otimes p_1}\otimes\dots\otimes A^{\otimes p_k} \longrightarrow A,
	\qquad t \geqslant 0,\ k \geqslant 1,\ p_i \geqslant 1,\\
	g_{p_1,\dots,p_k}^t &\colon
	A^{\otimes p_1}\otimes\dots\otimes A^{\otimes p_k} \longrightarrow A,
	\qquad t \geqslant 0,\ k \geqslant 1,\ p_i \geqslant 1,\ t+k \geqslant 2,
\end{align*}
of respective degrees
\[
\bigl|f_{p_1,\dots,p_k}^t\bigr| = p_1 + \dots + p_k + k + 2t - 2,
\qquad
\bigl|g_{p_1,\dots,p_k}^t\bigr| = p_1 + \dots + p_k + k + 2t - 3,
\]
satisfying the same symmetries as the generating and obstruction maps of a \(\cBV_{\!\infty}\)-algebra, see \cref{ss:generating_maps} and \cref{ss:obstruction_maps}.
The \(\infty\)-morphisms of \(\cBV_{\!\infty}\)-algebras satisfy exactly the same pattern of relations as the structural maps described and proved in \cref{thm:cBVinfty}.
The collection of maps \(\{f_{p}^0\}_{p \geqslant 1}\) forms an \(\infty\)-morphism of \(\rC_\infty\)-algebras.
The maps \(g_{p_1,\dots,p_k}^t\) are precisely equal to the obstructions to the relations of the maps \(f_{p_1,\dots,p_k}^t\) defining an \(\infty\)-morphism of \(\BV_{\!\infty}\)-algebras.
Then, altogether, the maps \(g_{p_1,\dots,p_k}^t\) and \(f_{p_1,\dots,p_k}^t\) satisfy relations which are automatic from this prescription.
The proof is similar to that of \cref{thm:cBVinfty} and is left to the reader.

\medskip

The isomorphisms in the category of \(\cBV_{\!\infty}\)-algebras are the \defn{\(\infty\)-isomorphisms}, namely those \(\infty\)-morphisms whose first component \(f^0_1 \colon A \to B\) is a dg module isomorphism, see \cite[Theorem~10.4.1]{LodayVallette12}.
The Deligne groupoid of the deformation dg pre-Lie algebra \(\g_{\cBV, A}\) has \(\cBV_{\!\infty}\)-algebra structures on \(A\) as objects and \(\infty\)-isotopies, i.e.\ \(\infty\)-isomorphisms with first component the identity, as morphisms, see \cite[Theorem~3]{DSV16}.

\medskip

There is a bar-cobar adjunction between dg \(\cBV\)-algebras and conilpotent dg \(\cBV^{\ac}\)-coalgebras
\[
\begin{tikzcd}
	\Omega_\kappa \colon \text{conil.\ dg } \cBV^{\ac}\text{-coalgebras} \arrow[r, shift left=1.75, harpoon, "\perp"']
	& \text{dg } \cBV\text{-algebras} \arrow[l, shift left=1.75, harpoon] \colon \mathrm{B}_\kappa,
\end{tikzcd}
\]
where the underlying constructions are given respectively by the free algebra and the cofree coalgebra functors, see \cite[Section~11.2]{LodayVallette12}.

\begin{theorem}[Rectification]
	Any \(\cBV_{\!\infty}\)-algebra \(A\) is naturally \(\infty\)-quasi-isomorphic to the canonical dg \(\cBV\)-algebra
	\[
	A \stackrel{\sim}{\rightsquigarrow} \Omega_\kappa \mathrm{B}_\iota A.
	\]
\end{theorem}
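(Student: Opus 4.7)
The plan is to obtain this rectification result as a direct application of the general bar-cobar formalism described immediately before the statement, once two ingredients are in place: first, a canonical map of dg $\cBV^{\ac}$-coalgebras from $\mathrm{B}_\iota A$ to $\mathrm{B}_\kappa\Omega_\kappa \mathrm{B}_\iota A$, and second, the fact that this map is a quasi-isomorphism, which will ultimately come from the Koszul property of $\cBV$ established in \cref{prop:cBVKoszul}.

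First I would observe that for any dg $\cBV$-algebra $B$, viewed as a $\cBV_{\!\infty}$-algebra via the resolution $\cBV_{\!\infty} \twoheadrightarrow \cBV$, the two bar constructions agree: $\mathrm{B}_\iota B = \mathrm{B}_\kappa B$ as dg $\cBV^{\ac}$-coalgebras. Indeed, both have underlying cofree coalgebra $\cBV^{\ac}(B)$, and in both cases the twisted part of the differential is obtained by letting a cogenerator $c \in \cBV^{\ac}$ act through its image in $\cBV$ (directly via $\kappa$, or via $\iota$ followed by the projection to $\cBV$). This identification is what makes the rectification construction sensible: starting from a $\cBV_{\!\infty}$-algebra $A$, the object $\Omega_\kappa \mathrm{B}_\iota A$ is a \emph{strict} dg $\cBV$-algebra whose own bar construction $\mathrm{B}_\iota \Omega_\kappa \mathrm{B}_\iota A$ coincides with $\mathrm{B}_\kappa \Omega_\kappa \mathrm{B}_\iota A$.

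Next, I would construct the $\infty$-morphism itself. The twisting morphism $\kappa$ yields a bar-cobar adjunction $\Omega_\kappa \dashv \mathrm{B}_\kappa$ whose unit, applied to the conilpotent dg $\cBV^{\ac}$-coalgebra $C = \mathrm{B}_\iota A$, is a morphism of dg $\cBV^{\ac}$-coalgebras
\[
\eta_C \colon \mathrm{B}_\iota A \longrightarrow \mathrm{B}_\kappa \Omega_\kappa \mathrm{B}_\iota A = \mathrm{B}_\iota \Omega_\kappa \mathrm{B}_\iota A.
\]
By the definition of $\infty$-morphisms of $\cBV_{\!\infty}$-algebras as morphisms between the corresponding $\mathrm{B}_\iota$-coalgebras, this is exactly an $\infty$-morphism $A \rightsquigarrow \Omega_\kappa \mathrm{B}_\iota A$, natural in $A$.

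The core step — and the main technical point — is showing that this $\infty$-morphism is a quasi-isomorphism, i.e.\ that its first component is a quasi-isomorphism of chain complexes. This is where the Koszul property of $\cBV$ enters. By \cref{prop:cBVKoszul} the twisting morphism $\kappa \colon \cBV^{\ac} \to \cBV$ is Koszul, so the unit $\eta_C \colon C \to \mathrm{B}_\kappa \Omega_\kappa C$ of the $\kappa$-bar-cobar adjunction is a quasi-isomorphism for every conilpotent dg $\cBV^{\ac}$-coalgebra $C$; this is the standard consequence of Koszulness as in \cite[Theorem~11.3.3]{LodayVallette12}. Applied to $C = \mathrm{B}_\iota A$, the map $\eta_C$ is thus a quasi-isomorphism of the total cochain complexes. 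To pass from this to the statement that the first component of the associated $\infty$-morphism is a quasi-isomorphism, I would use the standard filtration argument: equip both coalgebras with the filtration by coradical degree of $\cBV^{\ac}$, note that $\eta_C$ preserves this filtration and is the identity on the first subquotient (which is exactly the underlying chain complex $A$, respectively the underlying chain complex of $\Omega_\kappa \mathrm{B}_\iota A$ viewed through the canonical map $A \to \Omega_\kappa \mathrm{B}_\iota A$), and conclude by comparing the associated spectral sequences, cf.\ \cite[Proposition~11.4.7]{LodayVallette12}. The hard part will be checking precisely that the first component of the induced $\infty$-morphism identifies with the canonical map $A \to \Omega_\kappa \mathrm{B}_\iota A$ sending $a$ to $s^{-1}(s\id \otimes a)$ and that this map is indeed the linearization of $\eta_{\mathrm{B}_\iota A}$; once this bookkeeping is carried out, the quasi-isomorphism claim follows from the two-out-of-three property applied to the spectral sequence convergence.
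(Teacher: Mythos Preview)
Your approach is correct and is essentially the same as the paper's: the paper's proof is the single sentence ``This is a direct application of \cite[Theorem~11.4.4]{LodayVallette12} to the Koszul operad \(\cBV\),'' and what you have written is precisely an unpacking of that theorem's proof in the present setting, invoking \cref{prop:cBVKoszul} at the key step. The only remark is that your final paragraph is slightly more elaborate than necessary: once you know the unit \(\eta_C\) is a quasi-isomorphism of dg \(\cBV^{\ac}\)-coalgebras, the fact that its first component \(A \to \Omega_\kappa \mathrm{B}_\iota A\) is a quasi-isomorphism follows directly from the weight/coradical filtration comparison already built into \cite[Theorem~11.4.4]{LodayVallette12}, so you need not re-derive it.
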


\begin{proof}
	This is a direct application of \cite[Theorem~11.4.4]{LodayVallette12} to the Koszul operad \(\cBV\).
\end{proof}

This rectification extends to \(\infty\)-morphisms by \cite[Proposition~11.4.5]{LodayVallette12}.

\medskip

Recall that a \defn{contraction} of a chain complex \((A, d_A)\) onto another chain complex \((H, d_H)\) is a homotopy \(h\) of degree \(1\) together with chain maps \(i\) and \(p\) such that
\[
\begin{tikzcd}
	(A, d_A) \arrow[r, shift left, "p"] \arrow[loop left, distance=1.5em, "h"]
	& (H, d_H) \arrow[l, shift left, "i"]
\end{tikzcd}
\]
and satisfying
\[
pi = \id_H,
\qquad ip - \id_A = d_A h + h d_A,
\qquad hi = 0,
\qquad ph = 0,
\qquad h^2 = 0.
\]

\begin{theorem}[Homotopy transfer theorem]
	For any \(\cBV_{\!\infty}\)-algebra structure \(\alpha\) on \((A, d_A)\) and any contraction onto a chain complex \((H, d_H)\), there exists a \(\cBV_{\!\infty}\)-algebra structure on \(H\) defined by the composite
	\[
	\begin{tikzcd}[column sep=large]
		\overline{\cBV}^{\ac} \arrow[r, "\Delta_{\cBV^{\ac}}"] & \mathcal{T}^c(\overline{\cBV}^{\ac})
		\arrow[r, "\mathcal{T}^c(s\alpha)"] & \mathcal{T}^c(s\End_A)
		\arrow[r, "\Psi"] & \End_H,
	\end{tikzcd}
	\]
	where \(\Delta_{\cBV^{\ac}}\) is the comonadic decomposition map of the cooperad \(\cBV^{\ac}\) and \(\Psi \colon \End_A \rightsquigarrow \End_H\) is the Van der Laan \(\infty\)-morphism.
	Both maps \(i\) and \(p\) extend to \(\infty\)-quasi-isomorphisms between \(\alpha\) and this transferred structure.
\end{theorem}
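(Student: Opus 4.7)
The plan is to invoke the general homotopy transfer theorem for algebras over the cobar construction of a (conilpotent dg) cooperad, as developed in \cite[Chapter~10]{LodayVallette12}, applied to the specific cooperad \(\cBV^{\ac}\). Since \(\cBV_{\!\infty} = \Omega \cBV^{\ac}\) and \cref{lem:TwCAlg} identifies \(\cBV_{\!\infty}\)-algebra structures on \(A\) with twisting morphisms \(\alpha \colon \cBV^{\ac} \to \End_A\) in the convolution dg pre-Lie algebra \(\g_{\cBV,A}\), the entire problem reduces to producing a new twisting morphism \(\cBV^{\ac} \to \End_H\) and extending \(i,p\) to \(\infty\)-morphisms.

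First I would construct the Van der Laan \(\infty\)-morphism \(\Psi \colon \End_A \rightsquigarrow \End_H\) associated with the contraction \((i,p,h)\). Concretely, its \(n\)-th component is the sum, over planar rooted trees with \(n\) leaves, of the operation obtained by labelling leaves with \(i\), internal edges with \(h\), the root output with \(p\), and vertices with the operations read from a tensor of endomorphisms. The key algebraic property is that \(\Psi\) is a morphism of \(\Sy\)-modules intertwining the comonadic decomposition on the source (through its induced \(\infty\)-morphism structure on the endomorphism operads). Then the composite
\[
\overline{\cBV}^{\ac} \xrightarrow{\Delta_{\cBV^{\ac}}} \mathcal{T}^c(\overline{\cBV}^{\ac})
\xrightarrow{\mathcal{T}^c(s\alpha)} \mathcal{T}^c(s\End_A)
\xrightarrow{\Psi} \End_H,
\]
defines, by construction, a degree \(-1\) map \(\alpha_H \colon \cBV^{\ac} \to \End_H\). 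The verification that \(\alpha_H\) satisfies the Maurer--Cartan equation
\[
\partial_H \alpha_H + \alpha_H (d_1 + d_\psi) + \alpha_H \star \alpha_H = 0
\]
follows from the standard trick: comparing trees where the differential hits a vertex (contributing \(\partial_H \alpha_H\) plus internal cooperadic differential terms from \(d_1+d_\psi\)) against trees where one contracts an internal edge via the homotopy identity \(ip - \id_A = [d_A, h]\), which produces precisely the terms of \(\alpha_H \star \alpha_H\). This argument is identical to the homogeneous case, only invoking the fact that \(\cBV^{\ac}\) is a conilpotent dg cooperad and that \(\Delta_{\cBV^{\ac}}\) commutes with the total codifferential \(d_1+d_\psi\).

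For the second part, the extension of \(i\) and \(p\) to \(\infty\)-quasi-isomorphisms \(I \colon H \rightsquigarrow A\) and \(P \colon A \rightsquigarrow H\) proceeds by analogous explicit tree formulas: \(I_n\) is a sum over trees whose leaves carry \(i\), vertices carry operations of \(\alpha\), internal edges \(h\), and whose root carries \(i\); similarly \(P_n\) places \(p\) at the root. The \(\infty\)-morphism relations (of the form \(\partial F = F \star \alpha - \alpha_H \circledcirc F\)) follow from the same sign bookkeeping and the homotopy identity. Since their linear components are \(i\) and \(p\) respectively, and both are quasi-isomorphisms of chain complexes, the resulting \(\infty\)-morphisms are quasi-isomorphisms by \cite[Theorem~10.4.6]{LodayVallette12}.

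The main technical subtlety, and the only point where the inhomogeneous differential of \(\cBV^{\ac}\) demands care, is the compatibility of \(\Delta_{\cBV^{\ac}}\) with \(d_1 + d_\psi\). Here \(d_1\) acts as a linear derivation coming from the internal differential of the generators, while \(d_\psi\) has weight \(-1\); both are coderivations of the quadratic analogue cooperad and, as shown in \cref{lemma:commutingcodiff}, anti-commute. This means \(\cBV^{\ac}\) is a genuine dg cooperad and the abstract homotopy transfer machinery of \cite[Chapter~10]{LodayVallette12}, which is stated in exactly that generality, applies verbatim. The Koszulness of \(\cBV\) established in \cref{prop:cBVKoszul} is not needed for the existence of the transferred structure itself, but ensures that the resulting homotopical category has the expected formal properties.
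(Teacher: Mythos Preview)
Your proposal is correct and takes essentially the same approach as the paper: both reduce the statement to a direct application of the general homotopy transfer theorem \cite[Theorem~10.3.1]{LodayVallette12} for algebras over the cobar construction of a conilpotent dg cooperad, with the explicit \(\infty\)-quasi-isomorphism extensions coming from \cite[Theorem~10.3.6]{LodayVallette12} and \cite[Proposition~10.3.9]{LodayVallette12}. Your write-up is more expository than the paper's one-line citation, and your remark that Koszulness is not required for the transfer itself is correct but superfluous here.
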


\begin{proof}
	This is a direct application of \cite[Theorem~10.3.1]{LodayVallette12}.
	Explicit formulas for the extensions to \(\infty\)-quasi-isomorphisms are given in \cite[Theorem~10.3.6]{LodayVallette12} and \cite[Proposition~10.3.9]{LodayVallette12}.
\end{proof}

The homotopy transfer theorem admits a conceptual interpretation via perturbation theory and the gauge group action, see \cite[Section~8]{DSV16}.
It is functorial with respect to the Koszul dual cooperad by \cite[Proposition~3.18]{HLV21}.
Applied to the homology \(H = H_\bullet(A)\) of a \(\cBV_{\!\infty}\)-algebra, it produces \defn{Massey products} for homotopy coexact Batalin--Vilkovisky algebras.
A \(\cBV_{\!\infty}\)-algebra with trivial differential \(m^0_1 = 0\) is called \defn{minimal}.

\begin{proposition}[Minimal model]
	Every \(\cBV_{\!\infty}\)-algebra is \(\infty\)-isomorphic to the product of a minimal \(\cBV_{\!\infty}\)-algebra, given by the transferred structure on its homology, with an acyclic trivial \(\cBV_{\!\infty}\)-algebra.
\end{proposition}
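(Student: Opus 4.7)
The plan is to combine the Homotopy Transfer Theorem just stated with a standard splitting argument valid over a field of characteristic zero. Given a \(\cBV_{\!\infty}\)-algebra \((A, \alpha)\), first choose a contraction of the cochain complex \((A, m^0_1)\) onto its homology \(H \defeq H_\bullet(A, m^0_1)\), consisting of chain maps \(i, p\) and a homotopy \(h\) satisfying the usual side conditions. Applying the Homotopy Transfer Theorem produces a transferred \(\cBV_{\!\infty}\)-algebra structure on \(H\); since the differential on \(H\) vanishes, the transferred structure has \(m^0_1 = 0\) and is therefore minimal. The theorem also extends \(i\) and \(p\) to \(\infty\)-quasi-isomorphisms \(i_\infty \colon H \rightsquigarrow A\) and \(p_\infty \colon A \rightsquigarrow H\).

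Next, set \(K \defeq \ker p\), which is an acyclic sub-chain complex of \(A\) thanks to the side condition \(ip - \id_A = d_A h + h d_A\), and equip it with the trivial \(\cBV_{\!\infty}\)-algebra structure whose only nonzero generating map is \(m^0_1 = d_A\vert_K\). Form the product \(\cBV_{\!\infty}\)-algebra \(H \times K\); its underlying chain complex is naturally identified with \(iH \oplus K = A\).

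It remains to construct an \(\infty\)-isomorphism \((A, \alpha) \rightsquigarrow H \times K\) whose first component \(f^0_1\) is the identification of graded vector spaces from this direct sum decomposition. Work inside the deformation pre-Lie algebra \(\g_{\cBV, A}\) of \cref{ss:tools}: the Maurer--Cartan element \(\alpha\) is shown to be gauge equivalent, through the action of the gauge group \(\mathrm G\), to a Maurer--Cartan element \(\alpha_0\) whose restriction to \(iH\) coincides with the transferred structure, whose restriction to \(K\) is the trivial one, and which vanishes on all inputs mixing the two summands. The required gauge element is built recursively on the weight grading of \(\cBV^{\ac}\), using the contracting homotopy \(h\) at each stage to eliminate simultaneously the non-transferred components and the mixed components. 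Since \(f^0_1\) is an isomorphism of chain complexes, the resulting \(\infty\)-morphism is an \(\infty\)-isomorphism by \cite[Theorem~10.4.1]{LodayVallette12}. The whole argument is a direct specialisation of the general minimal model theorem for algebras over a Koszul operad, which applies here by \cref{prop:cBVKoszul}.

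The main obstacle is the recursive construction of this gauge transformation: one must verify, at each arity and each weight, that the obstructions to extending the gauge element lie in the image of the twisted differential \(\partial^{\alpha}\) and can hence be absorbed. This is routine given the Maurer--Cartan formalism developed in \cref{ss:tools} and the explicit decomposition of \(\cBV^{\ac}\) from \cref{thm:FormcBVac}, but unwinding it explicitly requires care with the two families \(\{f^t_{p_1,\dots,p_k}\}\) and \(\{g^t_{p_1,\dots,p_k}\}\) encoding an \(\infty\)-morphism and with the signs inherent to the dg inhomogeneous Koszul duality for \(\cBV\).
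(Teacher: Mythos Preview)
Your proposal is correct and amounts to unpacking the proof of the general minimal model theorem for homotopy algebras over a Koszul operad, which is precisely what the paper invokes: its proof is a single sentence citing \cite[Theorem~10.4.3]{LodayVallette12}, applicable because \(\cBV\) is Koszul by \cref{prop:cBVKoszul}. You even acknowledge this at the end of your second paragraph, so your sketch and the paper's citation are the same argument at different levels of detail.
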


\begin{proof}
	This is a direct application of \cite[Theorem~10.4.3]{LodayVallette12}.
\end{proof}

This implies that any \(\infty\)-quasi-isomorphism of \(\cBV_{\!\infty}\)-algebras admits a homotopy inverse, see \cite[Theorem~10.4.4]{LodayVallette12}.
The \defn{Kaledin--Emprin classes} introduced in \cite{Emprin24} provide faithful obstructions for the formality of \(\cBV_{\!\infty}\)-algebras.
Finally, Theorem~0.25 of \cite{CPRN24} applied to the split morphism \(\Com_{(n)} \to \cBV_{(n)}\) shows that the formality of \(\cBV_{\!\infty}\)-algebras reduces to the formality of their underlying \(\Com_\infty\)-algebras.

	\appendix

\section{Koszul dual Batalin--Vilkovisky algebras are Belinson--Drinfeld algebras}\label{ss:bv_bd}

In this appendix, we make explicit the notion of algebras over the (co)operad Koszul dual to the Batalin--Vilkovisky operad and we show that they coincide with the various notions of Belinson--Drinfeld algebras.

\begin{proposition}\label{prop:DualBValg}
	The desuspension of a differential graded \(\BV^!\)-algebra structure on \(sA\) is
	a Gerstenhaber \(\mathbb{K}[\hbar]\)-algebra \((A, m ,\{\, , \,\} )\) on a chain
	\(\mathbb{K}[\hbar]\)-complex \((A,d)\), with \(|\hbar|=-2\), such that the differential \(d\) is a derivation
	with respect to \(\{\, , \,\}\) and such that
	\begin{equation}\label{eq:dKoszuldual}
	d(m(a,b)) = m(d(a),b) + (-1)^{|a|} m(a,d(b)) + \hbar\{a,b\}.
	\end{equation}
\end{proposition}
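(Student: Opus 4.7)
The strategy is to follow the same dualisation procedure used for the operad $\cBV$ in \cref{prop:cBV!} and \cref{prop:DualcBValg}, now applied to the inhomogeneous quadratic presentation of $\BV$ given in \cref{lemma:presentationBV}. First, I would verify that this presentation satisfies the quadratic-linear conditions (this is \cref{ex:qlCondition}) and identify the rewriting map
\[
\varphi \colon \q R \to E,
\qquad
\DM{1}{2}-\MDL{1}{2}-\MDR{1}{2} \mapsto \BB{1}{2}.
\]
By \cite[Proposition~7.2.1]{LodayVallette12} we have
\[
\BV^! \cong \mathrm{S} \otimes_H \big(\q\BV^{\ac}\big)^* \cong \mathcal{P}\big(s^{-1}\mathrm{S}\otimes_H E^*,\; s^{-2}\mathrm{S}\otimes_H(\q R)^\perp\big),
\]
with generators $\m' \defeq \m^*$, $\b' \defeq \b^*$ of degree $0$ and $-1$ respectively and $\triangle' \defeq s^{-1}\triangle^*$ of degree $-2$, all carrying the sign representation. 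The quadratic relations orthogonal to $\q R$ produce exactly: Jacobi for $\m'$ (degree $0$), associativity for $\b'$ (degree $0$), Leibniz between $\m'$ and $\b'$, commutativity $\triangle'\text{-}\m'$ and $\triangle'\text{-}\b'$. The inhomogeneous part induces the dual derivation $(d_\varphi)^*$ which sends $\b' \mapsto -\triangle'\circ\m'$ and vanishes on the other generators; the sign is the usual Koszul sign coming from $(s^{\otimes 2})^* = -s^*\otimes s^*$.

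Next I would translate the operadic data into algebraic structure on a graded vector space $A$, where $sA$ carries the $\BV^!$-algebra structure. The operations on $A$ induced by the suspended linear duals $s\b^*$ and $s\m^*$ are a degree $0$ symmetric product $m$ and a degree $1$ symmetric bracket $\{\,,\,\}$; the quadratic relations among $\m'$, $\b'$ translate exactly into the axioms of a Gerstenhaber algebra. The desuspension $s^{-1}\triangle^*$ induces a degree $-2$ linear operator on $A$, and the commutativity of $\triangle'$ with both $\m'$ and $\b'$ expresses that this operator is central with respect to $m$ and $\{\,,\,\}$; this is precisely the statement that $m$ and $\{\,,\,\}$ are $\mathbb{K}[\hbar]$-linear with $|\hbar|=-2$.

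Finally, the differential. The dual of the internal derivation on $\BV^!$ sends $\b' \mapsto -\triangle'\circ\m'$, and this is the only piece of the Maurer--Cartan equation that sees the linear part of the inhomogeneous relations. Translated back to $A$, this produces precisely equation~\eqref{eq:dKoszuldual}:
\[
d(m(a,b)) = m(d(a),b) + (-1)^{|a|} m(a,d(b)) + \hbar\{a,b\}.
\]
The vanishing of $(d_\varphi)^*$ on the other generators yields that $d$ commutes with $\hbar$ and is a derivation of $\{\,,\,\}$. The main obstacle I anticipate is bookkeeping the signs: specifically reconciling the Koszul sign $(s^{\otimes 2})^* = -s^*\otimes s^*$ with the desuspension convention $sA \leadsto A$ so that the inhomogeneous term in~\eqref{eq:dKoszuldual} acquires the correct sign $+\hbar\{a,b\}$ (and not $-\hbar\{a,b\}$). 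This is handled exactly as in the proof of \cref{prop:DualcBValg}, noting that the differential on $A$ is the opposite of the one on $sA$, which flips one sign and produces the stated formula. Once these signs are verified, the equivalence of the two sides of the proposition follows directly, and the converse implication---that any such Gerstenhaber $\mathbb{K}[\hbar]$-algebra yields a $\BV^!$-algebra structure on $sA$---is the evident reverse reading of the same dictionary.
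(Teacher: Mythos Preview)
Your proposal is correct and follows essentially the same dualisation procedure as the paper. The only difference is one of packaging: the paper's proof simply observes that \(\BV^!\) is the dg suboperad of \(\cBV^!\) spanned by \(\m',\triangle',\b'\) and then restricts the analysis already carried out in \cref{prop:DualcBValg}, whereas you re-run that analysis from scratch for \(\BV\); both routes compute the same orthogonal relations and the same dual derivation \((d_\varphi)^*\colon \b'\mapsto -\triangle'\circ\m'\), and your sign discussion matches the paper's.
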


\begin{proof}
	All the arguments are already contained in the proof of \cref{prop:DualcBValg} since \(\BV^!\) is the dg suboperad of \(\cBV^!\)
	spanned by the three generators \(\m'\), \(\triangle'\), and \(\b'\). As explained in the proof of \cref{prop:DualcBValg},
	the derivative of Relation~\eqref{Eq:derivmu} produces Relation~\eqref{eq:dKoszuldual}. The former does not hold in the
	 desuspension of a dg \(\BV^!\)-algebra but the latter does.
\end{proof}

In short plain words, the desuspension of a \(\mathbb{K}\)-algebra over the operad \(\BV^!\) is a differential graded Gerstenhaber
\(\mathbb{K}[\hbar]\)-algebra \((A, d, m, \{\, , \,\})\), with \(|\hbar|=-2\), except for the derivation relation with respect to the commutative product \(m\) which is replaced by
\[ [d, m]=\hbar \{\, , \,\}.\]

In the context of chiral algebras, Beilinson--Drinfeld introduced in \cite{BeilinsonDrinfeld04} an algebraic structure closely related to Koszul dual Batalin--Vilkovisky algebras. It admits at least two versions: one with a formal parameter \(\hbar\) in order to treat
 perturbative Quantum Field Theories \cite{CG16, CG21} and one with a polynomial parameter \(\hbar\) enough to treat nonpeturbatively free theories \cite{GH18}.

\begin{definition}\label{def:BDalg}
	A \emph{Beilinson--Drinfeld (BD) algebra} is a Gerstenhaber \(\mathbb{K}[\hbar]\)-algebra \((A, m ,\{\, , \,\} )\) on a chain
	\(\mathbb{K}[\hbar]\)-complex \((A,d)\), with \(|\hbar|=0\), such that the differential \(d\) is a derivation
	with respect to \(\{\, , \,\}\) and such that
	\begin{equation*}
	d(m(a,b)) = m(d(a),b) + (-1)^{|a|} m(a,d(b)) + \hbar\{a,b\}.
	\end{equation*}
	An \emph{absolute BD-algebra} is defined similarly but over the ring of power series \(\KK[\![\hbar]\!]\).
\end{definition}

To be self coherent, we used the homological degree convention in this definition, on the opposite to the abovementioned references where the Lie bracket has cohomological degree \(+1\). Let us denote by \(\BV_{-1}\) the operad encoding Batalin--Vilkovisky algebra with the operator \(\Delta\) and the Lie bracket placed in homological degree \(-1\).

\begin{theorem}
	BD-algebras are suspensions of \(\BV_{-1}^!\)-algebras and absoute BD-algebras are absolute \(\BV_{-1}^{\ac}\)-algebras.
\end{theorem}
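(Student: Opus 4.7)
The plan is to transport the argument of Proposition~\ref{prop:DualBValg} to the operad $\BV_{-1}$, which differs from $\BV$ only by placing the generators $\b$ and $\triangle$ in homological degree $-1$ rather than $+1$. Since the relations in the presentation of Lemma~\ref{lemma:presentationBV} depend on the parity of the generators and not on their specific degree, the same inhomogeneous quadratic data $(E_\bullet, R_\bullet)$ defines $\BV_{-1}$ and satisfies the quadratic-linear conditions verbatim, producing a Koszul dual cooperad $\BV_{-1}^{\ac}$ by the method of Section~\ref{ss:koszul_duals}.

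First I would propagate the degree shift through the formula $\BV_{-1}^! = \mathrm{S} \otimes_H (\BV_{-1}^{\ac})^*$. Recomputing as in Proposition~\ref{prop:cBV!}, the Koszul dual generators acquire degrees $|\m'|=0$, $|\triangle'|=0$, $|\b'|=1$. The crucial consequence is that the formal parameter $\hbar$ encoding the action of $\triangle'$ now lies in degree $0$, matching Definition~\ref{def:BDalg}; the bracket $\{\,,\,\}$ induced by $\b'$ on the suspension then sits in the correct degree for a BD-algebra.

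Second I would repeat the translation step from Proposition~\ref{prop:DualBValg}: the orthogonal relations for the $\BV_{-1}^!$-presentation give a Gerstenhaber $\KK[\hbar]$-structure on the suspension $sA$ with $|\hbar|=0$, and the codifferential $(d_\psi)^*$ of the Koszul dual produces precisely the BD-relation
\[
d(m(a,b)) = m(d(a),b) + (-1)^{|a|} m(a,d(b)) + \hbar\{a,b\}.
\]
Reversing the correspondence, every BD-algebra on $X$ defines a $\BV_{-1}^!$-action on $s^{-1}X$, yielding the stated equivalence. For the absolute statement, the arity-one part of $\BV_{-1}^{\ac}$ is the cofree conilpotent coalgebra $\KK[\![\delta]\!]$ on the cogenerator $s\triangle$, which under the $\BV_{-1}$-shift lies in degree $0$. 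A cooperad action $\BV_{-1}^{\ac} \to \mathrm{End}_{s^{-1}X}$ of $\Sy$-modules therefore encodes an action of $\KK[\![\hbar]\!]$ rather than just $\KK[\hbar]$, recovering the definition of an absolute BD-algebra.

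The main technical obstacle is the sign bookkeeping: shifting the degrees of $\b$ and $\triangle$ by $2$ affects several Koszul signs, notably the one appearing in $(s^{\otimes 2})^* = -\, s^* \otimes s^*$ used in the derivation of $(d_\psi)^*$ in Proposition~\ref{prop:DualcBValg}, and in the translation between operations on $A$ and on $sA$ via $\mathrm{End}_{sA} \cong \mathrm{S}^{-1} \otimes_H \mathrm{End}_A$. Verifying that these modifications cancel so that the final BD-relation carries exactly the sign of Equation~\eqref{eq:dKoszuldual} is the only computation that is not formally automatic; everything else is a direct transport of the proof of Proposition~\ref{prop:DualBValg}.
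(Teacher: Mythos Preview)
Your treatment of the first claim is essentially the paper's: both simply transport Proposition~\ref{prop:DualBValg} to $\BV_{-1}$ by tracking the degree shift of $\triangle$ and $\b$, observing that $|\hbar|$ becomes $0$. You are more explicit about the sign bookkeeping than the paper, which merely says ``with modified homological degrees accordingly''.

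Your argument for the absolute statement, however, has a genuine gap. First, the arity-one part of $\BV_{-1}^{\ac}$ is $T^c(\delta)\cong\KK[\delta]$ (a direct sum, see Lemma~\ref{lem:qBVac}), not $\KK[\![\delta]\!]$; the cofree \emph{conilpotent} coalgebra on one generator is polynomial, not power-series. Second, and more seriously, describing an absolute $\cBV^{\ac}$-algebra as ``a cooperad action $\BV_{-1}^{\ac}\to\End_{s^{-1}X}$ of $\Sy$-modules'' is not the correct definition: that is the data of a twisting morphism, which governs $\Omega\BV_{-1}^{\ac}$-algebras, not absolute ones. The passage from $\KK[\hbar]$ to $\KK[\![\hbar]\!]$ does not come from the underlying $\Sy$-module of the cooperad, but from the way absolute algebras are defined---via structure maps out of the \emph{product} $\prod_n \rC(n)\otimes_{\Sy_n} A^{\otimes n}$ rather than the direct sum, so that infinite iterated compositions of the degree-$0$ operator dual to $\triangle$ become meaningful. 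The paper does not attempt to reprove this and instead invokes the framework of \cite{RocaiLucio25}, specifically Example~4.6 and Section~4.2 there, to identify the arity-one action on an absolute $\BV_{-1}^{\ac}$-algebra with a $\KK[\![\hbar]\!]$-module structure. Without that input (or an equivalent argument explaining why the absolute setting forces completeness in $\hbar$), your sketch does not close.
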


\begin{proof}
	The first point is a direct consequence of \cref{prop:DualBValg} with modified homological degrees accordingly.
	For the notion of absolute algebras over a cooperad, we refer to \cite[Section~3]{RocaiLucio25}. The Koszul dual of the degree \(-1\) operator \(\Delta\) is a generating element of degree \(0\) and arity \(1\) in the Koszul dual cooperad \(\BV_{-1}^{\ac}\). The arguments of Example~4.6 of \emph{loc. cit.} show that its action on an absolute \(\BV_{-1}^{\ac}\)-algebra is equivalent to an action of the ring of power series \(\KK[\![\hbar]\!]\). One concludes with the same arguments as in Section~4.2 of \emph{loc. cit.}.
\end{proof}

So BV-algebras are Koszul dual to BD-algebras, up to a mild change of degree convention.
Therefore the toolkit provided by the operadic calculus \cite[Chapter~11]{LodayVallette12} and \cite{RocaiLucio25} settles a bar-cobar adjunctions between categories of BV-(co)algebras and BD-(co)algebras, conilpotent or absolute, which induce Quillen equivalences \cite{Vallette14}.
	\sloppy
	\printbibliography
\end{document}